\documentclass[11pt]{article}
\usepackage[layoutsize={8.5in,11in},margin=1in]{geometry}\usepackage{graphicx}\usepackage{amsmath}\usepackage{amssymb}\usepackage{amsthm}\usepackage{mathrsfs}\usepackage{bm}\usepackage{pifont}\usepackage[dvipsnames,svgnames,x11names]{xcolor}\usepackage{setspace}\usepackage{pdfcomment}\usepackage{indentfirst}\usepackage{float}\usepackage{braket}\usepackage{enumerate}\usepackage{cleveref}\usepackage{tikz}\tikzset{vertex/.style={circle,draw=black,thick}}\tikzset{arrow/.style={-Stealth,thick}}\usetikzlibrary{decorations.pathmorphing,decorations.text,shapes.geometric,3d,positioning,arrows.meta}
\def\({\left(}\def\){\right)}\def\[{\left[}\def\]{\right]}\def\L{\left\{}\def\R{\right\}}\def\|{\middle\vert}\def\c{\cdot}\def\cs{\cdots}\def\<={\leq}\def\>={\geq}\def\f{\frac}\def\mrm{\mathrm}\def\bM{\begin{pmatrix}}\def\eM{\end{pmatrix}}\def\la{\left|}\def\ra{\right|}\def\eps{\varepsilon}\def\|{\middle\vert}\def\b{\mathbf}\def\msf{\mathsf}\def\wtilde{\widetilde}\def\C{\binom}\def\lf{\left\lfloor}\def\rf{\right\rfloor}\def\lF{\left\lceil}\def\rF{\right\rceil}\def\phantomeq{\mathrel{\hphantom{=}}}
\def\approxcount{\msf{ApproxCount}}\newcommand{\approxcountc}[1]{\approxcount_{#1\text{\rm\sf-counter}}}\newcommand{\approxcountp}[1]{\approxcount_{#1\text{\rm\sf-parallel}}}
\newcommand{\Rmnum}[1]{\uppercase\expandafter{\romannumeral#1}}\newcommand{\tops}[2]{\texorpdfstring{#1}{#2}}
\newtheorem{theorem}{Theorem}[section]\newtheorem{lemma}[theorem]{Lemma}\newtheorem{definition}[theorem]{Definition}\newtheorem{proposition}[theorem]{Proposition}\newtheorem{corollary}[theorem]{Corollary}\newtheorem{openproblem}[theorem]{Open Problem}\newenvironment{Proof}{\begin{proof}~\par}{\end{proof}}
\mathchardef\standardl=\mathcode`l \mathcode`l=\ell \newcommand{\deactivatel}{\mathcode`l=\standardl} \makeatletter \edef\operator@font{\operator@font\noexpand\deactivatel} \makeatother
\ExplSyntaxOn\NewDocumentCommand{\spreadwords}{m}{\tl_set:Nn\l_tmpa_tl{#1}\tl_replace_all:Nnn\l_tmpa_tl{~}{\hspace{\stretch{1}}}\tl_use:N\l_tmpa_tl}\ExplSyntaxOff


\title{Tight Streaming Lower Bounds for Deterministic Approximate Counting}
\author{
	Yichuan Wang\footnote{Institute for Interdisciplinary Information Sciences, Tsinghua University. {\tt yichuan-21@mails.tsinghua.edu.cn}.}
}
\date{2024-06-17}

\begin{document}

\begin{titlepage}
	\maketitle
	\thispagestyle{empty}
	\begin{abstract}\normalsize
		We study the streaming complexity of {\it $k$-counter approximate counting}. In the $k$-counter approximate counting problem, we are given an input string in $[k]^n$, and we are required to approximate the number of each $j$'s ($j\in[k]$) in the string. Typically we require an additive error $\<=\f{n}{3(k-1)}$ for each $j\in[k]$ respectively, and we are mostly interested in the regime $n\gg k$. We prove a lower bound result that the deterministic and worst-case {\it $k$-counter approximate counting} problem requires $\Omega(k\log(n/k))$ bits of space in the streaming model, while no non-trivial lower bounds were known before. In contrast, trivially counting the number of each $j\in[k]$ uses $O(k\log n)$ bits of space. Our main proof technique is analyzing a novel potential function.\par
		Our lower bound for $k$-counter approximate counting also implies the optimality of some other streaming algorithms. For example, we show that the celebrated Misra-Gries algorithm for heavy hitters \cite{MG82} has achieved optimal space usage.
	\end{abstract}
\end{titlepage}

\thispagestyle{empty}
\tableofcontents
\newpage
\setcounter{page}{1}

\section{Introduction}\label{sec.intro}

We study the {\it $k$-counter approximate counting} problem. Given an input string $\b{x}\in[k]^n$, for each $j\in[k]$, define the {\it frequency} of $j$ in $\b{x}$ as the number of $j$'s in $\b{x}$. The $k$-counter approximate counting problem asks us to approximate the frequency of each $j\in[k]$ in the input string. More precisely, define $\approxcountc k[n,\Delta]$ as the problem that asks us to approximate the frequency of each $j\in[k]$ in an input string in $[k]^n$, with an additive error $\<=\Delta$ for each $j\in[k]$ respectively.\par
The $k=2$ case is equivalent to estimating the frequency of $1$ (i.e., the Hamming weight) in a $0,1$-string, we denote this problem by {\it $\{0,1\}$-approximate counting}, and by $\approxcount[n,\Delta]$ if we require the additive error to be $\<=\Delta$. $\{0,1\}$-approximate counting appears as a gadget in many more sophisticated algorithms, and thus studying its complexity is a fundamental question in theoretical computer science. The general $k$-counter approximate counting problem also has its own interest since it captures some scenarios in which we wish to aggregate over a large dataset, and each data point is a general word (instead of a bit).\par
In this work we only focus on the deterministic and worst-case approximate counting.\par
In the streaming model, trivially counting the exact frequency of each element in a string in $[k]^n$ uses $O(k\log n)$ bits of space. In the regime $n\gg k$,\footnote{Throughout this paper, $f\gg g$ means $f\>=\omega(g)$.} previously we did not know whether we can use less space even if we only need to output an approximation of the frequency. Our work shows the trivial algorithm is already optimal in some sense.

\begin{theorem}{\rm (Direct from \Cref{intro.thm.robp-lb})}\label{intro.thm.main}
	For any integers $n,k$ such that $k\>=2$, $n\>=3k$, computing $\approxcountc k\[n,\f{n}{3(k-1)}\]$ requires $\Omega(k\log(n/k))$ bits of space in the streaming model.
\end{theorem}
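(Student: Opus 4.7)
The statement reduces to a width lower bound on deterministic read-once branching programs (ROBPs): any deterministic streaming algorithm using space $s$ on inputs in $[k]^n$ corresponds to an ROBP of length $n$, alphabet $[k]$, and width at most $2^s$, so the task becomes showing that any ROBP correctly computing $\approxcountc k[n,n/(3(k-1))]$ has some layer of width at least $(n/k)^{\Omega(k)}$. The basic combinatorial starting point is that if two prefixes $\b{x}_1,\b{x}_2\in[k]^t$ reach the same state at layer $t$, then their count vectors must agree to within $2\Delta$ in $\ell_\infty$, because extending both by any common suffix produces a common ROBP output that must be $\Delta$-close to both completed count vectors.

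A single-layer pigeonhole argument -- partitioning the count-simplex lattice points into $\ell_\infty$-boxes of diameter $2\Delta$ -- only yields width $\geq 2^{\Omega(k)}$, i.e., space $\Omega(k)$, which is exponentially weaker than the target. To bridge this gap I would introduce a potential function $\Phi$ on the layers of the ROBP satisfying three target properties: (a) $\Phi_0$ is small; (b) $\Phi_n$ must be large for any correct ROBP; (c) the per-layer increment $\Phi_t-\Phi_{t-1}$ is controlled by a suitable function of the layer width $w_t$. Combining (a)--(c) should force some $w_t$ to be $(n/k)^{\Omega(k)}$. A natural candidate assigns to each state $v$ a ``diversity measure'' encoding how much the count vector is still under-determined from the perspective of $v$, perhaps decomposed as $\sum_{j=1}^{k}\phi_j(v)$, where each $\phi_j(v)$ captures the resolution of the $j$-th counter at state $v$.

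The main obstacle is arranging the $k$ counters to contribute \emph{multiplicatively} to the width lower bound rather than merely additively: a straightforward per-counter argument yields only $\Omega(\log(n/k))$ bits in total, whereas the target $\Omega(k\log(n/k))$ requires showing that the ROBP cannot nontrivially amortize state bits across different counters. Standard distributional or entropy-based arguments appear insufficient here: under a uniform input in $[k]^n$ the count vector itself has entropy only $\approx\f{k-1}2\log(n/k)$, while the per-state conditional-entropy bound $(k-1)\log(2\Delta+1)\approx(k-1)\log(n/k)$ already exceeds this, so the mutual information $I(c;\text{state})$ is trivially non-positive. The ``novel potential function'' promised in the abstract must therefore exploit the worst-case combinatorial structure of the ROBP directly rather than any random-input entropy, and constructing it -- in particular, accounting precisely for how transitions in the ROBP simultaneously affect multiple $\phi_j$'s while still permitting a clean per-layer increment bound -- is the technical heart of the proof.
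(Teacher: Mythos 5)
Your setup is right (the reduction to ROBP width, and the observation that two prefixes reaching the same state must have count vectors within $2\Delta$ per coordinate), and your diagnosis that single-layer pigeonhole and entropy/communication arguments cannot work matches the paper's own discussion. But the proposal stops exactly where the proof has to begin: you state the three properties (a)--(c) a potential function should have and then concede that constructing one is ``the technical heart of the proof.'' That construction is the entire content of the theorem, and nothing in your sketch determines it; in particular, a per-state ``diversity measure'' summed over states, as you suggest, is not what works --- the paper's potential is indexed by lattice points of the count simplex, not by states, and that shift is what produces the multiplicative $(n/k)^{k-1}$ behavior you correctly identify as the main obstacle.

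Concretely, the paper labels each state $v$ in layer $t$ with the rectangle $R(v)=[a_1,b_1]\times\cdots\times[a_{k-1},b_{k-1}]$ of minimal/maximal possible counts of each symbol $j\in[k-1]$ among prefixes reaching $v$, and defines, for every lattice point $x=(x_1,\dots,x_{k-1})$ with $x_1+\cdots+x_{k-1}\le t$, the quantity $\phi_t(x)$ as the maximum over rectangles of layer $t$ containing $x$ of the (truncated) $\ell_1$-distance from $x$ to the rectangle's upper corner, and $\Phi_t=\sum_x\phi_t(x)$. The growth lemma is that $\phi_{t+1}(x)\ge\phi_t(x)$ always, and $\phi_{t+1}(x)\ge\phi_t(x)+1$ unless $x$ is exactly the lower corner of some rectangle of layer $t$ --- at most $w$ exceptional points out of $\binom{t+k-1}{k-1}$ --- so $\Phi_{t+1}-\Phi_t\ge\binom{t+k-1}{k-1}-w$; meanwhile correctness forces every side of every last-layer rectangle to be at most $2\Delta$, so $\phi_n(x)\le(k-1)\lfloor 2\Delta\rfloor$ and $\Phi_n$ is correspondingly small, and comparing the two bounds gives $w\ge\Omega(n/k)^{k-1}$, hence $\Omega(k\log(n/k))$ bits. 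Without this (or an equivalent) definition and the accompanying increment argument, your proposal is a statement of intent rather than a proof: properties (a)--(c) alone do not certify that any function satisfying them exists, and the amortization-across-counters difficulty you flag is resolved only by the specific choice of summing over all $\binom{t+k-1}{k-1}$ lattice points so that each layer with few states leaves almost all points' potentials growing simultaneously.
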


\subsection{Implications: Other Streaming Lower Bounds}

Since $k$-counter approximate counting is a fundamental problem, our lower bound also implies streaming lower bounds for some other problems.

\subsubsection*{Lower Bound for Heavy Hitters}

The seminal paper \cite{MG82} presented a deterministic streaming algorithm (the Misra-Gries algorithm) that solves the following problem: Given an input string $\b{x}\in[U]^n$. Let $k\in\mathbb{Z}^+$ be some parameter such that $\min\{n,U\}\gg k\>=2$. Their algorithm outputs a list of $k$ elements $u_1,\cs,u_k\in[U]$ such that each element in $[U]$ that appears $\>=n/k$ times in $\b{x}$ is contained in $\{u_1,\cs,u_k\}$. Moreover, let $f_i$ be the frequency of $u_i$ in $\b{x}$, then their algorithm also outputs estimates $\wtilde{f_1},\cs,\wtilde{f_k}$ of $f_1,\cs,f_k$ such that $f_i-n/k\<=\wtilde{f_i}\<=f_i$. Their algorithm uses $O(k(\log(n/k)+\log(U/k)))$ bits of space.\par
The problem they solved was later named the {\it heavy hitters} problem. It is easy to observe that the $\Omega(k\log(U/k))$ bits of space usage is necessary\footnote{Since if exactly $k$ elements appear $n/k$ times each, then the algorithm need to have $\>=\C{U}{k}\>=2^{\Omega(k\log(U/k))}$ many possible outputs.}. However, previously we did not know whether the $\Omega(k\log(n/k))$ bits of space usage is necessary in the regime $n\gg U$.\par
Based on our lower bounds for approximate counting, we show the $\Omega(k\log(n/k))$ bits of space usage is necessary.

\begin{theorem}{\rm (See also \Cref{thm.lb-heavyhitters})}\label{intro.thm.lb-heavyhitters}
	Computing the heavy hitters problem described above\footnote{Here we are required to output both the list $\{u_1,\cs,u_k\}$ and the estimates $\wtilde{f_1},\cs,\wtilde{f_k}$. If we are only required to output the list $\{u_1,\cs,u_k\}$, then the lower bound is still open. See \Cref{sec.imply} for more discussions.} requires $\Omega\(k(\log(n/k)+\log(U/k))\)$ bits of space in the streaming model.
\end{theorem}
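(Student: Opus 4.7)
The plan is to prove the two summands $\Omega(k\log(U/k))$ and $\Omega(k\log(n/k))$ separately and add them.

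The first summand is the standard counting argument from the footnote: restrict to inputs $\b x\in[U]^n$ in which exactly $k$ symbols of $[U]$ each appear $n/k$ times. The algorithm must output the size-$k$ support set, so its terminal memory state must take at least $\binom{U}{k}$ distinct values, forcing $\Omega(\log\binom{U}{k})=\Omega(k\log(U/k))$ bits.

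For the second summand I would reduce approximate counting to heavy hitters with a slightly smaller alphabet. Set $k_0:=\lfloor k/3\rfloor+1$, so $k_0=\Theta(k)$ and $n/k\le n/(3(k_0-1))$. Given any streaming algorithm $A$ that solves heavy hitters on $[U]^n$ with parameter $k$ in $S$ bits of space, I would obtain an algorithm for $\approxcountc{k_0}\[n,n/(3(k_0-1))\]$ as follows: given input $\b x\in[k_0]^n$, view it as an element of $[U]^n$ via the inclusion $[k_0]\hookrightarrow[U]$ (valid since $U\ge k\ge k_0$ in the regime $\min\{n,U\}\gg k$), run $A$ on $\b x$ to obtain a list $u_1,\dots,u_k$ together with estimates $\wtilde{f_1},\dots,\wtilde{f_k}$, and for each $j\in[k_0]$ output $\wtilde{f_i}$ whenever $j=u_i$ and $0$ otherwise. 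Correctness follows directly from the Misra--Gries-style guarantees: if $j$ is listed then $|\wtilde{f_i}-f_j|\le n/k\le n/(3(k_0-1))$; if $j$ is omitted, then necessarily $f_j<n/k\le n/(3(k_0-1))$, so the answer $0$ lies within the required tolerance. Applying \Cref{intro.thm.main} with parameter $k_0$ to this reduction gives $S=\Omega(k_0\log(n/k_0))=\Omega(k\log(n/k))$.

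The reduction itself is short and, modulo the choice of the constant $3$, routine; the only subtlety to verify is that the tolerance $n/k$ inherited from heavy hitters is tight enough to imply the tolerance $n/(3(k_0-1))$ required by $\approxcountc{k_0}$ and that the arithmetic relating $k$ and $k_0$ preserves the $\log(n/k)$ factor. I anticipate no genuine obstacle here; the real work of the paper lies in establishing \Cref{intro.thm.main}. It is worth noting that the reduction critically uses both the candidate list \emph{and} the numerical estimates $\wtilde{f_i}$, which matches the footnote's caveat that the lower bound for the list-only variant of heavy hitters remains open.
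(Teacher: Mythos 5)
Your proof is correct and follows essentially the same two-pronged approach as the paper's proof of \Cref{thm.lb-heavyhitters}: a support-set counting argument for the $\Omega(k\log(U/k))$ summand, and a reduction from $k$-counter approximate counting (invoking \Cref{intro.thm.main} / \Cref{cor.k-counter-lb.standard}) for the $\Omega(k\log(n/k))$ summand. The only cosmetic differences are the choice of $k_0=\lfloor k/3\rfloor+1$ versus the paper's $\lceil 2k/3\rceil$ and that the paper passes through $\approxcountc U[n,n/(2k)]$ as an intermediate step before restricting the alphabet; also, for the first summand you should take $\lfloor n/\lceil n/k\rceil\rfloor$ (rather than exactly $k$) heavy symbols to handle divisibility, as the paper does, but this does not affect the asymptotics.
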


\subsubsection*{Lower Bound for Quantile Sketch}

The {\it quantile sketch} problem is another interesting problem in streaming algorithms. We first define the {\it rank} of an element $u$ in a list $L$ as the number of elements in $L$ that are at most $u$. In the quantile sketch problem, we are given an input string $(x_1,\cs,x_n)\in[U]^n$. For any $1\<=t\<=n$, after reading $x_1,\cs,x_t$, we are required to answer some queries that ask us to approximate the rank of some element $u$ in $x_1,\cs,x_t$, with an additive error $\<=\eps t$. Here $\eps\in(0,1)$ is some parameter. We are mostly interested in the regime $\min\{n,U\}\gg1/\eps\gg 1$.\par
In a recent breakthrough, \cite{GSW24} proposed a deterministic streaming algorithm for the quantile sketch problem that uses $O\(\eps^{-1}(\log(\eps U)+\log(\eps n))\)$ bits of space. They also proved a lower bound that quantile sketch requires $\Omega\(\eps^{-1}(\log(\eps U)+\log(\eps n))\)$ bits of space, under a conjecture on the hardness of approximate counting, see \Cref{sec.imply} for more details. Our lower bounds for approximate counting resolve their conjecture, and therefore their algorithm is optimal.

\begin{theorem}{\rm (See also \Cref{thm.lb-quantile})}\label{intro.thm.lb-quantile}
	Computing the quantile sketch problem described above requires $\Omega\(\eps^{-1}(\log(\eps U)+\log(\eps n))\)$ bits of space in the streaming model.
\end{theorem}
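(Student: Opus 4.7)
The plan is to derive \Cref{intro.thm.lb-quantile} directly from our heavy hitters lower bound \Cref{intro.thm.lb-heavyhitters} via a black-box reduction: an $(n,U,\eps)$-quantile sketch with $\eps=\Theta(1/k)$ should be usable as a heavy-hitters algorithm with parameter $\Theta(k)$, and thus must inherit the $\Omega(k(\log(n/k)+\log(U/k)))$ space lower bound. The advantage of going through heavy hitters rather than \Cref{intro.thm.main} is that both the $\log(\eps n)$ and $\log(\eps U)$ contributions are already packaged into \Cref{intro.thm.lb-heavyhitters}, so a single reduction covers both halves of the quantile-sketch lower bound at once.

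The reduction I have in mind is straightforward: given a heavy-hitters instance $\b{x}\in[U]^n$, I would feed it unchanged into a quantile-sketch algorithm $\mathcal A$ with $\eps=\Theta(1/k)$. At end-of-stream, for each $u\in[U]$, I would form the frequency estimate $\wtilde f(u):=\wtilde{\mrm{rank}}(u)-\wtilde{\mrm{rank}}(u-1)$, which has additive error $\<= 2\eps n=O(n/k)$ since each rank estimate carries error $\<=\eps n$. Thresholding these estimates near $n/k$ produces a list of size $O(k)$ that contains every heavy element, and after a constant downward shift the estimates satisfy the heavy-hitters accuracy requirement $f_i-n/k\<=\wtilde f_i\<= f_i$. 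Hence the memory of $\mathcal A$ already realizes a heavy-hitters algorithm with parameter $\Theta(k)$, and \Cref{intro.thm.lb-heavyhitters} then yields the bound $\Omega(k(\log(n/k)+\log(U/k)))=\Omega(\eps^{-1}(\log(\eps n)+\log(\eps U)))$.

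The main obstacle is constant chasing. I have to pick a specific $\eps=\Theta(1/k)$ (e.g., $\eps=1/(8k)$) small enough that the threshold-then-shift recipe gives a valid heavy-hitters output, meaning both the list-containment condition and the accuracy condition must survive the $2\eps n$ rank error; this in turn determines the implicit constant relating $k$ and $1/\eps$. One also has to verify that a list of size $O(k)$ rather than exactly $k$ still triggers \Cref{intro.thm.lb-heavyhitters} at parameter $\Theta(k)$, which is immediate since the heavy-hitters bound is asymptotic in $k$. The hypothesis of \Cref{intro.thm.lb-heavyhitters} that the algorithm outputs both the list and the estimates is satisfied because the reduction produces both, so no additional assumption on $\mathcal A$ is required.
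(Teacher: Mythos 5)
Your reduction runs in the right direction (solve heavy hitters using a quantile sketch, then invoke the heavy-hitters lower bound), and it is genuinely different from the paper, which instead cites the conditional theorem of \cite{GSW24} (\Cref{thm.lb-quantile-cond}) and discharges its hypothesis with the approximate-counting lower bound (\Cref{cor.k-counter-lb.standard}). But there is a real gap at the step where you claim the extraction ``triggers \Cref{intro.thm.lb-heavyhitters} at parameter $\Theta(k)$.'' With $\eps=\Theta(1/k)$ the estimates $\wtilde f(u)$ carry additive error $2\eps n=\Theta(n/k)$, and then no threshold/top-$k'$ rule provably meets both requirements of \Cref{def.heavyhitters} for any $k'=\Theta(k)$: to contain every element of frequency $\>= n/k'$ you must keep every $u$ with $\wtilde f(u)\>= n/k'-2\eps n$, and such elements may have true frequency as low as $n/k'-4\eps n$, of which there can be more than $k'$. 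Concretely, with $\eps=1/(8k')$, an input with one element of frequency exactly $n/k'$ and $k'$ elements of frequency slightly below it can have the estimates inverted, so a size-$k'$ list chosen by estimated frequency can miss the unique heavy element. Your fallback --- a list of size $2k$ containing all elements of frequency $\>= n/k$ --- is not a valid output of $\msf{HeavyHitters}[n,U,k']$ for any $k'$: for $k'=k$ the list is too long, and for $k'=2k$ the required containment threshold is $n/(2k)$, which you do not guarantee. That relaxed task is formally weaker (any $\msf{HeavyHitters}[n,U,k]$ algorithm solves it after padding the list), so the lower bound of \Cref{intro.thm.lb-heavyhitters} does not transfer to it ``because the bound is asymptotic in $k$''; you would have to prove a lower bound for the relaxed variant separately.

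The two natural repairs both change the argument. Shrinking to $\eps=O(1/k^2)$ makes the top-$k$ extraction sound, but then the inherited space bound is only $\Omega(\sqrt{1/\eps}\,(\log(\eps n)+\log(\eps U)))$, which is too weak. Alternatively, reproving the heavy-hitters bound for the relaxed variant does work, but its $\log(\eps n)$ half is exactly a reduction to $\approxcountc{\Theta(1/\eps)}[n,\Theta(\eps n)]$ --- at which point you may as well drop heavy hitters and reduce approximate counting to quantile sketch directly (estimate each frequency as a difference of two rank queries at time $n$), handling the $\log(\eps U)$ half by a separate counting argument over outputs; this packaged reduction is essentially what the paper imports from \cite{GSW24} before applying \Cref{cor.k-counter-lb.standard}.
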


We mention that the lower bounds for heavy hitters and quantile sketch are only against deterministic and worst-case streaming algorithms.

\subsection{Model: Read-once Branching Programs (ROBP)}

We use the read-once branching program (ROBP) model, which is the standard model to study streaming lower bounds. We give an intuitive description of ROBP here. (See \Cref{def.ROBP} for a formal definition.) An ROBP is a layered multigraph. The vertices in each layer characterize all possible memory states of a streaming program. Each vertex (except those in the last layer) has outgoing edges labeled with possible input words, which indicates the next state after reading an input word.\par
ROBP lower bounds imply streaming lower bounds. Define the width of an ROBP as the maximal number of vertices in any layer. It is easy to observe that if a problem requires width $\>=w$ in ROBP, then it requires $\Omega(\log w)$ bits of space in the streaming model. We need to mention that by ROBP we do not care about the computational resources needed to compute the next state from the previous state and the input word. However, the ROBP model is still sufficient for us to establish the lower bound results.

\subsection{Tighter Bounds on the Width of the ROBP}

Actually, our proof techniques can give a much tighter bound on the width of the ROBP.

\begin{theorem}{\rm (See also \Cref{cor.k-counter-lb.standard})}\label{intro.thm.robp-lb}
	For any integers $n,k$ such that $k\>=2$, $n\>=3k$, computing $\approxcountc k\[n,\f{n}{3(k-1)}\]$ requires $\Omega(n/k)^{k-1}$ width in the ROBP model.
\end{theorem}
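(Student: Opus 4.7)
My plan is to prove the width lower bound by a potential function argument on the ROBP. I begin with the standard fooling observation: if two prefixes $x,y\in[k]^m$ reach the same vertex $v$ at layer $m$, then for every suffix $z\in[k]^{n-m}$ the program outputs the same vector on $xz$ and $yz$, and this common output must be within additive error $\Delta:=n/(3(k-1))$ in $\ell_\infty$ of both true frequencies $f(xz)$ and $f(yz)$. Subtracting and using $f(xz)-f(yz)=f(x)-f(y)$ yields $\Vert f(x)-f(y)\Vert_\infty\leq 2\Delta$, so for each vertex $v$ the set $F_v:=\{f(x):x\in P_v\}$ of realized frequency vectors has all coordinate diameters bounded by $2\Delta$.

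Next I would design a potential function $\Phi$ on ROBP vertices whose layer total $\Phi_m:=\sum_{v\in V_m}\Phi(v)$ tracks the residual approximation capacity available at layer $m$. The design goals are: (i) the initial value is large, $\Phi_0=\Theta((n/k)^{k-1})$; (ii) a uniform per-vertex cap $\Phi(v)\leq O(1)$; and (iii) layerwise monotonicity $\Phi_{m+1}\geq\Phi_m$. From (i)--(iii) we would get $|V_m|\geq\Phi_m/\max_v\Phi(v)\geq\Omega((n/k)^{k-1})$ at every layer, yielding the theorem. Intuitively, $\Phi$ should capture the ``mod-counter resolution'' the ROBP must maintain in each of the $k-1$ free simplex coordinates at resolution $n/k$, matching the Misra--Gries upper bound up to a factor of $n/k$. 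A natural first candidate using coordinatewise slacks $s_{v,j}:=2\Delta+1-D_{v,j}$ (where $D_{v,j}$ is the diameter of $F_v$ in coordinate $j$) is the product $\Phi(v):=\prod_{j=1}^{k-1}s_{v,j}$, with the product over only $k-1$ coordinates encoding the simplex constraint $\sum_j f_j=m$. This candidate has the right initial value and plausibly satisfies monotonicity, but its per-vertex cap equals $\Phi_0$ and so violates (ii); the paper's novel potential must add a nontrivial normalization turning this into a density.

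The technical heart of the argument is to verify the monotonicity condition (iii). Each ROBP transition either splits a vertex (one predecessor, $k$ distinct successors via the $k$ outgoing edges, and $F_{v'}=F_v+e_j$ preserving all slacks) or merges several predecessors $(v_1,j_1),(v_2,j_2),\ldots$ into a common successor $u$ with $F_u\supseteq\bigcup_i(F_{v_i}+e_{j_i})$. Merges are the delicate case: the post-merge coordinate diameters can grow and slacks shrink, so $\Phi(u)$ can drop below any single predecessor's contribution. The correct potential must satisfy a coordinatewise rearrangement/convexity inequality of the form $\sum_i\Phi(v_i)\leq\Phi(u)$ under merges, accounting for the shifts $e_{j_i}$ at merging edges and for the simplex constraint.

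The principal obstacle is the simultaneous satisfaction of conditions (i)--(iii). Naive choices such as $\sum_v s_{v,j}$, the uncorrected $\prod_j s_{v,j}$, or indicator-style potentials each fail at least one of the conditions and yield at best the trivial $2^{\Theta(k)}$ fooling-set bound. Designing a potential with large initial value, bounded per-vertex contribution, and layerwise monotonicity --- all while respecting the simplex constraint and the shift structure of ROBP transitions --- is precisely where the paper's novel potential function must enter, and is where I expect the main creative content of the proof to reside.
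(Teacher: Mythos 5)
Your opening fooling observation is correct, but the proposal stops exactly where the proof has to begin: the potential function is never constructed, and the framework (i)--(iii) you set up for it cannot be satisfied by any choice of $\Phi$. Layer $0$ of an ROBP consists of a single start vertex, so a layer total $\Phi_m=\sum_{v\in V_m}\Phi(v)$ with a uniform per-vertex cap $\Phi(v)\leq O(1)$ forces $\Phi_0\leq O(1)$, directly contradicting requirement (i) that $\Phi_0=\Theta((n/k)^{k-1})$; similarly, superadditivity under merges, $\sum_i\Phi(v_i)\leq\Phi(u)$, is incompatible with a per-vertex cap once many predecessors merge into one successor. So the missing ``normalization'' you allude to is not a finishing touch on your candidate $\prod_j s_{v,j}$ --- it is the actual content of the proof, and the paper obtains it by changing the shape of the argument rather than by repairing a vertex-indexed density.

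The paper's potential is a sum over lattice points of the simplex, not over vertices. Each vertex $v$ in layer $t$ is labeled with the rectangle $R(v)$ recording, for each $j\in[k-1]$, the minimum and maximum number of $j$'s over prefixes reaching $v$ (your coordinatewise $2\Delta$ bound is used only at the last layer). For each point $x=(x_1,\dots,x_{k-1})$ with nonnegative coordinates summing to at most $t$, one sets $\phi_t(x)$ to be the largest $\ell_1$-distance from $x$ to the top corner of a layer-$t$ rectangle containing $x$ (truncated at the hyperplane where the coordinates sum to $t$), and $\Phi_t=\sum_x\phi_t(x)$. The shift property of the labels (each $R$ and each $R+e_j$ from layer $t$ is contained in some layer-$(t+1)$ rectangle) gives $\phi_{t+1}(x)\geq\phi_t(x)$ always, and $\phi_{t+1}(x)\geq\phi_t(x)+1$ unless $x$ is the bottom corner of one of the at most $w$ rectangles, so $\Phi_{t+1}-\Phi_t\geq\binom{t+k-1}{k-1}-w$. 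Correctness forces every side of every last-layer rectangle to be at most $2\Delta$, hence $\phi_n(x)\leq(k-1)\lfloor 2\Delta\rfloor$, which caps $\Phi_n$ from above; comparing the growth lower bound with this cap yields $w\geq\Omega(n/k)^{k-1}$. In short, the mechanism is ``potential starts at $0$, must grow by about $\binom{t+k-1}{k-1}-w$ per layer, but is capped at the end by the accuracy guarantee,'' with the width entering only through the count of exceptional bottom-corner points per layer --- a genuinely different scheme from the one you sketch, and the step you left open is precisely where it diverges.
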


In contrast, a weaker $(n/k)^{\Omega(k)}$ lower bound on the width is sufficient to imply \Cref{intro.thm.main}. Also note that the trivial algorithm for exact counting uses $\C{n+k-1}{k-1}\<=O(n/k)^{k-1}$ width.\par
We need to mention that, even for the $k=2$ case (which is equivalent to $\{0,1\}$-approximate counting), the $\Omega(n)$ lower bound in \Cref{intro.thm.robp-lb} was not known before. We did not even know whether $\approxcount[n,n/3]$ can be computed by some width-$10$ ROBP. Previously we only knew that $\{0,1\}$-approximate counting with a constant multiplicative error requires $n^{\Omega(1)}$ width, which was proved in \cite{ABJ+22}.\par
We also discover some non-trivial ROBP algorithms for approximate counting. Let's take $\{0,1\}$-approximate counting for an example. At first glance, the optimal algorithm we can think of is to count the exact number of $1$'s in the first $w-1$ input bits and ignore the other bits. This achieves an additive error $\<=(n-w+1)/2$ and uses width-$w$. However, it turns out that there exists a simple algorithm that achieves an additive error $\<=n/2-\Omega(\sqrt{n})$ and uses width-$3$,\footnote{See \Cref{thm.algo.small-w} for details.} which is slightly better than the trivial one. More surprisingly, these algorithms almost match our lower bounds in some regimes in a very tight sense. 

\begin{theorem}{\rm (See also \Cref{cor.tight-small-w})}\label{intro.thm.tight-small-w}
	For any integers $n\>=1$ and $3\<=w\<=n/10$, let $\Delta(n,w)$ be the minimal value of $\Delta$ such that $\approxcount[n,\Delta]$ can be computed by a width-$w$ ROBP. Then $\Delta(n,w)\in n/2-\Theta(\sqrt{nw})$.
\end{theorem}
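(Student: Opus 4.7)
The theorem gives matching upper and lower bounds $\Delta(n,w) = n/2 - \Theta(\sqrt{nw})$, which I would prove separately. For the upper bound $\Delta(n,w) \<= n/2 - \Omega(\sqrt{nw})$, I would construct a width-$w$ ROBP generalizing the width-$3$ algorithm referenced just before the theorem. The plan is to partition the $n$-bit input into $\Theta(w)$ blocks of length $\Theta(n/w)$, arrange the $w$ states as a quantized ``estimate pointer,'' and use layer-dependent transitions so that each block shifts the pointer by at most one unit. Assigning outputs $y_s$ to the states spread across $[0,n]$ with shifts of magnitude $\Theta(\sqrt{nw})$ at the endpoints, a direct case analysis shows worst-case additive error at most $n/2 - \Omega(\sqrt{nw})$: each block contributes $\Theta(\sqrt{n/w})$ of ``resolution'' in the worst direction, aggregating to $\Theta(\sqrt{nw})$ across $\Theta(w)$ blocks.

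For the lower bound $\Delta(n,w) \>= n/2 - O(\sqrt{nw})$, write $\delta := n/2 - \Delta$; the goal is $\delta = O(\sqrt{nw})$. Fix a width-$w$ ROBP $P$ computing $\approxcount[n,\Delta]$, and for each reachable state $s$ at layer $t$ define the weight-set $W_t(s) := \{|\pi|_1 : \pi \in \{0,1\}^t \text{ reaches } s\}$. A standard suffix-swap argument forces $\max W_n(s) - \min W_n(s) \<= 2\Delta = n - 2\delta$. My plan is to adapt the potential-function machinery underlying \Cref{intro.thm.robp-lb}, quantitatively sharpened for $k = 2$. Define a potential $\Phi_t$ tracking a second-moment quantity summed over layer-$t$ states (e.g., the total within-state variance of prefix weights), and track its evolution across layers using the fact that each layer-$(t+1)$ state is the merger of at most two layer-$t$ predecessors via the bit-$0$ and bit-$1$ transitions. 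Combining a per-step convexity bound on $\Phi_{t+1} - \Phi_t$ with extremal bounds at layers $0$ and $n$---crucially using the suffix-swap constraint $\Phi_n \<= \Delta^2 \cdot 2^n$---yields, after rearrangement, $\delta = O(\sqrt{nw})$.

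The main obstacle is identifying the right potential so that the quadratic tradeoff $\delta \sim \sqrt{nw}$ emerges rather than the linear $\delta \sim n/w$ given by naive cardinality bounds. A variance-type potential matches the Binomial scale $\Var(|X|_1) = n/4$ of the weight distribution to the $w$-state capacity constraint, but the direct law-of-total-variance argument alone is insufficient in the regime of large $\Delta$ that we care about. One really needs to track $\Phi$ through all $n$ intermediate layers and exploit the ROBP's read-once width-$w$ structure at every step; the per-step lemma---bounding $\Phi_{t+1} - \Phi_t$ via a Jensen-type convexity argument---is where I expect the most delicate work. Once that lemma is in hand, the remaining telescoping and comparison to the ceiling $\Phi_n \<= \Delta^2 \cdot 2^n$ should be routine.
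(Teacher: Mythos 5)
Both directions of your proof have genuine gaps.

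\emph{Upper bound.} Your construction --- $\Theta(w)$ blocks of length $\Theta(n/w)$ with all $w$ states serving as a unit-step ``estimate pointer'' --- cannot be implemented in width $w$: if every state is a pointer position, there are no states left to measure anything inside a block, so the decision to shift the pointer cannot depend on the block's contents. More fundamentally, any scheme that aggregates per-block estimates \emph{additively} needs $\Omega(n)$ states just to hold the running sum. The paper's construction (\Cref{thm.algo.small-w}) is essentially the transpose of yours and avoids this: it uses $l=\lf\sqrt{n/w}\rf$ blocks of length about $\sqrt{nw}$, devotes all $w$ states to threshold-counting ones \emph{within the current block} (up to $w-2$, plus a reject state), and outputs the single bit $f=\bigwedge_j(\text{block }j\text{ has}\>=w-2\text{ ones})$, a Tribes-like function. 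That one bit already certifies either ``weight $\>=\Omega(\sqrt{nw})$'' or ``weight $\<=n-\Omega(\sqrt{nw})$'', which is exactly what error $n/2-\Omega(\sqrt{nw})$ requires. Your claimed error analysis (``resolution $\Theta(\sqrt{n/w})$ per block, aggregating to $\Theta(\sqrt{nw})$'') presupposes per-block information that width $w$ cannot store.

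\emph{Lower bound.} In the paper this direction is just \Cref{cor.k-counter-lb.small-w} with $k=2$, whose potential is an \emph{extremal} quantity: $\phi_t(i)$ is the largest distance from the lattice point $i\in\{0,\dots,t\}$ to the right endpoint of an interval $[\min W_t(s),\max W_t(s)]$ containing $i$, summed over the $t+1$ lattice points; the width enters because at most $w$ points per layer are exempt from the $+1$ increment (one left endpoint per state). Your variance-type potential over the $2^n$ inputs is not just unproven at the key step --- it appears quantitatively incapable of reaching $n/2-O(\sqrt{nw})$. For instance, the width-$2$ program that remembers only $x_1$ has within-state variance about $(n-1)/4$ per input, so your ceiling $\Phi_n\<=\Delta^2\c2^n$ can never force more than $\Delta\gtrsim\sqrt{n}/2$, far below the target $n/2-O(\sqrt{nw})$; the bound you need certifies that some final state contains two inputs with weights $n-O(\sqrt{nw})$ apart, an extremal rather than second-moment event. (Your suffix-swap observation is the right starting point; you should build the potential directly on those intervals rather than on variances.) Finally, your claim that each layer-$(t+1)$ state merges at most two layer-$t$ predecessors is false: in the ROBP multigraph a vertex can have up to $2w$ incoming edges (see \Cref{fig.dp}).
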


\begin{theorem}{\rm (See also \Cref{cor.tight-small-err})}\label{intro.thm.tight-small-err}
	For any integers $k\>=2$, $n\>=10k$ and real number $10\<=\Delta\<=n/(10k^2)$, let $w(k,n,\Delta)$ be the minimal value of $w$ such that $\approxcountc k[n,\Delta]$ can be computed by a width-$w$ ROBP. Then $w(k,n,\Delta)\in\C{n+k-1-\Theta(\sqrt{n\Delta})}{k-1}$.
\end{theorem}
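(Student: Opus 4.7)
The statement consists of matching upper and lower bounds on $w(k,n,\Delta)$, both equal to $\binom{n+k-1-\Theta(\sqrt{n\Delta})}{k-1}$, and I would establish the two directions separately, building on machinery already developed earlier in the paper.

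For the upper bound, I plan to designate a ``peripheral'' segment of length $m = \Theta(\sqrt{n\Delta})$ in the stream (say, the last $m$ symbols) and handle it with a state-efficient approximate counter, while counting the remaining $n-m$ symbols essentially exactly. Naively discarding these $m$ symbols yields per-class error $\Omega(m) = \Omega(\sqrt{n\Delta}) \gg \Delta$, which is too large, so a smarter scheme is needed. The width-$3$ construction underlying Theorem \ref{intro.thm.tight-small-w} achieves error $\Theta(\sqrt{m})$ on length-$m$ streams in the $k=2$ case; generalizing this to $k$ symbols and balancing $\sqrt{m} \lesssim \Delta$ via $m = \Theta(\sqrt{n\Delta})$ gives the required precision on the peripheral segment. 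The auxiliary state for the peripheral counter must then be folded into the $\binom{n-m+k-1}{k-1}$-size composition state space of the core segment without inflating the total width; the assumption $\Delta \leq n/(10k^2)$ ensures $m \ll n/k$, so this folding is feasible.

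For the lower bound, I would re-run the potential function argument behind Theorem \ref{intro.thm.robp-lb} while keeping $\Delta$ as an explicit free parameter, rather than specializing to $\Delta = n/(3(k-1))$ as in the main theorem. The potential function contracts at a controlled rate per layer; tracking $\Delta$ quantitatively through the contraction analysis should yield the bound $w \geq \binom{n+k-1-C\sqrt{n\Delta}}{k-1}$ for some constant $C$. The key quantitative input is that the ``effective loss'' in composition count caused by an additive $\Delta$ error budget scales as $\sqrt{n\Delta}$ and not as $\Delta$, because the potential is sensitive to the variance of partial counts across the length-$n$ stream rather than to pointwise proximity on the simplex of $k$-compositions of $n$. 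The main obstacle in both directions is pinning down this $\sqrt{n\Delta}$ scale: on the upper-bound side, by constructing an explicit encoding that absorbs the peripheral counter inside the composition state space; on the lower-bound side, by a fluctuation-sensitive refinement of the potential function's contraction that matches the algorithm's constant up to the hidden $\Theta(\cdot)$ factors.
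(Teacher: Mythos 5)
Your lower-bound half is broadly aligned with the paper, but the upper-bound half has a genuine gap. You invoke the width-$3$ construction behind \Cref{thm.algo.small-w} as achieving additive error $\Theta(\sqrt{m})$ on a length-$m$ stream; what it actually achieves is error $m/2-\Theta(\sqrt{m})$, i.e., only a $\Theta(\sqrt{m})$ \emph{improvement over the trivial} error $m/2$. So running it on a peripheral block of length $m=\Theta(\sqrt{n\Delta})$ incurs error $\approx m/2=\Theta(\sqrt{n\Delta})\gg\Delta$, exactly the loss you set out to avoid (and even the misread bound $\sqrt{m}=(n\Delta)^{1/4}$ exceeds $\Delta$ whenever $\Delta\<= n^{1/3}$). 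Moreover, no "independent peripheral counter plus folding" scheme can work here: the target width $\C{n+k-1-\Theta(\sqrt{n\Delta})}{k-1}$ exceeds the cost $\C{n-m+k-1}{k-1}$ of exactly counting the core by only an $O(1)$ factor in this parameter range (since $(k-1)\sqrt{\Delta/n}=O(1)$ under $\Delta\<= n/(10k^2)$), whereas any counter with error $\<=\Delta$ on a length-$m$ block needs at least on the order of $m/\Delta=\sqrt{n/\Delta}\gg1$ states by pigeonhole (and far more by \Cref{cor.k-counter-lb.small-err}). The saving must therefore come from merging states of the combined prefix/suffix object, and this is what the paper's construction (\Cref{thm.algo.small-err}) does: count the first $n-m$ symbols exactly, round the counts $(a_1,\cs,a_k)$ to $(b_1,\cs,b_k)$ with $b_j\approx\f{l-1}{l}a_j$ and \emph{fixed} sum $\lf\f{l-1}{l}(n-m)\rf$ where $l\approx\sqrt{n/\Delta}$, then count the last $m\approx\sqrt{n\Delta}$ symbols exactly on top of the $b_j$'s and output $\f{l}{l-1}(b_j+c_j)$; the rounding and rescaling cost error $O(1)+\f m{l-1}\<=\Delta$, and the reduced sum keeps every layer at width $\C{n+k-1-\Omega(\sqrt{n\Delta})}{k-1}$. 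Your proposal contains no mechanism of this kind, so the upper bound is not established.

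On the lower bound, your plan (re-run the potential argument with $\Delta$ explicit) is essentially the paper's route: \Cref{thm.k-counter-lb} already carries $\Delta$ as a parameter, and \Cref{cor.k-counter-lb.small-err} extracts $w\>=\C{n+k-1-2(k-1)\Delta-O(\sqrt{n\Delta})}{k-1}$ from it by a second-order expansion of the ratio $\C{n-2(k-1)\Delta+k-1}{k}/\C{m+k}{k}$ in the auxiliary parameter $m$ (the largest $m$ with $\C{m+k-1}{k-1}\<=w$); the $\sqrt{n\Delta}$ scale falls out of that algebra, not out of a "variance of partial counts" heuristic, which as stated is not an argument. Finally, note that the hypothesis $\Delta\<= n/(10k^2)$ is used to absorb the $2(k-1)\Delta$ term into $O(\sqrt{n\Delta})$ so that the lower bound matches the algorithm, not to make any folding feasible.
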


The existence of the non-trivial algorithms also shows proving the lower bounds is hard. Without these non-trivial algorithms, we may hope to prove that the simple and trivial algorithm (which achieves $(n-w+1)/2$ additive error) is already optimal. However the non-trivial algorithms have ruled out this hope.\par
We also consider the {\it$k$-parallel approximate counting} problem, in which we are required to solve $k$ instances of $\{0,1\}$-approximate counting that arrive in parallel (i.e., a {\it direct sum}). More precisely, given an input string $(x_1,\cs,x_n)\in\(\{0,1\}^k\)^n$, the problem $\approxcountp k[n,\Delta]$ asks us to approximate the number of $1$'s in $(x_1)_j,\cs,(x_n)_j$ with an additive error $\<=\Delta$, for each $j\in[k]$ respectively. We give the following lower bound.

\begin{theorem}{\rm (See also \Cref{thm.k-par-lb})}\label{intro.thm.k-par}
	For any integers $n,k$ such that $k\>=1$, $n\>=3k$, computing $\approxcountp k[n,n/3]$ requires $n^{\Omega(k)}$ width in the ROBP model.
\end{theorem}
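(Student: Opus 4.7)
The plan is to extend the potential-function technique used for Theorem~\ref{intro.thm.robp-lb} to the direct-sum (parallel) setting, leveraging the $k=1$ base case provided by Theorem~\ref{intro.thm.robp-lb} applied with $k=2$, which yields $\Omega(n)$ width for a single $\{0,1\}$-approximate counting instance. I would aim to show that the $k$ coordinates each contribute an independent multiplicative factor of $\Omega(n)$, yielding the $n^{\Omega(k)}$ bound.

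First I would rule out the simpler routes. A direct reduction from $\approxcountc k$ fails: mapping a $k$-counter symbol $j$ to the basis vector $e_j\in\{0,1\}^k$ turns any $k$-parallel ROBP into a $k$-counter ROBP with error $n/3$, but Theorem~\ref{intro.thm.robp-lb} requires the stronger error $n/(3(k-1))$ to yield a comparable bound, and no repetition/scaling trick tightens additive error in this direction. A naive pairwise-indistinguishability argument also fails, since two prefixes can share a state whenever their per-coordinate counts differ by at most $2n/3$, which admits $O(1)^k$ merging at any single layer. Hence the proof must exploit the read-once transition structure across all $n$ layers, just as in the $k$-counter proof.

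Next, I would lift the single-coordinate potential $\phi$ from the proof of Theorem~\ref{intro.thm.robp-lb} to a product potential $\Phi(S)=\prod_{j=1}^k\phi(\pi_j(S))$, where $S$ ranges over subsets of input prefixes reaching a common state in the ROBP and $\pi_j(S)$ denotes the projection of $S$ onto the $j$-th coordinate substream. In analogy with the single-coordinate argument, the goal is to establish: (i) $\Phi$ at the start state is $O(1)$; (ii) at the final layer, correctness of the ROBP forces the aggregate potential across states to be $n^{\Omega(k)}$, because each marginal $\pi_j$ inherits the $\Omega(n)$ single-coordinate growth; (iii) across each layer transition, $\Phi$ can grow by at most a factor tied to the ROBP width $w$. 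Chaining (i)--(iii) over the $n$ layers then forces $w\geq n^{\Omega(k)}$.

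The hard part will be step (iii): controlling the per-layer growth of the product potential. I would approach it either by a hybrid/induction argument that peels off one coordinate at a time, conditioning on the other coordinates and invoking the single-coordinate growth bound on each conditional slice, or by reworking the paper's potential function directly over the tensor-alphabet $\{0,1\}^k$. Since the $k$ coordinate streams in $\approxcountp k$ are genuinely independent at the input level, any cross-coordinate correlations in the joint ROBP are introduced by the algorithm rather than the input, and the product potential should correctly penalize any such sharing scheme --- establishing this factorization-preservation claim under a single read-once pass is the technical heart of the argument.
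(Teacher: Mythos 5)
Your diagnosis of why the easy routes fail (direct reduction to $\approxcountc k$, pairwise indistinguishability) matches the paper's own discussion, and the general plan of lifting the one-dimensional potential to $k$ dimensions is the right instinct. But the proposal has a genuine gap exactly where you flag it: the product potential $\Phi(S)=\prod_j\phi(\pi_j(S))$ is not what makes the argument work, and you give no mechanism for step (iii). A multiplicative potential has no reason to telescope under the layer-to-layer transitions: the ROBP is free to correlate coordinates arbitrarily, and there is no analogue of the ``most points gain $+1$'' step for a product of marginal quantities. Your step (ii) is also inverted relative to what a correct argument needs --- correctness of the ROBP should give an \emph{upper} bound on the final potential, while small width should force a large \emph{lower} bound via per-layer growth; as stated, your (ii) and (iii) do not chain into a contradiction.

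The paper's proof keeps the potential \emph{additive}: each vertex in layer $V_t$ is labeled with the $k$-dimensional rectangle $[a_1,b_1]\times\cdots\times[a_k,b_k]$ of attainable per-coordinate counts, and for every grid point $(x_1,\dots,x_k)\in\{0,\dots,\lfloor n/10\rfloor\}^k$ one sets $\phi_t(x)=\max_R\,(b_1+\cdots+b_k-x_1-\cdots-x_k)$ over rectangles containing $x$, with $\Phi_t=\sum_x\phi_t(x)$. The key new idea --- which is what you are missing --- is that to get total growth $\Omega(nk)$ rather than $\Omega(n)$, one shows $\phi_{t+1}(x)\geq\phi_t(x)+\lceil 9k/10\rceil$ for every grid point $x$ \emph{except} those that agree with the bottom corner $(a_1,\dots,a_k)$ of some rectangle in at least $\lfloor k/10\rfloor+1$ coordinates (shift the optimal rectangle by the indicator vector $\b{z}$ of the coordinates where $x_j>a_j$). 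The number of such exceptional points is at most $w\cdot 2^k\cdot(\lfloor n/10\rfloor+1)^{\lceil 9k/10\rceil-1}$, which is a vanishing fraction of the $(\lfloor n/10\rfloor+1)^k$ grid points unless $w\geq n^{\Omega(k)}$; comparing the resulting lower bound on $\Phi_n$ with the upper bound $\Phi_n\leq(\lfloor n/10\rfloor+1)^k\cdot\frac{2kn}{3}$ forced by correctness closes the argument. Without this counting of points on low-dimensional faces of the rectangles, your outline does not yield the theorem.
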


However, our current proof techniques cannot show an $\Omega(n)^k$ lower bound on the width (as in \Cref{intro.thm.robp-lb}). See \Cref{sec.k-par-lb} for more discussion.

\begin{openproblem}{\rm (See also \Cref{open.k-par})}\label{intro.open}
	Prove or disprove: for any integers $n,k$ such that $n\gg k\>= 1$, computing $\approxcountp k[n,n/3]$ requires $\Omega(n)^k$ width in the ROBP model.
\end{openproblem}

Note that \Cref{intro.thm.k-par} can be viewed as a {\it direct sum theorem} for $\{0,1\}$-approximate counting in the ROBP model. i.e., to compute on $k$ instances that arrive in parallel, independently computing on each instance is almost optimal. In contrast, for general problems, we only know a much weaker direct sum theorem that if computing some function $f$ (on $n$ bits) requires average space\footnote{By average space we mean the average of number of memory bits over all layers.} $s$, then computing the $k$-parallel version of $f$ requires average space $\Omega(ks/n)$ \cite{RS16}, which is not sufficient for us to establish \Cref{intro.thm.k-par}.

\subsection{More on the Complexity of Approximate Counting}

It was already known that approximate counting is easy in some other computation models, which are slightly stronger than deterministic ROBP but still very weak. Therefore our lower bounds shed light on what is the exact weakest model to compute approximate counting.\par
If we relax {\it deterministic} streaming to {\it randomized} streaming, then approximate counting becomes very easy. Since we can only sample a few words of the input and count the fraction of each word in the sampled part. By concentration inequalities (e.g., the Chernoff Bound), we can show that this randomized algorithm achieves a small error with high probability.\par
If we consider approximate counting with a small multiplicative error (instead of additive error in our work), \cite{Mor78} presented a super efficient randomized algorithm, which was called {\it Morris' Counter}. \cite{NY22} improved that algorithm and also gave a matching lower bound. Their optimal algorithm uses $O(\min\{\log n,\log\log n+\log(1/\eps)+\log\log(1/\delta)\})$ bits of space to estimate the number of $1$'s in an $n$-bit $0,1$-string up to a multiplicative error of $\eps$, with success probability $1-\delta$. Their lower bound was also generalized to the $k$-counter setting in \cite{AHNY22}.\par
Another surprising algorithm for approximate counting is that (deterministic and worst-case) $\approxcount[n,n/3]$ can be computed in uniform $\msf{AC}^0$ \cite{Ajt90}, which is a very weak complexity class.\par
The {\it coin problem} is a problem very similar to approximate counting, which has received a lot of attention in recent literature. In the coin problem, there is a hidden probability $p\in[0,1]$, and we are given $n$ i.i.d. random bits, each of which has probability $p$ to be $1$. Our goal is to estimate the value of $p$. The streaming complexity of the coin problem was studied in \cite{BGW20} \cite{BGZ21}.

\subsection{Organization of this Paper}

In \Cref{sec.overview}, we give a high-level overview of our main proof techniques. In \Cref{sec.prelim}, we give the formal definition of the notions. In \Cref{sec.k-counter-lb}, we present the proof of our main lower bound result for $k$-counter approximate counting. In \Cref{sec.k-par-lb}, we generalize the proof technique to prove a lower bound for $k$-parallel approximate counting. In \Cref{sec.algo}, we present some non-trivial algorithms for approximate counting. In \Cref{sec.imply}, we prove streaming lower bounds for other problems, based on our lower bound for approximate counting.

\section{Proof Overview}\label{sec.overview}

In this section, we give a high-level overview of the main proof techniques we use in our proof of the lower bound for $k$-counter approximate counting.\par
Let's first focus on the $k=2$ case, or equivalently, $\{0,1\}$-approximate counting, which is already non-trivial.

\subsection{Known Techniques do not Work}

Before presenting the overview of our proof, we first need to illustrate why proving our lower bound results is difficult.\par
A popular approach for streaming lower bounds is the {\it communication bottleneck} method. Consider dividing the input string into two halves. Then for any width-$w$ ROBP, it can only remember $\log w$ bits of information about the first half before moving on to the second half. This motivates us to consider the following communication problem: Alice is given the first half of the input and Bob is given the second half, and Alice can send only one message to Bob, and Bob needs to output the answer. If we can prove a communication lower bound on the length of Alice's message, then it immediately gives a streaming lower bound. We refer to \cite{Rou16} for some examples.\par
However, note that in order for Alice and Bob to compute, say, $\approxcount[n,n/10]$, it suffices for Alice to only send an estimate of the number of $1$'s in the first half to Bob, with an additive error $\<=n/20$. This only needs $O(1)$ bits of communication. Thus the {\it communication bottleneck} method does not even seem to be able to prove that $\approxcount[n,n/10]$ requires $\omega(1)$ width.

\subsection{Lower Bound for \tops{$\{0,1\}$}{\{0,1\}}-Approximate Counting}

We first sketch how to prove $\approxcount[n,n/10]$ requires $\Omega(n)$ width in ROBP.\par
Let's first consider this question: given an ROBP, how can we decide whether it correctly computes $\approxcount[n,n/10]$? The answer is to use dynamic programming. For a vertex $v$ in the $t$-th layer\footnote{Before reaching a vertex in the $t$-th layer, the ROBP has read $t$ input words.} of the ROBP, we wish to compute an interval label $[a_v,b_v]$, here $a_v,b_v$ are the minimal/maximal possible number of $1$'s in the first $t$ input bits to reach $v$. Once we have computed all interval labels in the $t$-th layer, the intervals in the next layer can be derived according to the following rule: if the outgoing edge with label $z\in\{0,1\}$ of some vertex $u$  points to $v$, then $[a_u+z,b_u+z]$ should be a subset of $[a_v,b_v]$. See \Cref{fig.dp} for an example.

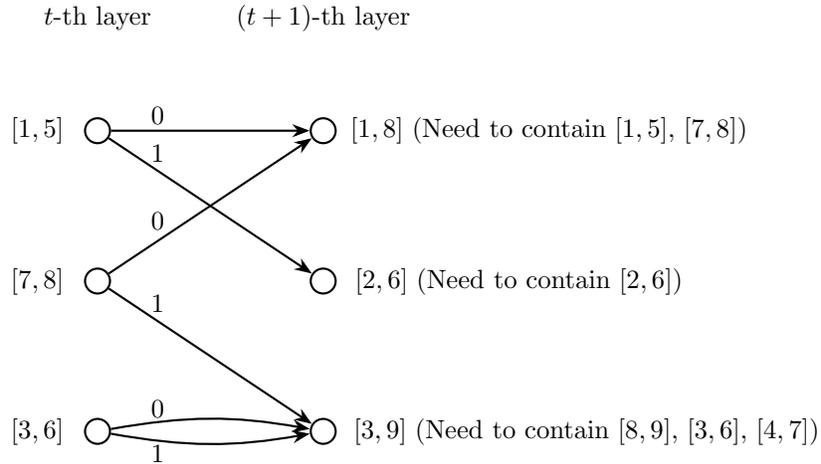
\begin{figure}[htpb]
	\centering\small
	\begin{tikzpicture}
		\node[vertex] (1) at (0,4) {};\node at (-0.8,4) {$[1,5]$};
		\node[vertex] (2) at (0,2) {};\node at (-0.8,2) {$[7,8]$};
		\node[vertex] (3) at (0,0) {};\node at (-0.8,0) {$[3,6]$};
		\node[vertex] (4) at (3,4) {};\node at (6,4) {$[1,8]$ (Need to contain $[1,5]$, $[7,8]$)};
		\node[vertex] (5) at (3,2) {};\node at (5.6,2) {$[2,6]$ (Need to contain $[2,6]$)};
		\node[vertex] (6) at (3,0) {};\node at (6.5,0) {$[3,9]$ (Need to contain $[8,9]$, $[3,6]$, $[4,7]$)};
		\draw[arrow] (1)--(4);\node at (0.8,4.2) {$0$};
		\draw[arrow] (1)--(5);\node at (0.8,3.7) {$1$};
		\draw[arrow] (2)--(4);\node at (0.8,2.8) {$0$};
		\draw[arrow] (2)--(6);\node at (0.8,1.7) {$1$};
		\draw[arrow] (3) to[out=10,in=170] (6);\node at (0.8,0.3) {$0$};
		\draw[arrow] (3) to[out=-10,in=-170] (6);\node at (0.8,-0.3) {$1$};
		\node at (0,5.5) {$t$-th layer};
		\node at (3,5.5) {$(t+1)$-th layer};
	\end{tikzpicture}
	\caption{\small An example for the dynamic programming process.}
	\label{fig.dp}
\end{figure}
We need to mention that it is not necessary that for each integer $f\in[a_v,b_v]$, there exists some string with $f$ $1$'s that reaches $v$.\par
Finally for any vertex $v$ in the last layer, we need to have $b_v-a_v\<=n/5$. This is because if we output $c_v$ as an estimate of the Hamming weight when we reach $v$, then since there are possibly $a_v$ or $b_v$ many $1$'s when we reach $v$, we should have $c_v-a_v\<=n/10$ and $b_v-c_v\<=n/10$, and hence $b_v-a_v\<=n/5$.\par
We remark that the idea of the interval labeling has already appeared in \cite{ABJ+22}'s proof of the lower bound for approximate counting with small multiplicative errors. Also note that the interval labeling strongly depends on the worst-case behavior of the ROBP, and hence it does not work in the randomized or average-case settings.\par
Now let's move on to the main step: analyzing a novel potential function. For each $0\<=t\<=n$, we wish to define a potential value $\Phi_t$ that depends on the intervals in the $t$-th layer of the ROBP. Intuitively, we wish:
\begin{enumerate}[\indent(1)]
	\item If $w$ (width of the ROBP) is small, then $\Phi_0=0$ and $\Phi_{t+1}-\Phi_t$ is {\it large} for each $0\<=t\<=n-1$.\label{overview.potential.req.1}
	\item If each interval in the last layer has length $\<=n/5$, then $\Phi_n$ is {\it small}.\label{overview.potential.req.2}
\end{enumerate}
We will define the threshold for {\it large} and {\it small} later. Ideally, if (\ref{overview.potential.req.1}) and (\ref{overview.potential.req.2}) are satisfied, then we can argue an ROBP with a small width should have large $(\Phi_{t+1}-\Phi_t)$'s, and thus a large $\Phi_n$, and therefore contradicts (\ref{overview.potential.req.2}).\par
Now let's define $\Phi_t$. Let $\Phi_t:=\sum_{i=0}^t\phi_t(i)$, here $\phi_t(i)$ is the maximum of $b-i$ over all interval labels $[a,b]$ in the $t$-th layer such that $i\in[a,b]$.\par
(\ref{overview.potential.req.1}) holds in the following sense. First, we claim that we have $\phi_{t+1}(i)\>=\phi_t(i)$ for each $i\in\{0,1,\cs,t\}$: suppose $[a,b]$ is the interval in the $t$-th layer such that $i\in[a,b]$ and reaches the maximal $b-i$, then some interval in the $(t+1)$-th layer contains $[a,b]$ and further contains $i,b$, so $\phi_{t+1}(i)\>=b-i$. Second, we also claim that if $i$ is not the left endpoint of some interval in the $t$-th layer, then $\phi_{t+1}(i)\>=\phi_t(i)+1$: suppose $[a,b]$ is an interval in the $t$-th layer such that $i\in[a,b]$ and reaches the maximal $b-i$, then some interval in the $(t+1)$-th layer contains $[a+1,b+1]$ and further contains $i,b+1$ (since $i\neq a$), so $\phi_{t+1}(i)\>=b+1-i$. Therefore we conclude that $\Phi_{t+1}-\Phi_t\>=t+1-w$ and hence $\Phi_n\>=\sum_{t=0}^{n-1}(t+1-w)\>=\f{n^2}2-nw$.\par
(\ref{overview.potential.req.2}) holds in the following sense: if each interval in the last layer has length $\<=n/5$, then each $\phi_n(i)$ is $\<=n/5$, and thus $\Phi_n\<=\f{n(n+1)}5$.\par
Finally combine $\Phi_n\>=\f{n^2}2-nw$ and $\Phi_n\<=\f{n(n+1)}5$ we get $w\>=\Omega(n)$, which gives a lower bound for $\approxcount[n,n/10]$.

\subsection{Lower Bound for \tops{$\approxcountc 3[n,n/20]$}{ApproxCount\_\{3-counter\}[n,n/20]}}

Now we move on to the lower bounds for the general $k$-counter approximate counting. We only sketch the proof of $\approxcountc 3[n,n/20]$ requires $\Omega(n^2)$ width in ROBP, since the proof can naturally generalize to the general $k$-counter setting. We also remark that the theorem ``$\approxcountc 3[n,n/20]$ requires $\Omega(n^2)$ width in ROBP'' cannot be directly derived from the lower bound for $\{0,1\}$-approximate counting.\par
Now instead of labeling each vertex with an interval, we label each vertex $v$ with a rectangle $[a_1,b_1]\times[a_2,b_2]$, here $a_1,b_1$ are the minimal/maximal possible number of $1$'s in the input string in order to reach $v$, and $a_2,b_2$ are the minimal/maximal possible number of $2$'s in the input string in order to reach $v$. Note that we do not consider the error on the frequency of $3$, but it is still sufficient for us to establish the lower bound. Also we need to mention that, by labeling $v$ with $[a_1,b_1]\times[a_2,b_2]$, it is not necessary that some input with simultaneously $b_1$ $1$'s and $b_2$ $2$'s can reach $v$. We only require there exists an input with $b_1$ $1$'s and an input (maybe a different one) with $b_2$ $1$'s that reach $v$.\par
The potential function is similar. For $0\<=t\<=n$, define $\Phi_t:=\sum_{i_1,i_2\colon i_1,i_2\>=0,~i_1+i_2\<=t}\phi_t(i_1,i_2)$, here $\phi_t(i_1,i_2)$ is the maximum of $b_1+b_2-i_1-i_2$ over all rectangle labels $[a_1,b_1]\times[a_2,b_2]$ in the $t$-th layer such that $(i_1,i_2)\in[a_1,b_1]\times[a_2,b_2]$.\par
On one hand, we can still similarly prove that $\phi_{t+1}(i_1,i_2)-\phi_{t}(i_1,i_2)$ is always $\>=0$, and $\>=1$ if $(i_1,i_2)$ is not the left-bottom corner of some rectangle in the $t$-th layer. Therefore $\Phi_{t+1}-\Phi_t\>=\f{(t+1)^2}2-w$ and $\Phi_n\>=\sum_{t=0}^{n-1}\(\f{(t+1)^2}2-w\)\>=\f{n^3}6-nw$. On the other hand, if the ROBP computes $\approxcountc 3[n,n/20]$, then each rectangle in the last layer should have both width and height $\<=n/10$, hence each $\phi_n(i_1,i_2)$ is $\<=n/5$, and $\Phi_n\<=\f{n(n+1)^2}{10}$. Therefore we get $w\>=\Omega(n^2)$.\par
In the formal proof in \Cref{sec.k-counter-lb}, we will perform a more careful analysis on the potential function, and obtain tighter lower bounds.

\section{Preliminaries}\label{sec.prelim}

In this section, we give the formal definitions of our notions.

\subsection{Computation Model: Read-once Branching Programs}

We use the following formal definition of the read-once branching program (ROBP) model.

\begin{definition}{\bf (Read-once Branching Program)}\label{def.ROBP}
	Let $\Sigma$ be a finite alphabet. A length-$n$ read-once branching program (ROBP) $P$ consists of a directed layered multigraph with $n+1$ layers $V_0,V_1,\cs,V_n$. For every $0\<=i\<=n-1$, each vertex $v\in V_i$ has $|\Sigma|$ outgoing edges leading to $V_{i+1}$, each labeled a distinct element in $\Sigma$. Vertices in $V_n$ have zero outgoing edges. $V_0$ consists of a single ``start vertex'': $V_0=\{v_{\mrm{start}}\}$.\par
	An input $x\in\Sigma^n$ selects a path $(v_0,v_1,\cs,v_n)$ through the graph: the path starts at $v_0=v_{\mrm{start}}$, and upon reaching a vertex $v_i\in V_i$, the term $x_{i+1}$ specifies which outgoing edge to use. Each vertex $v$ in $V_n$ is labeled an output string $\mrm{output}(v)$, and the output of $P$ on input $x$ is $P(x)=\mrm{output}(v_n)$.\par
	We further require that each vertex can be reached on some input. i.e., there are no useless vertices.\par
	The width of the program is the maximum number of vertices in a single layer.
\end{definition}

\subsection{Problem: Approximate Counting}

We formally define the problem of approximate counting and its variants.

\begin{definition}{\bf ($\{0,1\}$-approximate Counting)}\label{def.approx-count}
	For any integer $n\>=1$ and real number $\Delta\>=0$, define the problem $\approxcount[n,\Delta]$ as follows: on any input $x\in\{0,1\}^n$ ($\{0,1\}$ is the alphabet and $n$ is the input length), a valid output of $x$ is a real number $\hat{S}$ such that 
	$$\la\hat{S}-\sum_{i=1}^nx_i\ra\<=\Delta.$$\par
	We say a program $P$ computes $\approxcount[n,\Delta]$ if on any input $x\in\{0,1\}^n$, $P(x)$ outputs a valid output of $x$.
\end{definition}

\begin{definition}{\bf ($k$-counter Approximate Counting)}\label{def.approx-count-k-counter}
	For any integers $k\>=2$, $n\>=1$, and real number $\Delta\>=0$, define the problem $\approxcountc k[n,\Delta]$ as follows: on any input $x\in[k]^n$ ($[k]$ is the alphabet and $n$ is the input length), a valid output of $x$ is a $k$-tuple of real numbers $\(\hat{S}_1,\cs,\hat{S}_k\)$ such that 
	$$\la\hat{S_j}-\#\{i\in[n]\colon x_i=j\}\ra\<=\Delta$$
	for all $j\in[k]$.\par
	We say a program $P$ computes $\approxcountc k[n,\Delta]$ if on any input $x\in[k]^n$, $P(x)$ outputs a valid output of $x$.
\end{definition}

\begin{definition}{\bf ($k$-parallel Approximate Counting)}\label{def.approx-count-k-par}
	For any integers $k\>=1$, $n\>=1$, and real number $\Delta\>=0$, define the problem $\approxcountp k[n,\Delta]$ as follows: on any input $x\in\(\{0,1\}^k\)^n$ ($\{0,1\}^k$ is the alphabet and $n$ is the input length), a valid output of $x$ is a $k$-tuple of real numbers $\(\hat{S}_1,\cs,\hat{S}_k\)$ such that 
	$$\la\hat{S_j}-\#\{i\in[n]\colon (x_i)_j=1\}\ra\<=\Delta$$
	for all $j\in[k]$.\par
	We say a program $P$ computes $\approxcountp k[n,\Delta]$ if on any input $x\in\(\{0,1\}^k\)^n$, $P(x)$ outputs a valid output of $x$.
\end{definition}

We remind that $\approxcountc 2$ and $\approxcountp 1$ are equivalent to $\approxcount$.

\section{Lower Bound for \tops{$k$}{k}-counter Approximate Counting}\label{sec.k-counter-lb}

The main purpose of this section is to prove \Cref{thm.k-counter-lb}, which is a very general statement of the lower bound for \tops{$k$}{k}-counter approximate counting. In \Cref{subsec.k-counter-cor} we present the corollaries in various regimes of the parameters, whose statements are more intuitive.


\begin{theorem}{\bf ($k$-counter Lower bound)}\label{thm.k-counter-lb}
	For any integers $k\>=2$, $n\>=1$, $w\>=1$, and real number $0\<=\Delta\<=\f{n}{2(k-1)}$, if there exists a length-$n$ width-$w$ ROBP $P$ over alphabet $[k]$ such that $P$ computes $\approxcountc k[n,\Delta]$, then the following holds: let $m$ be the largest integer such that $m\<=n-1$ and $\C{m+k-1}{k-1}\<=w$, then
	$$\C{m+k}{k}+(n-m-1)w\>=\C{n-2(k-1)\Delta+k-1}{k}.$$
\end{theorem}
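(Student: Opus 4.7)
The plan is to adapt the potential-function argument sketched in \Cref{sec.overview} to general $k$, with one crucial modification: a ``cap'' inside the potential. First I would label every vertex $v\in V_t$ by its bounding box $B_v=\prod_{j=1}^{k-1}[a_j(v),b_j(v)]$, where $a_j(v)$ and $b_j(v)$ are the minimum and maximum counts of symbol $j$ among length-$t$ inputs that reach $v$; correctness of $P$ forces $b_j(v)-a_j(v)\leq 2\Delta$ on every box in $V_n$ for each $j<k$. Then, for each $\vec{\imath}\in\mathbb{Z}_{\geq 0}^{k-1}$ with $\sum_{j<k}i_j\leq t$, I would introduce the capped potential
\begin{equation*}\phi_t(\vec{\imath}):=\max_{v\in V_t,\,\vec{\imath}\in B_v}\left[\min\!\left(\sum_{j<k}b_j(v),\,t\right)-\sum_{j<k}i_j\right],\qquad \Phi_t:=\sum_{\vec{\imath}:\,\sum_{j<k}i_j\leq t}\phi_t(\vec{\imath}).\end{equation*}
The clip $\min(\,\cdot\,,t)$ is the new ingredient beyond the overview, and it is what makes the upper bound on $\Phi_n$ exactly tight.

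Next I would establish the layer-to-layer increment $\Phi_{t+1}-\Phi_t\geq\max(0,\binom{t+k-1}{k-1}-w)$. For any $\vec{\imath}$ that is not the lower-left corner of any box in $V_t$, it cannot be the lower-left corner of its optimal box $B_{v^*}$ either, so some coordinate satisfies $i_{j_0}>a_{j_0}(v^*)$; following the edge labeled $j_0$ out of $v^*$ leads to a vertex $v'\in V_{t+1}$ whose box contains $B_{v^*}+e_{j_0}$ (hence contains $\vec{\imath}$) and has $\sum_{j<k}b_j(v')\geq\sum_{j<k}b_j(v^*)+1$. A short case split on whether $\sum_{j<k}b_j(v^*)\leq t$ or $>t$ then shows $\phi_{t+1}(\vec{\imath})\geq\phi_t(\vec{\imath})+1$ in both subcases (in the capped regime, both $\min(\,\cdot\,,t)$ and $\min(\,\cdot\,,t+1)$ rise by exactly one). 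Since at most $w$ points in $\mathbb{Z}_{\geq 0}^{k-1}$ are lower-left corners of boxes in $V_t$ and new level-$(t+1)$ points contribute $\geq 0$, the increment bound holds. Telescoping via the hockey-stick identity together with the definition of $m$ yields
\begin{equation*}\Phi_n\geq\binom{n+k-1}{k}-\binom{m+k}{k}-(n-m-1)w.\end{equation*}

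For the matching upper bound on $\Phi_n$, two pointwise constraints combine cleanly in the last layer: $b_j-a_j\leq 2\Delta$ gives $\sum_{j<k}(b_j-i_j)\leq 2(k-1)\Delta=:D$, while the cap at $n$ gives $\min(\sum_{j<k}b_j,\,n)-\sum_{j<k}i_j\leq n-\sum_{j<k}i_j$. Together they yield $\phi_n(\vec{\imath})\leq\min(D,\,n-\sum_{j<k}i_j)$. Rewriting $\Phi_n$ as $\sum_{\vec{\imath}}\sum_{r=1}^{\min(D,\,n-\sum_{j<k}i_j)}1$ and swapping the order of summation gives $\sum_{r=1}^{D}\binom{n-r+k-1}{k-1}$, which by one more application of hockey-stick equals $\binom{n+k-1}{k}-\binom{n-D+k-1}{k}$. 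Combining with the lower bound, the $\binom{n+k-1}{k}$ term cancels and the claimed inequality falls out.

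The main obstacle I anticipate is the capping case analysis in the increment step. Without the $\min(\,\cdot\,,t)$ clip, the naive potential $\sum_{j<k}b_j-\sum_{j<k}i_j$ can exceed $n-\sum_{j<k}i_j$, because the tempting bound $\sum_{j<k}b_j\leq n$ actually fails for $k\geq 3$: extremal boxes can push $\sum_{j<k}b_j$ up to $n+2(k-2)\Delta$. The resulting naive bound $\Phi_n\leq D\binom{n+k-1}{k-1}$ is then strictly too loose for the theorem's RHS. The cap is the device that resolves both halves simultaneously: it preserves the $+1$ increment for non-corner points while tightening the final-layer pointwise bound to $\min(D,\,n-\sum_{j<k}i_j)$, which integrates to exactly the target.
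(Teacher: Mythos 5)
Your proposal is correct and takes essentially the same route as the paper's own proof: the same bounding-box labeling, the same capped potential $\min\{\textstyle\sum_j b_j,\,t\}-\sum_j i_j$ (the cap is already present in the paper's formal definition, even though the overview omits it), the same corner-counting increment bound, and the same hockey-stick summations for both the telescoped lower bound and the last-layer upper bound. The only cosmetic fixes needed are to note explicitly that $\phi_{t+1}\geq\phi_t$ also holds at lower-left corner points (immediate from the same containment property you already invoke), and to run the swap-of-summation step with $(k-1)\lfloor 2\Delta\rfloor$ instead of $D=2(k-1)\Delta$, which need not be an integer, relaxing to $2(k-1)\Delta$ only at the end, exactly as the paper does.
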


	We remark that one can easily verify that the left-hand side, $\C{m+k}{k}+(n-m-1)w$, is non-decreasing in $w$.

\subsection{Rectangle Labeling}

We present a formal definition and some basic properties of the rectangle labeling on the vertices, which is the first step of our proof of the lower bounds.

Let's define some basic notions. For a set $S\subseteq\mathbb{R}^k$ and a vector $\b{v}\in\mathbb{R}^k$, define $S+\b{v}:=\L\b{s}+\b{v}\colon\b{s}\in S\R$. We use $\{\b{e}_1,\cs,\b{e}_k\}$ to denote the standard basis of $\mathbb{R}^k$. i.e., $\b{e}_i$ is the vector with $1$ in the $i$th coordinate and $0$'s in all other coordinates.

\begin{definition}{\bf (Rectangle Labeling of the $k$-counter Case)}\label{def.rect-k-counter}
	For a length-$n$ ROBP $P$ over alphabet $[k]$ that attempts to compute $\approxcountc k[n,\Delta]$, label each vertex $v\in V_t$ $(0\<=t\<=n)$ with a $(k-1)$-dimensional rectangle $R(v)$. $R(v)$ is defined as $[a_1,b_1]\times\cs\times[a_{k-1},b_{k-1}]$, here for each $j\in[k-1]$, $a_j,b_j$ are the minimal and maximal elements in 
	$$\L c\in\mathbb{N}\colon\text{$\exists(x_1,\cs,x_t)\in[k]^t$ such that $(x_1,\cs,x_t)$ reaches $v$ in $P$ and $\#\{i\in[t]\colon x_i=j\}=c$}\R.$$
	i.e., $a_j,b_j$ are the minimal/maximal possible number of $j$'s in the first $t$ input words to reach $v$ in layer $V_t$.\par
	For each $0\<=t\<=n$, let $\mathcal{R}_t:=\{R(v)\colon v\in V_t\}$ be the collection of all rectangle labels in layer $V_t$.
\end{definition}

Rigorously speaking, we should use $R^{(P)}(v)$ and $\mathcal{R}_t^{(P)}$ to show that these rectangles are corresponding to ROBP $P$. Nevertheless, we omit the superscript ``$^{(P)}$'' when $P$ is clear from the context.\par
We also give some immediate observations on the properties of rectangle labeling.

\begin{proposition}{\bf (Properties of Rectangle Labeling of the $k$-counter Case)}\label{prop.rect-k-counter}
	For a length-$n$ ROBP $P$ over alphabet $[k]$ that attempts to compute $\approxcountc k[n,\Delta]$, we have:
	\begin{enumerate}[\indent(1)]
		\item $\mathcal{R}_0=\{[0,0]^{k-1}\}$. For each $0\<=t\<=n-1$, for each $R\in\mathcal{R}_t$, each of $R,R+\b{e}_1,\cs,R+\b{e}_{k-1}$ is contained in some rectangle in $\mathcal{R}_{t+1}$.
		\item Suppose $w$ is the width of $P$, then $|\mathcal{R}_t|\<=w$ for all $0\<=t\<=n$.
		\item \spreadwords{Suppose $P$ computes $\approxcountc k[n,\Delta]$, then for any rectangle}\\$[a_1,b_1]\times\cs\times[a_{k-1},b_{k-1}]\in\mathcal{R}_n$, we have $b_j-a_j\<=2\Delta$ for all $j\in[k-1]$.
	\end{enumerate}
\end{proposition}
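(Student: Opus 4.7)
All three parts will unwind directly from \Cref{def.rect-k-counter}, so my proof will be a short structured verification rather than a substantive argument; I do not expect any real obstacle and will simply dispatch the three items in order.

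First I will handle part (1). The identity $\mathcal{R}_0=\{[0,0]^{k-1}\}$ holds because $V_0=\{v_{\mrm{start}}\}$ and the empty prefix contains zero copies of each symbol, so every coordinate extreme is $0$. For the containment claim I will fix $v\in V_t$ with $R(v)=[a_1,b_1]\times\cs\times[a_{k-1},b_{k-1}]$ and let $v_j\in V_{t+1}$ denote the vertex reached from $v$ along the edge labeled $j\in[k]$. The argument is coordinate-wise: for each $\ell\in[k-1]$, take a length-$t$ input reaching $v$ whose number of $\ell$'s attains the endpoint $a_\ell$ (resp.\ $b_\ell$) and append the letter $j$; the resulting length-$(t+1)$ input reaches $v_j$ and its $\ell$-count is shifted by $+1$ iff $\ell=j$. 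Writing $R(v_j)=[a_1',b_1']\times\cs\times[a_{k-1}',b_{k-1}']$, the extremal definitions of the $a'$'s and $b'$'s then immediately give $R(v)+\b{e}_j\subseteq R(v_j)$ for $j\in[k-1]$ and $R(v)\subseteq R(v_k)$ for $j=k$.

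Part (2) is immediate: the map $v\mapsto R(v)$ is a surjection from $V_t$ to $\mathcal{R}_t$, so $|\mathcal{R}_t|\<=|V_t|\<=w$ by \Cref{def.ROBP}.

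For part (3), I will fix $v\in V_n$, write $R(v)=[a_1,b_1]\times\cs\times[a_{k-1},b_{k-1}]$, and set $(\hat{S}_1,\cs,\hat{S}_k):=\mrm{output}(v)$. By construction there exist inputs $x,y\in[k]^n$ both reaching $v$ with $j$-count equal to $a_j$ and $b_j$ respectively; applying correctness of $P$ on $x$ and $y$ together with the triangle inequality yields $b_j-a_j\<=2\Delta$ for every $j\in[k-1]$. The only subtlety worth flagging while writing this up is that by \Cref{def.rect-k-counter} the rectangle $R(v)$ is the Cartesian product of its coordinate extremes, so it may contain ``phantom'' integer tuples that are not simultaneously realizable as count vectors at $v$; this is why part (1) must be phrased purely as a rectangle inclusion, and why in part (3) we only need realizability separately for each coordinate's extremes.
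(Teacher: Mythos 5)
Your proof is correct and follows essentially the same route as the paper's: part (1) by appending a single letter to extremal witness prefixes coordinate-by-coordinate, part (2) by the surjection $v\mapsto R(v)$, and part (3) by applying correctness to the two extremal inputs reaching $v$ and the triangle inequality. Your closing remark about non-realizable ``phantom'' tuples matches a caveat the paper itself makes in its overview.
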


\begin{Proof}
	(1): $\mathcal{R}_0=\{[0,0]^{k-1}\}$ is because $V_0$ consists of a single start vertex which has label $[0,0]^{k-1}$ by definition.\par
	For any $R=[a_1,b_1]\times\cs\times[a_{k-1},b_{k-1}]\in\mathcal{R}_t$, suppose $R=R(v)$ for some $v\in V_t$. Suppose the outgoing edges from $v$ with labels $1,\cs,k$ point to $v_1,\cs,v_k\in V_{t+1}$ respectively. For each $j\in[k-1]$, suppose $(p_1^{(j)},\cs,p_t^{(j)}),(q_1^{(j)},\cs,q_t^{(j)})\in[k]^t$ are two possible prefixes of the input that reach $v$ and contain $a_j,b_j$ many $j$'s respectively. Then for each $j\in[k-1]$, $(p_1^{(j)},\cs,p_t^{(j)},k),(q_1^{(j)},\cs,q_t^{(j)},k)$ reach $v_k$ and contain $a_j,b_j$ many $j$'s respectively. So $R\subseteq R(v_k)$. Also note that for each $l\in[k-1]$, for each $j\in[k-1]$, $(p_1^{(j)},\cs,p_t^{(j)},l),(q_1^{(j)},\cs,q_t^{(j)},l)$ reach $v_l$ and contain $a_j+(\b{e}_l)_j,b_j+(\b{e}_l)_j$ many $l$'s respectively. So for each $l\in[k-1]$, we have $R+\b{e}_l\subseteq R(v_l)$.\par
	(2): Observe that $|\mathcal{R}_t|\<=|V_t|\<=w$.\par
	(3): For any $R=[a_1,b_1]\times\cs\times[a_{k-1},b_{k-1}]\in\mathcal{R}_n$, suppose $R=R(v)$ for some $v\in V_n$. Suppose upon reaching $v$, $P$ outputs $(c_1,\cs,c_k)\in\mathbb{R}^k$. For each $j\in[k-1]$, since there exists inputs $\b{x},\b{x}'$ that reach $v$ and contain $a_j,b_j$ many $j$'s respectively, we have $b_j-c_j\<=\Delta$ and $c_j-a_j\<=\Delta$, and thus $b_j-a_j\<=2\Delta$.
\end{Proof}

By using the rectangle labeling, we convert the question of establishing lower bounds for approximate counting into analyzing the rectangle labelings.

\subsection{The Potential Function}

We define and analyze the potential function for proving the lower bound, which is the core step of our proof.

\begin{definition}{\bf (Potential Function for the Proof of \Cref{thm.k-counter-lb})}
	Consider a length-$n$ ROBP $P$ over alphabet $[k]$ that attempts to compute $\approxcountc k[n,\Delta]$.\par
	For each $0\<=t\<=n$, define
	$$T_{t,k}:=\L(x_1,\cs,x_{k-1})\in\mathbb{N}^{k-1}\colon x_1+\cs+x_{k-1}\<=t\R,$$
	which is the set of possible numbers of $1$'s, $2$'s, $\cs$, $(k-1)$'s in the first $t$ input words.\par
	Then, for each $0\<=t\<=n$ and $(x_1,\cs,x_{k-1})\in T_{t,k}$, define\footnote{Here the maximum is taken over all rectangles $R=[a_1,b_1]\times\cs\times[a_{k-1},b_{k-1}]$ such that $R\in\mathcal{R}_{t}$ and $(x_1,\cs,x_{k-1})\in R$.}
	\begin{equation}\label{eq.k-counter-phi}
		\phi_t(x_1,\cs,x_{k-1}):=\max_{\substack{R=[a_1,b_1]\times\cs\times[a_{k-1},b_{k-1}]\colon\\R\in\mathcal{R}_{t},\,(x_1,\cs,x_{k-1})\in R}}\L\min\{b_1+\cs+b_{k-1},t\}-x_1-\cs-x_{k-1}\R.
	\end{equation}
	i.e., $\phi_t(x_1,\cs,x_{k-1})$ is the maximal $l_1$-distance from $(x_1,\cs,x_{k-1})$ to the largest corner of some rectangle in $\mathcal{R}_t$ that contains $(x_1,\cs,x_{k-1})$. Furthermore, if the corner exceeds the border (on which all coordinates sum up to $t$) then we take the $l_1$ distance from $(x_1,\cs,x_{k-1})$ to the border.\par
	Finally, for each $0\<=t\<=n$, define the total potential of layer $V_t$ as
	$$\Phi_t:=\sum_{(x_1,\cs,x_{k-1})\in T_{t,k}}\phi_t(x_1,\cs,x_{k-1}),$$
	which is the sum of the $\phi$'s.
\end{definition}

\Cref{fig.phi} gives an example to illustrate how the potential function is defined.

\begin{figure}[htpb]
	\centering\small
	\begin{tikzpicture}
		\foreach \i in {0,...,8} {
			\foreach \j in {0,...,\numexpr8-\i\relax} {
				\node (\i--\j) at (\i,\j) {$\bullet$};
			}
		}
		\draw[arrow,dotted] (-1,0)--(10,0);
		\draw[arrow,dotted] (0,-1)--(0,10);
		\node at (10.3,0) {$x_1$};
		\node at (0,10.3) {$x_2$};
		\node at (-0.5,-0.3) {$(0,0)$};
		\node at (-0.5,8.3) {$(0,8)$};
		\node at (8.5,-0.3) {$(8,0)$};
		\newcommand{\rect}[5]{\draw[thick,#1] (#2-0.1,#4-0.1)--(#3+0.1,#4-0.1)--(#3+0.1,#5+0.1)--(#2-0.1,#5+0.1)--(#2-0.1,#4-0.1)}
		\rect{green}0203;
		\node at (-1.8,1.5) {$R_1=[0,2]\times[0,3]$};
		\rect{gray}0248;
		\node at (-1.8,6) {$R_2=[0,2]\times[4,8]$};
		\rect{brown}2501;
		\node at (3.5,-0.5) {$R_3=[2,5]\times[0,1]$};
		\rect{red}1525;
		\node at (4,5.5) {$R_4=[1,5]\times[2,5]$};
		\rect{blue}4804;
		\node at (9.8,2) {$R_5=[4,8]\times[0,4]$};
		\node(Ex1) at (1.5,-1.5) {$\phi_t(2,0)=4$};
		\draw[rounded corners] (Ex1)--(1.5,-0.5)--(2--0);
		\node(Ex2) at (-1.8,3.5) {$\phi_t(2,3)=3$};
		\draw[rounded corners] (Ex2)--(1.5,3.5)--(2--3);
	\end{tikzpicture}
	\caption{\small An example for the definition of the potential function. In this example we take $k=3$ and $t=8$. Let $\mathcal{R}_t=\{R_1,R_2,R_3,R_4,R_5\}$ and the values of $R_1,\cs,R_5$ are shown in the figure. We calculate $\phi_t(2,0)$ and $\phi_t(2,3)$ as an example: for $\phi_t(2,0)$, the max operator in \Cref{eq.k-counter-phi} is taken over $R\in\{R_1,R_3\}$, and the maximum is $4$, which is achieved when $R=R_3$; for $\phi_t(2,3)$, the max operator in \Cref{eq.k-counter-phi} is taken over $R\in\{R_1,R_4\}$, and the maximum is $3$, which is achieved when $R=R_4$.}
	\label{fig.phi}
\end{figure}

\begin{lemma}{\bf (Growth of the Potential)}\label{lem.k-counter.growth}
	Consider a length-$n$ ROBP $P$ over alphabet $[k]$ that attempts to compute $\approxcountc k[n,\Delta]$. Let $w$ be the width of $P$, then $\Phi_0=0$, and for any $0\<=t\<=n-1$ we have
	$$\Phi_{t+1}-\Phi_t\>=\max\L0,\C{t+k-1}{k-1}-w\R.$$
\end{lemma}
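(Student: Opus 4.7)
The plan is to induct on $t$, establishing the trivial $\Phi_{t+1}\>=\Phi_t$ and the main inequality $\Phi_{t+1}-\Phi_t\>=\C{t+k-1}{k-1}-w$ separately via a pointwise comparison between $\phi_{t+1}$ and $\phi_t$. The base case $\Phi_0=0$ is immediate from \Cref{prop.rect-k-counter}(1): $T_{0,k}=\{(0,\cs,0)\}$ and $\mathcal{R}_0=\{[0,0]^{k-1}\}$, so $\phi_0(0,\cs,0)=0$.

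For the recursive bound, I would introduce the set of left-bottom corners
$$L_t:=\L(a_1,\cs,a_{k-1})\colon[a_1,b_1]\times\cs\times[a_{k-1},b_{k-1}]\in\mathcal{R}_t\R.$$
Each rectangle contributes at most one such corner, so $|L_t|\<=|\mathcal{R}_t|\<=w$ by \Cref{prop.rect-k-counter}(2). The key pointwise claim to prove is: for every $(x_1,\cs,x_{k-1})\in T_{t,k}$, $\phi_{t+1}(x_1,\cs,x_{k-1})\>=\phi_t(x_1,\cs,x_{k-1})$, with strict improvement by at least $1$ whenever $(x_1,\cs,x_{k-1})\notin L_t$. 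To prove the $+1$ step, take $R=[a_1,b_1]\times\cs\times[a_{k-1},b_{k-1}]\in\mathcal{R}_t$ achieving the max in $\phi_t(x_1,\cs,x_{k-1})$. Since $(x_1,\cs,x_{k-1})\notin L_t$ while $(a_1,\cs,a_{k-1})\in L_t$, some coordinate must satisfy $x_j>a_j$. By \Cref{prop.rect-k-counter}(1) applied to $R+\b{e}_j$, there is some $R'\in\mathcal{R}_{t+1}$ containing $R+\b{e}_j$, hence containing $(x_1,\cs,x_{k-1})$, whose top-right corner has coordinate-sum $\>=b_1+\cs+b_{k-1}+1$. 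Plugging $R'$ into the definition of $\phi_{t+1}$ then yields the improvement. The non-strict inequality (for all points, in particular those in $L_t$) follows from the same template but using $R\subseteq R'$ directly.

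Summing the pointwise inequalities over $(x_1,\cs,x_{k-1})\in T_{t,k}$, and noting that the new points in $T_{t+1,k}\setminus T_{t,k}$ contribute non-negatively, gives
$$\Phi_{t+1}-\Phi_t\>=|T_{t,k}|-|L_t|\>=\C{t+k-1}{k-1}-w.$$
The trivial $\Phi_{t+1}-\Phi_t\>=0$ follows from non-negativity of $\phi$ (since $(x_1,\cs,x_{k-1})\in R$ forces $b_1+\cs+b_{k-1}\>=x_1+\cs+x_{k-1}$, so the $\min$ in the definition is always at least $x_1+\cs+x_{k-1}$). The main subtlety I anticipate is the $\min\{\c,t\}$ clipping in the definition of $\phi_t$: I must verify that across the regimes $B\<=t-1$, $B=t$, and $B\>=t+1$ (where $B:=b_1+\cs+b_{k-1}$), the extra $+1$ coming from the $\b{e}_j$-shift really survives inside $\min\{B+1,t+1\}-\min\{B,t\}$. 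A short case analysis checks that this difference is always at least $1$, closing the argument.
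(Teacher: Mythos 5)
Your proposal is correct and follows essentially the same route as the paper's proof: a pointwise comparison showing $\phi_{t+1}\>=\phi_t$ on $T_{t,k}$ with a gain of $1$ at every point that is not a left-bottom corner of some rectangle in $\mathcal{R}_t$ (at most $|\mathcal{R}_t|\<=w$ exceptions), using \Cref{prop.rect-k-counter}(1) applied to $R+\b{e}_j$, and your case analysis of $\min\{B+1,t+1\}-\min\{B,t\}\>=1$ is exactly the clipping step the paper handles inline. No gaps.
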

\begin{Proof}
	$\Phi_0=0$ is because by \Cref{prop.rect-k-counter} we have $\mathcal{R}_0=\{[0,0]^{k-1}\}$. Thus $\Phi_0=0$.\par
	Note that $\phi_{t+1}(x_1,\cs,x_{k-1})=0$ if $x_1+\cs+x_{k-1}=t+1$. Thus
	$$\Phi_{t+1}-\Phi_t=\sum_{(x_1,\cs,x_{k-1})\in T_{t,k}}\(\phi_{t+1}(x_1,\cs,x_{k-1})-\phi_t(x_1,\cs,x_{k-1})\).$$\par
	Consider any $(x_1,\cs,x_{k-1})\in T_{t,k}$. Suppose when computing $\phi_t(x_1,\cs,x_{k-1})$, the maximum of $\(\min\{b_1+\cs+b_{k-1},t\}-x_1-\cs-x_{k-1}\)$ is achieved when $R=R_0=[a_1,b_1]\times\cs\times[a_{k-1},b_{k-1}]$, then by \Cref{prop.rect-k-counter} there exists $R_1\in\mathcal{R}_{t+1}$ such that $R_0\subseteq R_1$. Thus when computing $\phi_{t+1}(x_1,\cs,x_{k-1})$, by taking $R=R_1$ we can achieve a larger (or equal) value of $\(\min\{b_1+\cs+b_{k-1},t\}-x_1-\cs-x_{k-1}\)$. So $\phi_{t+1}(x_1,\cs,x_{k-1})\>=\phi_t(x_1,\cs,x_{k-1})$.\par
	Furthermore, if $(x_1,\cs,x_{k-1})\neq(a_1,\cs,a_{k-1})$, pick $j\in[k-1]$ such that $x_j\>=a_j+1$, then $(x_1,\cs,x_{k-1})\in R_0+\b{e}_j$. By \Cref{prop.rect-k-counter} there exists $R_2=[a_1',b_1']\times\cs\times[a_{k-1}',b_{k-1}']\in\mathcal{R}_{t+1}$ such that $(x_1,\cs,x_{k-1})\in R_0+\b{e}_j\subseteq R_2$. Therefore $b_1'\>=b_1,\cs,b_{k-1}'\>=b_{k-1}$ and $b_j'\>=b_j+1$ and 
	$$\min\{b_1'+\cs+b_{k-1}',t+1\}-x_1-\cs-x_{k-1}\>=(\min\{b_1+\cs+b_{k-1},t\}-x_1-\cs-x_{k-1})+1.$$
	So for any $(x_1,\cs,x_{k-1})$, if there does not exist $[a_1,b_1]\times\cs\times[a_{k-1},b_{k-1}]\in\mathcal{R}_t$ such that $(x_1,\cs,x_{k-1})=(a_1,\cs,a_{k-1})$, then $\phi_{t+1}(x_1,\cs,x_{k-1})\>=\phi_t(x_1,\cs,x_{k-1})+1$.\par
	To summarize, over all tuples $(x_1,\cs,x_{k-1})\in T_{t,k}$,  $\phi_{t+1}(x_1,\cs,x_{k-1})-\phi_t(x_1,\cs,x_{k-1})$ is always $\>=0$, and is $\>=1$ for all but at most $|\mathcal{R}_t|$ tuples. Note that $|T_{t,k}|=\C{t+k-1}{k-1}$, so
	$$\Phi_{t+1}-\Phi_t\>=\max\L0,\C{t+k-1}{k-1}-|\mathcal{R}_t|\R\>=\max\L0,\C{t+k-1}{k-1}-w\R.$$	
\end{Proof}

\begin{lemma}{\bf (Implication of Large $\Phi_n$)}\label{lem.k-counter.final}
	Let $P$ be a length-$n$ ROBP over alphabet $[k]$ that computes $\approxcountc k[n,\Delta]$, (we require that $\Delta\<=\f n{2(k-1)}$,) then
	$$\Phi_n\<=\C{n+k-1}{k}-\C{n-2(k-1)\Delta+k-1}{k}.$$
\end{lemma}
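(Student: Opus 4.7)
The plan is to upper-bound $\phi_n$ pointwise using Proposition 3.3(3), and then sum over $T_{n,k}$ via the hockey-stick identity to obtain the two binomial terms on the right-hand side.

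First, I fix $(x_1,\ldots,x_{k-1})\in T_{n,k}$ and let $s := x_1+\cdots+x_{k-1}$. If $R = [a_1,b_1]\times\cdots\times[a_{k-1},b_{k-1}] \in \mathcal{R}_n$ contains $(x_1,\ldots,x_{k-1})$, then $a_j \le x_j$ for every $j$, and Proposition 3.3(3) gives $b_j - a_j \le 2\Delta$, so $b_j \le x_j + 2\Delta$. Summing over $j$ yields $b_1 + \cdots + b_{k-1} \le s + 2(k-1)\Delta$, and combining this with the trivial bound $\min\{\sum_j b_j,\, n\} \le n$ gives
$$\phi_n(x_1,\ldots,x_{k-1}) \;\le\; \min\{\,2(k-1)\Delta,\; n-s\,\}.$$
Since $\phi_n$ takes integer values and $n-s$ is an integer, I can tighten this to $\phi_n \le \min\{\lfloor D \rfloor,\, n-s\}$, where $D := 2(k-1)\Delta$.

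Next I sum over $T_{n,k}$, grouping by $s$. The number of nonnegative integer $(k-1)$-tuples summing to $s$ is $\binom{s+k-2}{k-2}$, so it suffices to bound $\sum_{s=0}^{n} \binom{s+k-2}{k-2}\min\{\lfloor D \rfloor,\, n-s\}$. I decompose $\min\{\lfloor D\rfloor, n-s\} = (n-s) - \max\{0,\, (n-\lfloor D\rfloor) - s\}$, split the sum, and evaluate each of the two pieces using the auxiliary identity $\sum_{s=0}^{m}\binom{s+k-2}{k-2}(m-s) = \binom{m+k-1}{k}$ (which follows from rewriting $(m-s) = \sum_{j=s+1}^{m} 1$, swapping sums, and applying hockey-stick twice). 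Taking $m = n$ for the first piece and $m = n-\lfloor D \rfloor$ for the second yields
$$\Phi_n \;\le\; \binom{n+k-1}{k} - \binom{n-\lfloor D \rfloor + k - 1}{k}.$$

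Finally I absorb the floor. The hypothesis $\Delta \le n/(2(k-1))$ gives $D \le n$, so $n-\lfloor D\rfloor + k-1 \ge n - D + k-1 \ge k-1$. On the range $[k-1,\infty)$ the generalized binomial $y \mapsto y(y-1)\cdots(y-k+1)/k!$ is nonnegative and nondecreasing, so $\binom{n-\lfloor D \rfloor+k-1}{k} \ge \binom{n-D+k-1}{k}$, which closes the bound. The only conceptually delicate step is the pointwise bound in the first paragraph; the key observation is that it depends only on $s=\sum_j x_j$, which is exactly what lets the double sum collapse via hockey-stick into a clean expression in $n-D+k-1$. The rest is essentially bookkeeping plus the minor integrality fix.
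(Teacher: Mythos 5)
Your proof is correct and takes essentially the same approach as the paper: a pointwise bound $\phi_n(x_1,\ldots,x_{k-1})\le\min\{2(k-1)\Delta,\;n-\sum_j x_j\}$ derived from property (3) of \Cref{prop.rect-k-counter}, followed by summing over $T_{n,k}$ to obtain the difference of two binomial coefficients. The only differences are cosmetic: you apply integrality to the total (using $\lfloor 2(k-1)\Delta\rfloor$) where the paper floors per coordinate (using $(k-1)\lfloor 2\Delta\rfloor$), and you spell out the hockey-stick evaluation of the sum and the monotonicity of the generalized binomial $y\mapsto\binom{y}{k}$ on $[k-1,\infty)$, steps the paper compresses into a single stated identity and a final inequality.
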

\begin{Proof}
	Note that for any $R=[a_1,b_1]\times\cs\times[a_{k-1},b_{k-1}]\in\mathcal{R}_n$ and $(x_1,\cs,x_{k-1})\in R\cap T_{n,k}$, we have
	\begin{align*}
		&\phantomeq\min\{b_1+\cs+b_{k-1},n\}-x_1-\cs-x_{k-1}\\
		&\<=\min\{b_1+\cs+b_{k-1}-a_1-\cs-a_{k-1},n-x_1-\cs-x_{k-1}\}\\
		&\<=\min\{(k-1)\lf2\Delta\rf,n-x_1-\cs-x_{k-1}\}.
	\end{align*}
	Here the last step is because by \Cref{prop.rect-k-counter}, for any $j\in[k-1]$ we have $b_j-a_j\<=2\Delta$, and hence $b_j-a_j\<=\lf2\Delta\rf$. So 
	$$\phi_n(x_1,\cs,x_{k-1})\<=\min\{(k-1)\lf2\Delta\rf,n-x_1-\cs-x_{k-1}\}$$
	for any $(x_1,\cs,x_{k-1})\in T_{n,k}$. Therefore
	\begin{align*}
		\Phi_n&\<=\sum_{\substack{(x_1,\cs,x_{k-1})\in\mathbb{N}^{k-1}\\x_1+\cs+x_{k-1}\<=n}}\min\{(k-1)\lf2\Delta\rf,n-x_1-\cs-x_{k-1}\}\\
		&=\C{n+k-1}{k}-\C{n-(k-1)\lf2\Delta\rf+k-1}{k}\\
		&\<=\C{n+k-1}{k}-\C{n-2(k-1)\Delta+k-1}{k}.
	\end{align*}
\end{Proof}

\subsection{Proof of Theorem \ref{thm.k-counter-lb}}

\begin{proof}[Proof of \Cref{thm.k-counter-lb}]~\par
	By \Cref{lem.k-counter.growth} we have 
	\begin{align*}
		\Phi_n&\>=\sum_{t=0}^{n-1}\max\L0,\C{t+k-1}{k-1}-w\R\\
		&=\sum_{t=m+1}^{n-1}\(\C{t+k-1}{k-1}-w\)\\
		&=\C{n+k-1}k-\C{m+k}k-(n-m-1)w.
	\end{align*}\par
	Combine with \Cref{lem.k-counter.final} we get
	$$\C{n+k-1}{k}-\C{n-2(k-1)\Delta+k-1}{k}\>=\C{n+k-1}k-\C{m+k}k-(n-m-1)w.$$
	Thus
	$$\C{m+k}{k}+(n-m-1)w\>=\C{n-2(k-1)\Delta+k-1}{k}.$$
\end{proof}

\subsection{Corollaries in Various Regimes}\label{subsec.k-counter-cor}

We give some corollaries of \Cref{thm.k-counter-lb} in various regimes of the parameters with more intuitive statements. \Cref{cor.k-counter-lb.small-w} gives a bound on the smallest additive error we can achieve when the width of the ROBP is small, and \Cref{cor.k-counter-lb.small-err} gives a bound on the width of the ROBP to achieve a certain small error.

\begin{corollary}\label{cor.k-counter-lb.standard}
	For any integers $k\>=2$, $n\>=1$, $w\>=1$, if there exists a length-$n$ width-$w$ ROBP $P$ over alphabet $[k]$ such that $P$ computes $\approxcountc k\[n,\f{n}{3(k-1)}\]$, then $w\>=\Omega(n/k)^{k-1}$.
\end{corollary}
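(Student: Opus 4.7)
The plan is to derive this corollary directly from Theorem \ref{thm.k-counter-lb} by specializing $\Delta = n/(3(k-1))$, which satisfies $\Delta \leq n/(2(k-1))$. With this choice, $2(k-1)\Delta = 2n/3$, so the theorem's right-hand side becomes $\binom{n/3 + k-1}{k}$, and the conclusion reads $\binom{m+k}{k} + (n-m-1)w \geq \binom{n/3+k-1}{k}$.

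The first step is to upper-bound the left-hand side by $nw$: using the defining property $\binom{m+k-1}{k-1} \leq w$ of $m$, we have $\binom{m+k}{k} = \frac{m+k}{k}\binom{m+k-1}{k-1} \leq \frac{m+k}{k} w$, and adding $(n-m-1)w$ simplifies to $\left(n - (k-1)m/k\right) w \leq n w$. So the theorem collapses to the clean inequality $n w \geq \binom{n/3+k-1}{k}$. The second step is to lower-bound the binomial: the elementary inequality $\binom{N}{k} \geq (N/k)^k$ (valid for $N \geq k$, i.e., whenever $n \geq 3$) gives $\binom{n/3 + k-1}{k} \geq (n/(3k))^k$, hence $w \geq n^{k-1}/(3k)^k$. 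A short check using $6^{k-1} \geq 3k$ for all $k \geq 2$ then rewrites this as $w \geq (n/(18k))^{k-1}$, which is the desired $\Omega(n/k)^{k-1}$ bound with absolute constant $1/18$; the regime $n \leq 18k$ where this becomes vacuous is handled trivially by $w \geq 1$.

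No real obstacle is anticipated—the proof is a short arithmetic calculation riding entirely on the already-established Theorem \ref{thm.k-counter-lb}. The only mildly delicate point is ensuring the implicit constant inside $\Omega(\cdot)$ is genuinely absolute (independent of $k$, since $k$ appears in the exponent), which is precisely what the explicit comparison $6^{k-1} \geq 3k$ secures at the end.
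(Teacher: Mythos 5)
Your proof is correct and follows essentially the same route as the paper, which first proves an intermediate bound (Corollary \ref{cor.k-counter-lb.small-w}) showing $\Delta \geq \tfrac{n}{2(k-1)} - \tfrac{\sqrt[k]{k!}}{2(k-1)}\sqrt[k]{nw}$ via exactly your reduction of the theorem's left-hand side to $nw$, then specializes $\Delta = n/(3(k-1))$. You simply unroll the intermediate corollary and use the marginally different binomial estimate $\binom{M}{k}\geq(M/k)^k$ in place of the paper's $\binom{N+k-1}{k}\geq N^k/k!$; both give the same $\Omega(n/k)^{k-1}$ conclusion, and your explicit remark about the absolute constant (via $6^{k-1}\geq 3k$) is a nice touch that the paper leaves implicit.
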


\begin{corollary}\label{cor.k-counter-lb.small-w}
	For any integers $k\>=2$, $n\>=1$, $w\>=1$, and real number $0\<=\Delta\<=\f{n}{2(k-1)}$, if there exists a length-$n$ width-$w$ ROBP $P$ over alphabet $[k]$ such that $P$ computes $\approxcountc k[n,\Delta]$, then 
	$$\Delta\>=\f n{2(k-1)}-\f{\sqrt[k]{k!}}{2(k-1)}\c\sqrt[k]{nw}.$$
\end{corollary}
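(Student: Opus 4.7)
The plan is to derive this corollary directly from \Cref{thm.k-counter-lb} by a routine arithmetic reduction. The conclusion of that theorem can be rewritten as
$$\binom{n-2(k-1)\Delta + k - 1}{k} \leq \binom{m+k}{k} + (n-m-1)w,$$
where $m$ is the largest integer with $m \leq n-1$ and $\binom{m+k-1}{k-1}\leq w$. My strategy is: (a) upper bound the right side by a clean expression in $n$ and $w$ alone, (b) lower bound the left side by a power of $N := n - 2(k-1)\Delta$, and (c) invert to get the bound on $\Delta$.

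For step (a), the key observation is the identity $\binom{m+k}{k} = \frac{m+k}{k}\binom{m+k-1}{k-1}$, which combined with the defining inequality $\binom{m+k-1}{k-1}\leq w$ gives $\binom{m+k}{k} \leq \frac{m+k}{k}w$. A short computation then collapses the whole right side:
$$\binom{m+k}{k} + (n-m-1)w \leq \left(\frac{m+k}{k} + n - m - 1\right)w = \left(n - \frac{m(k-1)}{k}\right)w \leq nw,$$
the last step using $m \geq 0$.

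For step (b), the hypothesis $\Delta \leq \frac{n}{2(k-1)}$ ensures $N \geq 0$, so I would expand
$$\binom{N+k-1}{k} = \frac{N(N+1)\cdots(N+k-1)}{k!}$$
and bound each of the $k$ factors below by $N$, yielding $\binom{N+k-1}{k} \geq N^{k}/k!$. Chaining with step (a) produces $N^{k} \leq k!\cdot nw$, i.e., $N \leq \sqrt[k]{k!\cdot nw}$. Rearranging $\Delta = (n-N)/(2(k-1))$ then gives exactly the claimed bound.

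There is essentially no obstacle here; the whole argument is a short arithmetic exercise once one spots the telescoping simplification that bounds $\binom{m+k}{k}+(n-m-1)w$ by $nw$. The only minor subtlety is that $\binom{N+k-1}{k}$ must be interpreted as the polynomial $N(N+1)\cdots(N+k-1)/k!$ to accommodate non-integer $N$, exactly as in \Cref{lem.k-counter.final}.
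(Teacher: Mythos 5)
Your proposal is correct and follows essentially the same route as the paper's proof: both invoke \Cref{thm.k-counter-lb}, collapse the right-hand side to $nw$ via the identity $\binom{m+k}{k}=\frac{m+k}{k}\binom{m+k-1}{k-1}\leq\frac{m+k}{k}w$, and lower bound the left-hand side by $(n-2(k-1)\Delta)^k/k!$ before rearranging. No gaps; the non-integer-argument interpretation of the binomial coefficient you note is exactly the convention the paper uses.
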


\begin{corollary}\label{cor.k-counter-lb.small-err}
	For any integers $k\>=2$, $n\>=10k$, $w\>=1$, and real number $1\<=\Delta\<=\f{n}{10(k-1)}$, if~there\\\spreadwords{exists a length-$n$ width-$w$ ROBP $P$ over alphabet $[k]$ such that $P$ computes $\approxcountc k[n,\Delta]$,}\\then
	$$w\>=\C{n+k-1-2(k-1)\Delta-O\(\sqrt{n\Delta}\)}{k-1}.$$
\end{corollary}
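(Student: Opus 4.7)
The plan is to derive the corollary directly from \Cref{thm.k-counter-lb}. Set $N:=n-2(k-1)\Delta$, which lies in $[9n/10,n]$ by the hypothesis $\Delta\<=\f{n}{10(k-1)}$, and let $m$ be as in \Cref{thm.k-counter-lb}, so that $w\>=\C{m+k-1}{k-1}$ and
$$\C{m+k}{k}+(n-m-1)w\>=\C{N+k-1}{k}.\qquad(\star)$$
The goal is to conclude $m\>=N-O(\sqrt{n\Delta})$, which immediately yields the claimed bound $w\>=\C{m+k-1}{k-1}\>=\C{N+k-1-O(\sqrt{n\Delta})}{k-1}$.

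If $m=n-1$ then $w\>=\C{n+k-2}{k-1}$ already exceeds the corollary's bound, so assume $m\<=n-2$; by maximality of $m$ we then have the matching upper bound $w<\C{m+k}{k-1}$. Substituting into $(\star)$ reduces the theorem to a pure inequality in $m$,
$$g(m):=\C{m+k}{k}+(n-m-1)\C{m+k}{k-1}>\C{N+k-1}{k}.$$
A direct Pascal-identity calculation yields $g(m+1)-g(m)=(n-m-2)\C{m+k}{k-2}\>=0$ for $m\<=n-2$, so $g$ is nondecreasing on $\{0,\cs,n-2\}$. Hence, to prove $m\>=N-s$ for a target gap $s$ it suffices to verify $g(N-s)\<=\C{N+k-1}{k}$.

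Using the hockey-stick identity $\C{N+k-1}{k}-\C{N-s+k}{k}=\sum_{i=N-s+1}^{N-1}\C{i+k-1}{k-1}$, the condition $g(N-s)\<=\C{N+k-1}{k}$ is equivalent to
$$\(2(k-1)\Delta+s-1\)\C{N-s+k}{k-1}\<=\sum_{i=N-s+1}^{N-1}\C{i+k-1}{k-1}.\qquad(\star\star)$$
The trivial lower bound $\C{i+k-1}{k-1}\>=\C{N-s+k}{k-1}$ only reproduces the linear-in-$s$ term on the right and is insufficient to absorb the $2(k-1)\Delta$ on the left. The key move is to use the exact ratio $\C{i+k-1}{k-1}/\C{N-s+k}{k-1}=\prod_{\ell=1}^{i-N+s-1}(1+\f{k-1}{N-s+\ell+1})\>=1+\Omega((k-1)(i-N+s-1)/N)$, valid in the regime $s\ll N$ ensured by the hypotheses $n\>=10k$ and $\Delta\<=\f{n}{10(k-1)}$. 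Summing this arithmetic progression in $i$ lower-bounds the right-hand side of $(\star\star)$ by $\C{N-s+k}{k-1}\cdot[(s-1)+\Omega((k-1)s^2/N)]$, so $(\star\star)$ is satisfied provided $s^2=\Omega(N\Delta)$, i.e.\ provided $s=C\sqrt{n\Delta}$ for a sufficiently large absolute constant $C$.

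This choice of $s$ (which is indeed $\ll N$ under the hypotheses) then yields $m\>=N-s$ by monotonicity, and therefore $w\>=\C{m+k-1}{k-1}\>=\C{n+k-1-2(k-1)\Delta-O(\sqrt{n\Delta})}{k-1}$, as claimed. The main obstacle is the sharpness required in $(\star\star)$: a linear-in-$s$ bound on the sum on the right is short by an additive $2(k-1)\Delta$, so one must extract the quadratic-in-$s$ gain from the exact binomial ratios, and this quadratic gain is precisely what converts the $\Delta$ error into the $\sqrt{n\Delta}$ slack appearing in the corollary.
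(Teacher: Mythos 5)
Your proposal is correct in spirit but takes a genuinely different route from the paper. The paper starts from the same inequality $\binom{m+k}{k}+(n-m-1)w\>=\binom{N+k-1}{k}$ with $w<\binom{m+k}{k-1}$, then \emph{divides both sides by} $\binom{m+k}{k}$, lower-bounds the ratio $\binom{N+k-1}{k}/\binom{m+k}{k}$ by $\bigl(\tfrac{N+k-1}{m+k}\bigr)^k$, and Taylor-expands $(1+x)^k\>=1+kx+\tfrac{k(k-1)}{2}x^2$; the quadratic term is what produces the $\sqrt{n\Delta}$ slack. You instead prove the auxiliary fact that $g(m):=\binom{m+k}{k}+(n-m-1)\binom{m+k}{k-1}$ is nondecreasing (via the clean identity $g(m+1)-g(m)=(n-m-2)\binom{m+k}{k-2}$), reduce to verifying $g(N-s)\<=\binom{N+k-1}{k}$, and use the hockey-stick identity plus a lower bound on consecutive binomial ratios. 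The two approaches extract the same second-order gain, but yours avoids the paper's need to establish $m\>=k$ (which the paper does in a separate final paragraph by invoking Corollary~\ref{cor.k-counter-lb.small-w}), while the paper's avoids any case analysis on the size of $s$.

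There are, however, a few genuine gaps in your write-up. First, your parenthetical claim that $s=C\sqrt{n\Delta}$ ``is indeed $\ll N$ under the hypotheses'' is false: for $k=2$ and $\Delta$ near $n/10$, one has $N\<=n$ and $s=C\sqrt{n\c n/10}=Cn/\sqrt{10}$, so already for $C\>=2$ or so one gets $s>N$ and $N-s<0$. In that regime $g(N-s)$ is not defined, the hockey-stick decomposition breaks down, and the denominators $N-s+\ell+1$ in your ratio product can become nonpositive. The argument can be patched — when $N-s<0$ the target conclusion $m>N-s$ is trivial since $m\>=0$, and the binomial on the right-hand side of the corollary is then vacuous — but as written the claim is incorrect and the case split is missing. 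Second, $N=n-2(k-1)\Delta$ need not be an integer, so $g(N-s)$ is not literally evaluable and the hockey-stick identity does not literally apply; one should work with $\lceil N-s\rceil$ (or similar) and use monotonicity of $\binom{x}{k}$ in real $x\>=k-1$ to bridge the gap. Both issues are routine to repair, but they need to be addressed before the argument is airtight.
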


\begin{proof}[Proof of \Cref{cor.k-counter-lb.small-w}]~\par
	Let $m$ be as stated in \Cref{thm.k-counter-lb}.\par
	By \Cref{thm.k-counter-lb} we have
	\begin{align*}
		\C{n-2(k-1)\Delta+k-1}{k}&\<=\C{m+k}{k}+(n-m-1)w\\
		&\<=\(\f{\C{m+k}{k}}{\C{m+k-1}{k-1}}+n-m-1\)w\\
		&=\(\f{m+k}k+n-m-1\)w\\
		&\<=nw.
	\end{align*}
	Also note that $\C{n-2(k-1)\Delta+k-1}{k}\>=(n-2(k-1)\Delta)^k/(k!)$, so $nw\>=(n-2(k-1)\Delta)^k/(k!)$ and hence
	$$\Delta\>=\f n{2(k-1)}-\f{\sqrt[k]{k!}}{2(k-1)}\c\sqrt[k]{nw}.$$\par
\end{proof}

\begin{proof}[Proof of \Cref{cor.k-counter-lb.standard}]~\par
	Take $\Delta:=\f{n}{3(k-1)}$ in \Cref{cor.k-counter-lb.small-w} we get $\f n{3(k-1)}\>=\f n{2(k-1)}-\f{\sqrt[k]{k!}}{2(k-1)}\c\sqrt[k]{nw}$, and therefore $w\>=\f{n^{k-1}}{k!\c 3^k}\>=\Omega(n/k)^{k-1}$.	
\end{proof}

\begin{proof}[Proof of \Cref{cor.k-counter-lb.small-err}]~\par
	Let $m$ be as stated in \Cref{thm.k-counter-lb}. Since $w\>=\C{m+k-1}{k-1}$, we only need to prove that $m+k-1\>=n+k-1-2(k-1)\Delta-O\(\sqrt{n\Delta}\)$, or equivalently, $n-m-2(k-1)\Delta\<=O\(\sqrt{n\Delta}\)$. If $m\>=n-2(k-1)\Delta-10$ then we are done. Below we only consider the case $m<n-2(k-1)\Delta-10$.\par
	Since $\C{m+k-1}{k-1}\<=w<\C{m+k}{k-1}$, \Cref{thm.k-counter-lb} implies 
	$$\C{m+k}{k}+(n-m-1)\C{m+k}{k-1}\>=\C{n-2(k-1)\Delta+k-1}{k},$$
	divide both sides by $\C{m+k}{k}$, we get
	\begin{align*}
		1+\f{k(n-m-1)}{m+1}&\>=\f{\C{n-2(k-1)\Delta+k-1}{k}}{\C{m+k}{k}}
		\>=\(\f{n-2(k-1)\Delta+k-1}{m+k}\)^k\\
		&\>=1+k\c\f{n-m-2(k-1)\Delta-1}{m+k}+\f{k(k-1)}2\c\(\f{n-m-2(k-1)\Delta-1}{m+k}\)^2.
	\end{align*}
	Thus
	\begin{align*}
		\(n-m-2(k-1)\Delta-1\)^2&\<=(m+k)^2\c\f2{k-1}\c\(\f{n-m-1}{m+1}-\f{n-m-2(k-1)\Delta-1}{m+k}\)\\
		&=\f2{k-1}\c\(\f{(k-1)(n-m+1)(m+k)}{m+1}+2(m+k)(k-1)\Delta\)\\
		&\<=\f{4(n+1)(m+k)}{m+1}+4n\Delta.
	\end{align*}\par
	If $m\>=k$, then $\(n-m-2(k-1)\Delta-1\)^2\<=4\(\f{(n+1)(m+k)}{m+1}+n\Delta\)\<=O(n\Delta)$, and thus we have $n-m-2(k-1)\Delta\<=O\(\sqrt{n\Delta}\)$.\par
	So finally we only need to prove that $m\>=k$. By \Cref{cor.k-counter-lb.small-w} we have $\f{n}{10(k-1)}\>=\Delta\>=\f n{2(k-1)}-\f{\sqrt[k]{k!}}{2(k-1)}\c\sqrt[k]{nw}$ and therefore $w\>=\f{4^k\c n^{k-1}}{5^k\c k!}\>=\f{4^k\c(10(k-1))^{k-1}}{5^k\c k!}\>=\f{8^k}{20}\>=\C{2k-1}{k-1}$, so $m\>=k$.
\end{proof}

\section{Lower Bound for \tops{$k$}{k}-parallel Approximate Counting}\label{sec.k-par-lb}

In this section, we prove the lower bound for the $k$-parallel case. 

\begin{theorem}\label{thm.k-par-lb}
	For any integers $k\>=1$, $n\>=3k$, $w\>=1$, if there exists a length-$n$ width-$w$ ROBP $P$ over alphabet $\{0,1\}^k$ such that $P$ computes $\approxcountp k[n,n/3]$, then $w\>=n^{\Omega(k)}$.
\end{theorem}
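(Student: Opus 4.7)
The plan is to extend the rectangle-labeling and potential-function framework developed for \tops{$k$}{k}-counter approximate counting in \Cref{sec.k-counter-lb} to the k-parallel setting. First, I would label each vertex $v \in V_t$ of a length-$n$ k-parallel ROBP with a $k$-dimensional rectangle $R(v) = [a_1, b_1] \times \cdots \times [a_k, b_k]$, where $a_j, b_j$ are the minimum and maximum number of $1$'s in coordinate $j$ over length-$t$ prefixes reaching $v$. Because each edge is now labeled by some $\sigma \in \{0,1\}^k$, the transition property gains strength: for every $R \in \mathcal{R}_t$ and every $\sigma \in \{0,1\}^k$, the shift $R + \sigma$ is contained in some $R' \in \mathcal{R}_{t+1}$. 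In particular this includes $R + \b{e}_j$ for each $j \in [k]$, so every shift used in the k-counter argument is still available.

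Next, I would define a potential function $\Phi_t = \sum_{x \in T_t} \phi_t(x)$ with $\phi_t$ of the same shape as \Cref{eq.k-counter-phi} and $T_t \subseteq \mathbb{N}^k$ a carefully chosen index set. Since $R + \b{e}_j \subseteq R'$ for some $R' \in \mathcal{R}_{t+1}$ and every $j \in [k]$, the argument of \Cref{lem.k-counter.growth} carries over verbatim: whenever $x \in T_t$ is not the lower-left corner of any rectangle in $\mathcal{R}_t$, there exists $j$ with $x_j > a_j$ and hence $\phi_{t+1}(x) \geq \phi_t(x) + 1$, yielding $\Phi_{t+1} - \Phi_t \geq |T_t| - w$. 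The final step is to use $b_j - a_j \leq 2\Delta = 2n/3$ at layer $n$ to upper bound $\Phi_n$, and combine with the growth bound so that $w < n^{\Omega(k)}$ would contradict the two inequalities.

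The main obstacle is the loss of the sum-of-counts constraint that drives the k-counter proof. In \Cref{lem.k-counter.final}, $b_1 + \cdots + b_{k-1} \leq n$ is what allows the sharp per-point estimate $\phi_n(x) \leq \min\{2(k-1)\Delta,\, n - \sum_j x_j\}$, which tapers off near the boundary of the simplex and makes the final sum manageable. In the k-parallel setting $\sum_j b_j$ can be as large as $kn$, so only the loose $\phi_n(x) \leq 2k\Delta$ is available. With the naive domain $T_t = \{0,\ldots,t\}^k$, the resulting upper bound $\sim (n+1)^k \cdot 2kn/3$ swallows the $\sim n^{k+1}/(k+1)$-scale growth, collapsing the argument to a trivial lower bound. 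Extracting $n^{\Omega(k)}$ therefore requires a carefully restricted choice of $T_t$ (for example, a suitable sub-box or a slice of the simplex) or a refined potential exploiting additional structure such as the per-coordinate cap $b_j - x_j \leq n - x_j$; this is precisely the reason, as the paper remarks before \Cref{intro.open}, that the technique yields only $n^{\Omega(k)}$ rather than the conjectured $\Omega(n)^k$.
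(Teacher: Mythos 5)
Your setup---the $k$-dimensional rectangle labeling with the strengthened transition property, and a potential of the same shape as in the $k$-counter case---matches the paper's, and your diagnosis of the obstacle is accurate. But the proposal stops exactly where the actual proof has to start: the only concrete growth mechanism you give is ``gain $1$ per non-corner point per layer,'' and that mechanism cannot yield the theorem no matter how the index set $T_t$ is chosen. With per-point growth $1$ over at most $n$ layers and the only available endgame bound $\phi_n\<=2k\Delta=2kn/3$ per point, the two inequalities read $n\(|T|-w\)\<=\Phi_n\<=|T|\cdot\f{2kn}{3}$, i.e.\ $w\>=|T|\(1-\f{2k}{3}\)$, which is vacuous for every $k\>=2$; restricting $T_t$ to a sub-box or a slice of the simplex does not change this arithmetic. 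So ``a carefully restricted choice of $T_t$'' is not the missing ingredient. The missing idea is to make each good point gain $\Omega(k)$ per layer, which is where the stronger transition property you correctly noted in your first paragraph---that $R+\b{z}$ lands inside a rectangle of $\mathcal{R}_{t+1}$ for \emph{every} $\b{z}\in\{0,1\}^k$, not just for the unit vectors---must actually be used; your growth lemma only ever invokes the shifts $R+\b{e}_j$.

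Concretely, the paper (\Cref{lem.k-par.growth}) runs the potential over the sub-box $\{0,1,\cs,\lf n/10\rf\}^k$ starting from layer $\lf n/10\rf$, and shows that if $(x_1,\cs,x_k)$ differs from the lower corner $(a_1,\cs,a_k)$ of the maximizing rectangle in at least $\lF 9k/10\rF$ coordinates, then shifting that rectangle by the $\{0,1\}^k$-indicator of those coordinates gives $\phi_{t+1}(x)\>=\phi_t(x)+\lF 9k/10\rF$. The exceptional points are those agreeing with the corner of \emph{some} rectangle on $\>=\lf k/10\rf+1$ coordinates---points on low-dimensional faces, not just corners---and there are at most $w\cdot2^k\cdot\(\lf n/10\rf+1\)^{\lF 9k/10\rF-1}$ of them, a vanishing fraction of $\(\lf n/10\rf+1\)^k$ unless $w\>=n^{\Omega(k)}$. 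Since $\f{9}{10}\cdot\f{9}{10}>\f{2}{3}$, the accumulated growth of roughly $0.81kn$ per good point beats the cap $\f{2kn}{3}$, forcing the exceptional set to occupy a constant fraction of the domain and hence $w\>=\Omega(1)\cdot\(\lf n/10\rf+1\)^{\lf k/10\rf+1}/2^k=n^{\Omega(k)}$. This Hamming-distance-based growth argument, together with the counting of points near low-dimensional boundaries, is the substance of the proof and is absent from your proposal; it is also the true source of the $n^{\Omega(k)}$-versus-$\Omega(n)^k$ loss you mention at the end.
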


We should mention that by directly applying the proof in \Cref{sec.k-counter-lb}, we can only show the hardness of $\approxcountp k[n,n/(3k)]$. In the proof of \Cref{thm.k-counter-lb}, we only show that most $\phi$'s grow by at least $1$ (when moving to the next layer), and thus finally we can only show there exists some point in the last layer with the value of $\phi$ larger than $\Omega(n)$. However, to prove \Cref{thm.k-par-lb}, we need to show there exists some point in the last layer with the value of $\phi$ larger than $\Omega(nk)$. To fix this issue, we need to argue that most $\phi$'s grow by at least $\Omega(k)$, and we need to exclude not only the points at the corners of the rectangles, but also the points at the low-dimensional boundaries of the rectangles. (i.e., points in the rectangle such that many coordinates have reached the boundary.)\par
Our proof of \Cref{thm.k-par-lb} is similar to the proof of \Cref{thm.k-counter-lb}.

\subsection{Rectangle Labeling}

\begin{definition}{\bf (Rectangle Labeling of the $k$-parallel Case)}\label{def.rect-k-par}
	For a length-$n$ ROBP $P$ over alphabet $\{0,1\}^k$ that attempts to compute $\approxcountp k[n,\Delta]$, label each vertex $v\in V_t$ $(0\<=t\<=n)$ with a $k$-dimensional rectangle $R(v)$. $R(v)$ is defined as $[a_1,b_1]\times\cs\times[a_k,b_k]$, here for each $j\in[k]$, $a_j,b_j$ are the minimal and maximal elements in 
	$$\L c\in\mathbb{N}\colon\text{$\exists(x_1,\cs,x_t)\in\(\{0,1\}^k\)^t$, $(x_1,\cs,x_t)$ reaches $v$ and $\#\{i\in[t]\colon (x_i)_j=1\}=c$}\R.$$
	i.e., $a_j,b_j$ are the minimal/maximal possible number of $1$'s in the $j$th coordinate of the first $t$ input words to reach $v$ in layer $V_t$.\par
	For each $0\<=t\<=n$, let $\mathcal{R}_t:=\{R(v)\colon v\in V_t\}$ be the collection of all rectangle labels in layer $V_t$.
\end{definition}

\begin{proposition}{\bf (Properties of Rectangle Labeling of the $k$-parallel Case)}\label{prop.rect-k-par}
	For a length-$n$ ROBP $P$ over alphabet $\{0,1\}^k$ that attempts to compute $\approxcountp k[n,\Delta]$, we have:
	\begin{enumerate}[\indent(1)]
		\item $\mathcal{R}_0=\{[0,0]^k\}$. For each $0\<=t\<=n-1$, for each $R\in\mathcal{R}_t$, each of $R+\b{z}$ ($\b{z}\in\{0,1\}^k$) is contained in some rectangle in $\mathcal{R}_{t+1}$.
		\item Suppose $w$ is the width of $P$, then $|\mathcal{R}_t|\<=w$ for all $0\<=t\<=n$.
		\item Suppose $P$ computes $\approxcountp k[n,\Delta]$, then for any rectangle $[a_1,b_1]\times\cs\times[a_k,b_k]\in\mathcal{R}_n$, we have $b_j-a_j\<=2\Delta$ for all $j\in[k]$.
	\end{enumerate}
\end{proposition}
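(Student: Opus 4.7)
The plan is to mirror the proof of Proposition \ref{prop.rect-k-counter} essentially line-for-line, since the only structural difference between the $k$-counter and $k$-parallel settings is that the alphabet enlarges from $[k]$ to $\{0,1\}^k$ and the single-coordinate shifts $\b{e}_1,\cs,\b{e}_{k-1}$ are replaced by the full collection of shift vectors $\b{z}\in\{0,1\}^k$. All three items are immediate consequences of \Cref{def.rect-k-par} combined with the ROBP structure, so no new machinery beyond coordinate-wise bookkeeping is needed.

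For item (1), I would first dispose of the base case: $\mathcal{R}_0=\{[0,0]^k\}$ because $V_0=\{v_{\mrm{start}}\}$ is reached only by the empty prefix, which has $0$ ones in every coordinate. For the inductive step, fix $R=[a_1,b_1]\times\cs\times[a_k,b_k]=R(v)$ with $v\in V_t$, fix $\b{z}=(z_1,\cs,z_k)\in\{0,1\}^k$, and let $v'$ be the endpoint of the outgoing edge from $v$ labeled $\b{z}$. For each coordinate $j\in[k]$, I would choose length-$t$ prefixes $\b{p}^{(j)},\b{q}^{(j)}\in(\{0,1\}^k)^t$ that both reach $v$ and achieve the extremal counts $a_j,b_j$ of $1$'s in the $j$-th coordinate. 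Appending $\b{z}$ to each yields length-$(t+1)$ prefixes reaching $v'$ whose $j$-th coordinate contains $a_j+z_j$ and $b_j+z_j$ many $1$'s; hence if $R(v')=[a_1',b_1']\times\cs\times[a_k',b_k']$ then $a_j'\<=a_j+z_j$ and $b_j'\>=b_j+z_j$ for every $j$, which is exactly $R+\b{z}\subseteq R(v')$.

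Item (2) is immediate since $R$ is a function from $V_t$ to rectangles, giving $|\mathcal{R}_t|\<=|V_t|\<=w$. For item (3), take any $R=[a_1,b_1]\times\cs\times[a_k,b_k]=R(v)$ with $v\in V_n$ and let $(\hat{S}_1,\cs,\hat{S}_k)=\mrm{output}(v)$; for each $j$ the definition of $R(v)$ supplies two inputs reaching $v$ whose $j$-th coordinates contain $a_j$ and $b_j$ many $1$'s respectively, and the assumed correctness of $P$ forces $|\hat{S}_j-a_j|\<=\Delta$ and $|\hat{S}_j-b_j|\<=\Delta$, so $b_j-a_j\<=2\Delta$ by the triangle inequality.

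The one point worth flagging is that in item (1) the witnessing prefixes $\b{p}^{(j)},\b{q}^{(j)}$ may differ across coordinates $j$; but since the containment $R+\b{z}\subseteq R(v')$ need only be checked coordinate-by-coordinate, this is harmless. I do not expect any real obstacle here: this proposition is a structural bookkeeping lemma, and the actual content of the $k$-parallel lower bound---in particular the stronger per-coordinate growth of $\phi$ hinted at in the preceding paragraph of the paper---will enter only in the subsequent potential-function analysis.
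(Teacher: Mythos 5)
Your proposal is correct and mirrors the paper's own proof of \Cref{prop.rect-k-par} essentially step-for-step: the same base case, the same append-$\b{z}$ argument with per-coordinate witness prefixes for item (1), the same trivial counting for (2), and the same triangle-inequality argument for (3). The one subtlety you flag---that the witnessing prefixes may vary across coordinates $j$ but this is harmless since containment is checked coordinate-wise---is exactly what the paper's proof implicitly relies on as well.
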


\begin{proof}[Proof of \Cref{prop.rect-k-par}]~\par
	(1): $\mathcal{R}_0=\{[0,0]^k\}$ is because $V_0$ consists of a single start vertex which has label $[0,0]^k$ by definition. \par
	For any $R=[a_1,b_1]\times\cs\times[a_k,b_k]\in\mathcal{R}_t$, suppose $R=R(v)$ for some $v\in V_t$. Suppose the outgoing edges from $v$ with label $\b{z}\in\{0,1\}^k$ points to $v_{\b{z}}\in V_{t+1}$. For each $j\in[k]$, suppose $(p_1^{(j)},\cs,p_t^{(j)}),(q_1^{(j)},\cs,q_t^{(j)})\in\(\{0,1\}^k\)^t$ are two possible prefixes of the input that reaches $v$ and contains $a_j,b_j$ many $1$'s in the $j$th coordinate respectively. Then for each $\b{z}\in\{0,1\}^k$, for each $j\in[k]$, $(p_1^{(j)},\cs,p_t^{(j)},\b{z}),(q_1^{(j)},\cs,q_t^{(j)},\b{z})$ reaches $v_{\b{z}}$ and contains $a_j+\b{z}_j,b_j+\b{z}_j$ many $1$'s in the $j$th coordinate respectively. So for each $\b{z}\in\{0,1\}^k$, we have $R+\b{z}\subseteq R(v_{\b{z}})$. \par
	(2): Observe that $|\mathcal{R}_t|\<=|V_t|\<=w$.\par
	(3): For any $R=[a_1,b_1]\times\cs\times[a_k,b_k]\in\mathcal{R}_n$, suppose $R=R(v)$ for some $v\in V_n$. Suppose upon reaching $v$, $P$ outputs $(c_1,\cs,c_k)\in\mathbb{R}^k$. For each $j\in[k]$, since there exists inputs $\b{x},\b{x}'$ that reaches $v$ and contains $a_j,b_j$ many $1$'s in the $j$th coordinate respectively, we have $b_j-c_j\<=\Delta$ and $c_j-a_j\<=\Delta$, and thus $b_j-a_j\<=2\Delta$.
\end{proof}

\subsection{Proof of Theorem \ref{thm.k-par-lb}}

\begin{definition}
	Consider a length-$n$ ROBP $P$ over alphabet $\{0,1\}^k$ that attempts to compute $\approxcountp k[n,n/3]$.\par
	For each $\lf n/10\rf\<=t\<=n$ and $(x_1,\cs,x_k)\in\{0,1,\cs,\lf n/10\rf\}^k$, define
	$$\phi_t(x_1,\cs,x_k):=\max_{\substack{R:=[a_1,b_1]\times\cs\times[a_k,b_k]\colon\\R\in\mathcal{R}_{t},\,(x_1,\cs,x_k)\in R}}\L b_1+\cs+b_k-x_1-\cs-x_k\R.$$
	i.e., $\phi_t(x_1,\cs,x_k)$ is the maximal $l_1$-distance from $(x_1,\cs,x_k)$ to the largest corner of some rectangle in $\mathcal{R}_t$ that contains $(x_1,\cs,x_k)$.\par
	Then for each $\lf n/10\rf\<=t\<=n$, define the total potential of layer $V_t$ by
	$$\Phi_t:=\sum_{(x_1,\cs,x_k)\in\{0,1,\cs,\lf n/10\rf\}^k}\phi_t(x_1,\cs,x_k),$$
	which is the sum of the $\phi$'s.
\end{definition}

\begin{lemma}\label{lem.k-par.growth}
	\spreadwords{Consider a length-$n$ ROBP $P$ over alphabet $\{0,1\}^k$ that attempts to compute} \\$\approxcountp k[n,n/3]$. Let $w$ be the width of $P$, then $\Phi_{\lf n/10\rf}\>=0$, and for any $\lf n/10\rf\<=t\<=n-1$ we have
	$$\Phi_{t+1}-\Phi_t\>=\lF\f{9k}{10}\rF\c\(\(\lf n/10\rf+1\)^k-w\c2^k\c\(\lf n/10\rf+1\)^{\lF\f{9k}{10}\rF-1}\).$$
\end{lemma}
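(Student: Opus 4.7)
The plan is to mirror \Cref{lem.k-counter.growth}, with two essential modifications suited to the $k$-parallel setting. First, \Cref{prop.rect-k-par} tells us that from any $R_0\in\mathcal{R}_t$ one can shift by \emph{any} $\b{z}\in\{0,1\}^k$, so $2^k$ shifts are available rather than only $k$ basis-vector shifts, which means a single layer transition can raise $\phi$ by as much as $k$. Second, the set of ``bad'' points at which we cannot argue the full growth must expand from just the lower corner of each rectangle to its entire collection of high-codimension boundary faces. The base case $\Phi_{\lf n/10\rf}\>=0$ is trivial: every $\phi_t(x_1,\cs,x_k)$ is nonnegative, since any $R\in\mathcal{R}_t$ containing $(x_1,\cs,x_k)$ has $b_j\>=x_j$ coordinatewise, and such an $R$ indeed exists whenever $t\>=\lf n/10\rf$ by feeding the ROBP an input of length $t$ whose $j$th coordinate contains exactly $x_j$ ones.

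For the inductive step, fix $(x_1,\cs,x_k)\in\{0,\cs,\lf n/10\rf\}^k$ and let $R_0=[a_1,b_1]\times\cs\times[a_k,b_k]\in\mathcal{R}_t$ achieve the maximum defining $\phi_t(x_1,\cs,x_k)$. For any $\b{z}\in\{0,1\}^k$ whose support lies in $\{j:x_j>a_j\}$, the point $(x_1,\cs,x_k)$ belongs to $R_0+\b{z}$, so by \Cref{prop.rect-k-par} some $R'\in\mathcal{R}_{t+1}$ contains $R_0+\b{z}$; since the upper corner of $R'$ coordinatewise dominates that of $R_0+\b{z}$, we obtain $\phi_{t+1}(x_1,\cs,x_k)\>=\phi_t(x_1,\cs,x_k)+|\b{z}|$. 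Choosing $\b{z}$ maximally gives
\begin{equation*}
	\phi_{t+1}(x_1,\cs,x_k)-\phi_t(x_1,\cs,x_k)\>=|\{j:x_j>a_j\}|,
\end{equation*}
which is $\>=\lF 9k/10\rF$ exactly when at most $\lf k/10\rf$ coordinates satisfy $x_j=a_j$. Call such a point \emph{good} and the rest \emph{bad}. The choice $\b{z}=\b{0}$ shows $\phi_{t+1}\>=\phi_t$ unconditionally, so every point (in particular every bad one) contributes $\>=0$ to $\Phi_{t+1}-\Phi_t$.

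Finally, bound the number of bad points. Writing $a^R_j$ for the $j$th lower-corner coordinate of $R\in\mathcal{R}_t$, a bad $(x_1,\cs,x_k)$ must admit at least one $R\in\mathcal{R}_t$ (namely any of its max-achievers) with $|\{j:x_j=a^R_j\}|\>=\lf k/10\rf+1$; a union bound over the $|\mathcal{R}_t|\<=w$ rectangles, together with $\lf k/10\rf+\lF 9k/10\rF=k$, gives
\begin{equation*}
	\#\{\text{bad points}\}\<=w\c\sum_{s=\lf k/10\rf+1}^{k}\C ks(\lf n/10\rf+1)^{k-s}\<=w\c 2^k\c(\lf n/10\rf+1)^{\lF 9k/10\rF-1}.
\end{equation*}
Since there are $(\lf n/10\rf+1)^k$ points in total and each good point contributes $\>=\lF 9k/10\rF$ to $\Phi_{t+1}-\Phi_t$, the stated inequality follows. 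The main obstacle is precisely this bad-point count: unlike the $k$-counter proof where only one corner per rectangle is excluded, here we must handle an entire family of codimension-$(\lf k/10\rf+1)$-or-more boundary faces of each rectangle, and the clean form of the bound hinges on absorbing the $\sum_s\C ks$ binomial sum into $2^k$ while still saving one power of $(\lf n/10\rf+1)$ beyond the trivial estimate.
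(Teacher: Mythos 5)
Your proposal is correct and follows essentially the same argument as the paper: monotonicity of $\phi$ via $R_0\subseteq R_1$, a gain of $\>=\lF 9k/10\rF$ by shifting by the indicator $\b{z}$ of $\{j\colon x_j>a_j\}$ unless the point agrees with some rectangle's lower corner on $\>=\lf k/10\rf+1$ coordinates, and a union bound of $w\c2^k\c(\lf n/10\rf+1)^{\lF 9k/10\rF-1}$ on such bad points. Your accounting of bad points via the binomial sum over agreement sizes is just a cosmetic repackaging of the paper's ``choose which coordinates differ'' count, and your explicit remark that every grid point lies in some rectangle once $t\>=\lf n/10\rf$ is a nice (if implicit in the paper) well-definedness check.
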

\begin{Proof}
	$\Phi_{\lf n/10\rf}\>=0$ is because by definition, the value of $\phi$ is always non-negative, and therefore the value of $\Phi$ is always non-negative.\par
	Note that we have
	$$\Phi_{t+1}-\Phi_t=\sum_{(x_1,\cs,x_k)\in\{0,1,\cs,\lf n/10\rf\}^k}\(\phi_{t+1}(x_1,\cs,x_k)-\phi_t(x_1,\cs,x_k)\).$$\par
	Consider any $(x_1,\cs,x_k)\in\{0,1,\cs,\lf n/10\rf\}^k$. Suppose when computing $\phi_t(x_1,\cs,x_k)$, the maximum of $\(b_1+\cs+b_k-x_1-\cs-x_k\)$ is achieved when $R=R_0=[a_1,b_1]\times\cs\times[a_k,b_k]$, then by \Cref{prop.rect-k-par} there exists $R_1\in\mathcal{R}_{t+1}$ such that $R_0\subseteq R_1$. Thus when computing \spreadwords{$\phi_{t+1}(x_1,\cs,x_k)$, by taking $R=R_1$ we can achieve a larger (or equal) value of }\\$\(b_1+\cs+b_k-x_1-\cs-x_k\)$. So $\phi_{t+1}(x_1,\cs,x_k)\>=\phi_t(x_1,\cs,x_k)$.\par
	Furthermore, if $(x_1,\cs,x_k)$ and $(a_1,\cs,a_k)$ differ on at least $\lF\f{9k}{10}\rF$ coordinates, pick $\b{z}\in\{0,1\}^k$ such that
	$$\b{z}_j=\begin{cases}0&\text{if }a_j=x_j\\1&\text{if }a_j<x_j\end{cases}~(j\in[k]).$$
	Then at least $\lF\f{9k}{10}\rF$ coordinates of $\b{z}$ are $1$, and $(x_1,\cs,x_k)\in R_0+\b{z}$. By \Cref{prop.rect-k-counter} there exists $R_2=[a_1',b_1']\times\cs\times[a_k',b_k']\in\mathcal{R}_{t+1}$ such that $(x_1,\cs,x_k)\in R_0+\b{z}\subseteq R_2$. Therefore
	\begin{align*}
		b_1'+\cs+b_k'-x_1-\cs-x_k&\>=\sum_{j=1}^k(b_j+\b{z}_j)-x_1-\cs-x_k\\
		&\>=(b_1+\cs+b_k-x_1-\cs-x_k)+\lF\f{9k}{10}\rF.
	\end{align*}
	So for any $(x_1,\cs,x_k)\in\{0,1,\cs,\lf n/10\rf\}^k$, if there does not exist $[a_1,b_1]\times\cs\times[a_k,b_k]\in\mathcal{R}_t$ such that $(x_1,\cs,x_k)$ and $(a_1,\cs,a_k)$ agree on $\>=\lf k/10\rf+1$ coordinates, then $\phi_{t+1}(x_1,\cs,x_k)\>=\phi_t(x_1,\cs,x_k)+\lF\f{9k}{10}\rF$.\par
	Let's upper bound the number of $(x_1,\cs,x_k)$'s that agree on $\>=\lf k/10\rf+1$ coordinates with $(a_1,\cs,a_k)$ for some $[a_1,b_1]\times\cs\times[a_k,b_k]\in\mathcal{R}_t$. There are $|\mathcal{R}_t|$ choices for $(a_1,\cs,a_k)$, then there are $\<=2^k$ choices for which coordinates they differ, then there are $\<=\lf n/10\rf+1$ choices for each $x_i$ which is different from $a_i$. So the total number of such $(x_1,\cs,x_k)$'s is not more than
	$$|\mathcal{R}_t|\c2^k\c(\lf n/10\rf+1)^{\lF\f{9k}{10}\rF-1}.$$ \par
	Thus over all tuples $(x_1,\cs,x_k)\in\{0,1,\cs,\lf n/10\rf\}^k$, $\phi_{t+1}(x_1,\cs,x_k)-\phi_t(x_1,\cs,x_k)$ is always $\>=0$, and is $\>=\lF\f{9k}{10}\rF$ for all but at most $|\mathcal{R}_t|\c2^k\c(\lf n/10\rf+1)^{\lF\f{9k}{10}\rF-1}$ tuples. So
	\begin{align*}
		\Phi_{t+1}-\Phi_t&\>=\lF\f{9k}{10}\rF\c\(\(\lf n/10\rf+1\)^k-|\mathcal{R}_t|\c2^k\c\(\lf n/10\rf+1\)^{\lF\f{9k}{10}\rF-1}\)\\
		&\>=\lF\f{9k}{10}\rF\c\(\(\lf n/10\rf+1\)^k-w\c2^k\c\(\lf n/10\rf+1\)^{\lF\f{9k}{10}\rF-1}\).
	\end{align*}
\end{Proof}

\begin{lemma}\label{lem.k-par.final}
	\spreadwords{Let $P$ be a length-$n$ ROBP over alphabet $\{0,1\}^k$ that computes }\\$\approxcountp k[n,n/3]$, then
	$$\Phi_n\<=(\lf n/10\rf+1)^k\c\f{2kn}3.$$
\end{lemma}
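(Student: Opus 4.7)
The plan is to prove this by a straightforward pointwise bound on $\phi_n$, then summing over the domain. The key observation is that $\Delta = n/3$ here, so Proposition~\ref{prop.rect-k-par}(3) gives us the tight information we need: every rectangle in $\mathcal{R}_n$ has side lengths at most $2\Delta = 2n/3$ in each of the $k$ coordinates.

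First I would bound $\phi_n(x_1,\ldots,x_k)$ pointwise. Fix any $(x_1,\ldots,x_k) \in \{0,1,\ldots,\lfloor n/10 \rfloor\}^k$, and let $R = [a_1,b_1] \times \cdots \times [a_k,b_k] \in \mathcal{R}_n$ be a rectangle containing $(x_1,\ldots,x_k)$ that achieves the maximum in the definition of $\phi_n$. Since $(x_1,\ldots,x_k) \in R$, we have $a_j \leq x_j$ for every $j \in [k]$, and by Proposition~\ref{prop.rect-k-par}(3) we have $b_j - a_j \leq 2\Delta = 2n/3$. Combining these gives $b_j - x_j \leq b_j - a_j \leq 2n/3$ for each $j$, and hence
\[
\phi_n(x_1,\ldots,x_k) = \sum_{j=1}^k (b_j - x_j) \leq k \cdot \frac{2n}{3} = \frac{2kn}{3}.
\]

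Next I would sum this uniform upper bound over the domain. Since the domain $\{0,1,\ldots,\lfloor n/10 \rfloor\}^k$ has exactly $(\lfloor n/10 \rfloor + 1)^k$ elements, the definition of $\Phi_n$ immediately yields
\[
\Phi_n = \sum_{(x_1,\ldots,x_k) \in \{0,1,\ldots,\lfloor n/10 \rfloor\}^k} \phi_n(x_1,\ldots,x_k) \leq (\lfloor n/10 \rfloor + 1)^k \cdot \frac{2kn}{3},
\]
which is the claimed inequality. There is no real obstacle here; the lemma is purely a consequence of the side-length constraint on terminal rectangles and a trivial count of lattice points. The work was already done in setting up the right domain (the box $\{0,1,\ldots,\lfloor n/10 \rfloor\}^k$, rather than the full simplex $T_{n,k}$ used in Section~\ref{sec.k-counter-lb}), so that the number of summands remains of order $n^k$ rather than $n^k/k!$, which is what ultimately lets the comparison with Lemma~\ref{lem.k-par.growth} deliver the $n^{\Omega(k)}$ width lower bound.
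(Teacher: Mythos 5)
Your proof is correct and is essentially identical to the paper's: both bound $\phi_n$ pointwise by $2kn/3$ using Proposition~\ref{prop.rect-k-par}(3) together with $a_j\<=x_j$, and then multiply by the $(\lf n/10\rf+1)^k$ points of the box. No meaningful differences.
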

\begin{Proof}
	Note that for any $R=[a_1,b_1]\times\cs\times[a_k,b_k]\in\mathcal{R}_n$ and $(x_1,\cs,x_k)\in R\cap\{0,1,\cs,\lf n/10\rf\}^k$, we have
	$$b_1+\cs+b_k-x_1-\cs-x_k\<=\f{2kn}{3},$$
	since by \Cref{prop.rect-k-par}, for any $j\in[k]$ we have $b_j-a_j\<=2n/3$. So
	$$\phi_n(x_1,\cs,x_k)\<=\f{2kn}3$$
	for any $(x_1,\cs,x_k)\in\{0,1,\cs,\lf n/10\rf\}^k$. Therefore
	$$\Phi_n=\sum_{(x_1,\cs,x_k)\in\{0,1,\cs,\lf n/10\rf\}^k}\phi_n(x_1,\cs,x_k)\<=(\lf n/10\rf+1)^k\c\f{2kn}{3}.$$
\end{Proof}

\begin{proof}[Proof of \Cref{thm.k-par-lb}]~\par
	By \Cref{lem.k-par.growth} we have 
	\begin{align*}
		\Phi_n&\>=\lF\f{9n}{10}\rF\c\lF\f{9k}{10}\rF\c\(\(\lf n/10\rf+1\)^k-w\c2^k\c\(\lf n/10\rf+1\)^{\lF\f{9k}{10}\rF-1}\)\\
		&\>=\f{81kn}{100}\c\(\(\lf n/10\rf+1\)^k-w\c2^k\c\(\lf n/10\rf+1\)^{\lF\f{9k}{10}\rF-1}\)
	\end{align*}\par
	Combine with \Cref{lem.k-par.final} we get
	$$(\lf n/10\rf+1)^k\c\f{2kn}{3}\>=\f{81kn}{100}\c\(\(\lf n/10\rf+1\)^k-w\c2^k\c\(\lf n/10\rf+1\)^{\lF\f{9k}{10}\rF-1}\).$$
	Thus 
	$$w\>=\f{43}{243}\c\f{\(\lf n/10\rf+1\)^{\lf\f{k}{10}\rf+1}}{2^k}\>=\Omega(n)^{\Omega(k)}.$$
	So $w\>=n^{\Omega(k)}$.\footnote{Note that to compute $\approxcountp k[n,n/3]$, an ROBP needs $\>=2^k$ different outputs, so $w\>=2^k$. If $w\>=(n/C_1)^{k/C_2}$ for constants $C_1,C_2>0$, then $w\>=\((n/C_1)^{k/C_2}\)^{C_2/(C_1+C_2)}\c(2^k)^{C_1/(C_1+C_2)}\>=\(n\c 2^{C_1}/C_1\)^{k/(C_1+C_2)}\>=n^{k/(C_1+C_2)}$.}
\end{proof}

\spreadwords{Note that our lower bound $w\>=n^{\Omega(k)}$ is already tight in the sense of $\Theta(\log w)$, or}\\$\Theta(\text{number of bits})$ in the space usage. However we still want to ask if we can get a tighter bound in the sense of $\Theta(w)$ for a fixed constant $k$. Thus we propose the following open problem:

\begin{openproblem}\label{open.k-par}
	Prove or disprove: for $n\gg k\>=1$, computing $\approxcountp k[n,n/3]$ requires $\Omega(n)^k$ width.
\end{openproblem}

\Cref{open.k-par} cannot be proved directly via our proof of \Cref{thm.k-par-lb}.

\section{Non-trivial Algorithms for Approximate Counting}\label{sec.algo}

In this section, we present some non-trivial algorithms for approximate counting. Surprisingly, they almost match our lower bounds in some regimes.

\subsection{Small Width Regime of \tops{$\{0,1\}$}{\{0,1\}}-Approximate Counting}

\begin{theorem}\label{thm.algo.small-w}
	For any integers $n\>=1$ and $3\<=w\<=n/10$, there exists a length-$n$ width-$w$ ROBP $P$ over alphabet $\{0,1\}$ such that $P$ computes $\approxcount\[n,\Delta\]$ for some $\Delta\<=n/2-\Omega(\sqrt{nw})$.
\end{theorem}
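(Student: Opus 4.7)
The plan is to construct a length-$n$ width-$w$ ROBP $P$ over alphabet $\{0,1\}$ whose rectangle labeling (\Cref{def.rect-k-counter} specialized to $k=2$) has every interval at the final layer of length at most $n - 2c\sqrt{nw}$, for an absolute constant $c > 0$. Outputting the midpoint of each terminal vertex's interval then immediately yields $\Delta \leq n/2 - c\sqrt{nw}$.

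To build $P$, I will first design a family $\{\mathcal{R}_t\}_{0 \leq t \leq n}$ of interval collections with $|\mathcal{R}_t| \leq w$, $\mathcal{R}_0 = \{[0,0]\}$, and the closure property of \Cref{prop.rect-k-counter}: for every $I \in \mathcal{R}_t$, each of $I$ and $I+1$ is contained in some interval of $\mathcal{R}_{t+1}$. Given such a family, the ROBP is obtained by creating one vertex per interval at each layer, with edges induced by the covering relations. Setting $L := \lfloor c\sqrt{nw}\rfloor$, the intended structure of $\mathcal{R}_t$ is: a single ``bulk'' interval $[L, t-L]$ (present once $t \geq 2L$), plus roughly $(w-1)/2$ ``low-fine'' intervals tiling a neighborhood of $0$ and roughly $(w-1)/2$ ``high-fine'' intervals tiling a neighborhood of $t$. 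Each fine interval has size $\Theta(\sqrt{n/w})$, which is $o(\sqrt{nw})$ in the regime $w \leq n/10$, so the bulk dominates the error; at $t = n$ the bulk has length $n - 2L = n - 2c\sqrt{nw}$, which yields the claimed bound.

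Transitions respect the closure condition: the bulk absorbs either bit, since both $[L, t-L]$ and $[L+1, t-L+1]$ are contained in $[L, t+1-L]$; a low-fine interval stays in its slot on a $0$-bit (the count is unchanged) and either advances to the next slot or is absorbed into the bulk on a $1$-bit; high-fine intervals are symmetric (they stay in their slot on a $1$-bit but may fall into the bulk on a $0$-bit, because the high threshold $t - L$ shifts upward with $t$). Starting from $\mathcal{R}_0 = \{[0,0]\}$, the collections $\mathcal{R}_t$ grow layer by layer until they stabilize to the fine-plus-bulk structure around layer $t = 2L$, and thereafter the bulk widens linearly with $t$ while the fine intervals remain small.

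The main obstacle is maintaining the closure property for the fine intervals without letting their sizes grow linearly in $t$: a naive equal-sized tiling of the low range fails because a single $+1$ shift straddles two slots. I plan to address this via a slightly overlapping tiling, where each low-fine slot overlaps its neighbor by a small constant, so that the $+1$ shift of a fine slot lands in the adjacent slot without forcing the latter to grow. Balancing the overlap, the exact fine-interval size $\Theta(\sqrt{n/w})$, and the threshold $L$ so that the budget of $w$ intervals per layer is respected at every layer is the key calculation, and I expect it to yield the stated bound for an absolute constant $c > 0$.
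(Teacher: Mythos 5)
Your framing---bound $\Delta$ by half the longest terminal-layer interval and output each midpoint---is correct, but the ROBP you describe is never constructed, and the structure you aim for is in fact unattainable. The closure property (\Cref{prop.rect-k-counter}, case $k=2$) requires that for each interval $[a,b]\in\mathcal{R}_t$, the shifted interval $[a+1,b+1]$ be contained in a \emph{single} interval of $\mathcal{R}_{t+1}$. If every low-fine interval has size at most $s$, the only interval of size $\le s$ that can contain $[a+1,b+1]$ when $b-a=s$ is $[a+1,b+1]$ itself, so it must appear in $\mathcal{R}_{t+1}$; a ``slightly overlapping tiling'' with constant overlap does not help, because containment (not mere overlap) is what is needed, and containment forces the next slot to begin at exactly $a+1$. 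Iterating, once some $[a_0,a_0+s]$ with $a_0+s<L$ appears, the shift-by-one chain $[a_0,a_0+s],[a_0+1,a_0+s+1],\dots,[L-1,L+s-1]$ is forced to coexist in $\mathcal{R}_t$ for all later $t$ (each $[j,s+j]$ with $j<L$ cannot fit inside the bulk $[L,t-L]$, so it persists under a $0$-bit and spawns $[j+1,s+j+1]$ under a $1$-bit). That is $\Theta(L)=\Theta(\sqrt{nw})$ distinct low-fine intervals, which vastly exceeds the width budget $w$ once $w\ll n$. So the ``one bulk interval plus $\Theta(w)$ fine intervals of size $\Theta(\sqrt{n/w})$'' picture cannot be maintained; the obstacle you name is fatal, not a balancing calculation.

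The paper's proof takes an entirely different route that never maintains small intervals at all. Partition $[n]$ into $l=\lfloor\sqrt{n/w}\rfloor$ blocks of length $\approx\sqrt{nw}$ and build the natural width-$w$ ROBP computing the single Boolean (Tribe-like) function $f=\bigwedge_{j=1}^{l}[\text{block $j$ has $\ge w-2$ ones}]$: $w-1$ states track the current block's count capped at $w-2$, plus one failure state. If $f=1$ then every block contributes $\ge w-2$ ones, so the total number of ones is at least $l(w-2)=\Omega(\sqrt{nw})$; if $f=0$ then some block has $\le w-3$ ones and hence $\ge\lfloor n/l\rfloor-w+3=\Omega(\sqrt{nw})$ zeros. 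Outputting $n/2+\Theta(\sqrt{nw})$ when $f=1$ and $n/2-\Theta(\sqrt{nw})$ when $f=0$ gives $\Delta\le n/2-\Omega(\sqrt{nw})$. The terminal intervals this ROBP induces all have length $n-\Theta(\sqrt{nw})$---there is no small-versus-bulk dichotomy to maintain, which is exactly what makes the closure constraints easy to satisfy.
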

\begin{Proof}
	Divide $[1,n]$ into $l:=\lf\sqrt{n/w}\rf$ intervals $[p_0+1,p_1],\cs,[p_{l-1}+1,p_l]$, here $0=p_0<p_1<\cs<p_l=n$, and $p_j-p_{j-1}\in\{\lf n/l\rf,\lF n/l\rF\}$ for all $1\<=j\<=l$. Consider the formula
	$$f(x_1,\cs,x_n):=\bigwedge_{j=1}^l\(\#\L i\in[p_{j-1}+1,p_j]\colon x_i=1\R\>=w-2\).$$\par
	First we note that there exists a length-$n$ width-$w$ ROBP $P$ that computes $f$, i.e., on input $(x_1,\cs,x_n)\in\{0,1\}^n$, $P$ outputs $f(x_1,\cs,x_n)$. Let $V_t:=\L u^{(t)},v_0^{(t)},\cs,v_{w-2}^{(t)}\R$ for $1\<=t\<=n$. For $t$ such that $t\in[p_{j_0-1}+1,p_{j_0}]$, we wish that if an input $(x_1,\cs,x_n)$ reaches $u^{(t)}$, then some clause $(\#\L i\in[p_{j-1}+1,p_j]\colon x_i=1\R\>=w-2)$ (for some $j<j_0$) is false; otherwise all these previous clauses are true, and if $(x_1,\cs,x_n)$ reaches $v_0^{(t)},\cs,v_{w-2}^{(t)}$ then $(x_{p_{j_0-1}+1},x_{p_{j_0-1}+2},\cs,x_t)$ contains $=0,=1,\cs,=w-3,\>=w-2$ many $1$'s, respectively. This can be implemented since which vertex should $(x_1,\cs,x_n)$ reach in the $t$-th layer can be uniquely determined by $x_t$ and which vertex it reaches in the $(t-1)$-th layer, and finally let $P$ output $1$ iff $(x_1,\cs,x_n)$ reaches $v_{w-2}^{(n)}$.\par
	On the other hand, if $f(x_1,\cs,x_n)=1$, then each segment $x_{p_{j-1}+1},x_{p_{j-1}+2},\cs,x_{p_j}$ contains $\>=w-2$ many $1$'s, thus $x_1,\cs,x_n$ contains $\>=\lf\sqrt{n/w}\rf\c(w-2)\>=\sqrt{nw}/10$ many $1$'s. If $f(x_1,\cs,x_n)=0$, then some segment $x_{p_{j-1}+1},x_{p_{j-1}+2},\cs,x_{p_j}$ contains $\<=w-3$ many $1$'s, and therefore contains $\>=\lf n/l\rf-w+3=\lf n/\lf\sqrt{n/w}\rf\rf-w+3\>=\sqrt{nw}/10$ many $0$'s. So the ROBP $P$ that computes $f$ can also compute $\approxcount\[n,n/2-\sqrt{nw}/20\]$, by outputting $(n/2-\sqrt{nw}/20)$ or $(n/2+\sqrt{nw}/20)$ if the output of $f$ is $0$ or $1$, respectively.
\end{Proof}

We remark that in the case $w=3$, the $f$ we compute is just the $\msf{Tribe}$ function.\par
Combine with \Cref{cor.k-counter-lb.small-w} we get:

\begin{corollary}\label{cor.tight-small-w}
	For any integers $n\>=1$ and $3\<=w\<=n/10$, let $\Delta(n,w)$ be the minimal value of $\Delta$ such that there exists a length-$n$ width-$w$ ROBP $P$ such that $P$ computes $\approxcount[n,\Delta]$. Then $\Delta(n,w)\in n/2-\Theta(\sqrt{nw})$.
\end{corollary}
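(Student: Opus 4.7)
The plan is to establish the claim as a two-sided sandwich by combining the algorithm of \Cref{thm.algo.small-w} (for the upper bound on $\Delta(n,w)$) with the lower bound of \Cref{cor.k-counter-lb.small-w} specialized to $k=2$ (for the matching lower bound). Both directions are already essentially in hand; the work is just to line them up and verify that the constants hide inside a $\Theta(\sqrt{nw})$.

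First, for the upper bound direction, I would invoke \Cref{thm.algo.small-w} directly: it produces, for every $3 \leq w \leq n/10$, a length-$n$ width-$w$ ROBP that computes $\approxcount[n,\Delta]$ for some $\Delta \leq n/2 - c_1 \sqrt{nw}$ (with $c_1 = 1/20$ from the stated proof). By definition of $\Delta(n,w)$ as the minimum achievable error, this immediately gives $\Delta(n,w) \leq n/2 - \Omega(\sqrt{nw})$.

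Second, for the lower bound direction, I would take any width-$w$ ROBP that computes $\approxcount[n,\Delta] = \approxcountc{2}[n,\Delta]$ and apply \Cref{cor.k-counter-lb.small-w} with $k=2$. That corollary yields
\[
\Delta \;\geq\; \frac{n}{2(k-1)} - \frac{\sqrt[k]{k!}}{2(k-1)}\cdot \sqrt[k]{nw} \;=\; \frac{n}{2} - \frac{\sqrt{2}}{2}\sqrt{nw},
\]
so that every achievable $\Delta$ satisfies $\Delta \geq n/2 - O(\sqrt{nw})$, which means $\Delta(n,w) \geq n/2 - O(\sqrt{nw})$. One small thing I would double-check is that the hypothesis $\Delta \leq n/(2(k-1)) = n/2$ of \Cref{cor.k-counter-lb.small-w} is satisfied, but this is automatic since an approximate counter with error $> n/2$ is trivial and we may assume $\Delta \leq n/2$ without loss of generality.

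Combining the two bounds gives $\Delta(n,w) = n/2 - \Theta(\sqrt{nw})$ in the regime $3 \leq w \leq n/10$, which is exactly the corollary. There is no real obstacle here: the entire content of the statement has been set up by \Cref{thm.algo.small-w} and \Cref{cor.k-counter-lb.small-w}, and the only thing to watch is that the algorithmic construction is valid throughout the stated range of $w$ (which the hypothesis $w \leq n/10$ of \Cref{thm.algo.small-w} already ensures) so that the upper and lower bounds are indeed matched on the same regime.
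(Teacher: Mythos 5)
Your proposal is correct and matches the paper's own argument: the paper obtains \Cref{cor.tight-small-w} precisely by combining the algorithmic upper bound of \Cref{thm.algo.small-w} with the lower bound of \Cref{cor.k-counter-lb.small-w} instantiated at $k=2$ (using that $\approxcount$ is $\approxcountc 2$). Your extra check that the hypothesis $\Delta\<=n/2$ holds for the minimal achievable $\Delta$ is a valid and harmless detail.
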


\subsection{Small Error Regime of \tops{$k$}{k}-counter Approximate Counting}

\begin{theorem}\label{thm.algo.small-err}
	For any integers $k\>=2$, $n\>=10k$ and real number $10\<=\Delta\<=n/10$, there exists a length-$n$ width-$w$ ROBP $P$ over alphabet $[k]$ such that $P$ computes $\approxcount\[n,\Delta\]$ for some $w\<=\C{n+k-1-\Omega(\sqrt{n\Delta})}{k-1}$.
\end{theorem}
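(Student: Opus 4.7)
The plan is to beat the trivial $\binom{n-\Theta(\Delta)+k-1}{k-1}$ algorithm by extending the exactly-counted prefix from length $n-\Theta(\Delta)$ up to length $n-\Theta(\sqrt{n\Delta})$, and handling the remaining $\Theta(\sqrt{n\Delta})$ positions with a block-saturation gadget in the spirit of \Cref{thm.algo.small-w}. Concretely, I would set $m:=c\sqrt{n\Delta}$ for a sufficiently small constant $c>0$, split the input into a prefix $P_1$ of length $n-m$ and a suffix $P_2$ of length $m$, and on $P_1$ track the exact count vector $(c_1,\ldots,c_{k-1})$, which occupies $\binom{n-m+k-1}{k-1}=\binom{n+k-1-\Theta(\sqrt{n\Delta})}{k-1}$ states at layer $n-m$, already matching the claimed width.

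On $P_2$ I would run a per-symbol Tribes-style gadget in parallel: partition $P_2$ into $l\approx\sqrt{m/T}$ equal sub-blocks of length $\Theta(\sqrt{mT})$, with threshold $T=\Theta(\Delta)$, and for each $j\in[k]$ add $T$ to $c_j$ at every sub-block boundary where symbol $j$ saturated (i.e., occurred at least $T$ times in that sub-block), and $0$ otherwise. Mirroring \Cref{thm.algo.small-w}'s per-symbol analysis: for each $j$, either every sub-block saturates for $j$, forcing the count of $j$ in $P_2$ to be at least $lT$, or some sub-block has fewer than $T$ copies of $j$ and hence at least $m/l-T+1$ non-$j$ symbols, forcing the count of $j$ in $P_2$ to be at most $m-(m/l-T+1)$. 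The two cases differ by $\Theta(\sqrt{mT})=\Theta(\sqrt{n\Delta})$, which is $\le 2\Delta$ for $c$ small, so outputting the midpoint of the case chosen by the saturation pattern gives per-symbol error at most $\Delta$.

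The main obstacle is keeping the total width in $P_2$ inside $\binom{n-m+k-1}{k-1}$: the naive tensor of (Phase 1 count vector) $\times$ (per-symbol in-block counters) would multiply the Phase 1 state count by roughly $T^k$. My plan, imported from the Tribes construction of \Cref{thm.algo.small-w}, is to avoid ever explicitly storing the in-block counters by letting the ROBP simply increment $c_j$ in place on each read of symbol $j$ inside a sub-block (so the in-block counter is encoded as the difference between the current coordinate and its block-start value); at each sub-block boundary a single deterministic transition, indexed by the layer, checks for each $j$ whether $c_j$ grew by exactly $T$ since the block's start and either keeps the new coordinate (saturated case) or routes it back down to the block-start value (unsaturated case, contributing $0$). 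The block-start value is recoverable from the current state because the layer determines which block-boundary transitions have been applied so far, and all boundary transitions are deterministic functions of the state.

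Verifying that this rollback ROBP uses only states on the lattice $\{(c_1,\ldots,c_{k-1}):\sum_jc_j\le n-m+O(\sqrt{n\Delta})\}$, whose size is $\binom{n+k-1-\Theta(\sqrt{n\Delta})}{k-1}$ for suitable constants, and that every transition is well-defined (including the boundary rollbacks), is the main technical step that I expect to be the obstacle; once it is in place, combining it with the midpoint-estimate analysis from the previous paragraph establishes \Cref{thm.algo.small-err}, and together with \Cref{cor.k-counter-lb.small-err} yields the tight characterization \Cref{cor.tight-small-err}.
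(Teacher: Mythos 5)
Your Phase 1 (exact counting on a prefix of length $n-\Theta(\sqrt{n\Delta})$) matches the paper's starting point, but both halves of your Phase 2 break down. First, the width argument: the ``rollback'' trick is ill-defined because the block-start state is \emph{not} recoverable from the current layer and state. Inside a sub-block your transitions are input-dependent increments, so two inputs that entered the sub-block at different states (say $(4,3)$ and $(5,3)$ for $k=3$) can collide at the same state one step later (one reads symbol $1$, the other reads symbol $3$); at the boundary the ROBP then has no way to know which value to roll back to, nor whether the block saturated. Recovering this information forces the in-block count vector into the state, which is exactly the $\approx T^{k-1}$ blow-up you were trying to avoid. Second, the error analysis: what bounds the error of a midpoint estimate is the length of the uncertainty interval \emph{within} each case, not the gap between the two cases' thresholds. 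Knowing only the saturation pattern, the count of $j$ in $P_2$ is pinned down only to within $\approx m-\sqrt{mT}=\Theta(m)=\Theta(\sqrt{n\Delta})$, which exceeds $\Delta$ whenever $n/\Delta$ exceeds a constant depending on $c$ (e.g.\ $\Delta=10$ with $n$ large). (Also $\sqrt{mT}=\Theta(n^{1/4}\Delta^{3/4})$, not $\Theta(\sqrt{n\Delta})$.) The Tribes-style gadget is intrinsically a large-error device; it cannot deliver additive error $\Delta\ll n$ on a window of length $m\gg\Delta$.

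The paper's Phase 2 is different and simpler: after exactly counting the prefix, it deterministically \emph{rescales} the count vector down by a factor $\tfrac{l-1}{l}$ with $l\approx\sqrt{n/\Delta}$, rounding so that the coordinates sum to $\lfloor\tfrac{l-1}{l}(n-m)\rfloor$. This lowers the sum of coordinates by about $\sqrt{n\Delta}$, which frees exactly enough room in the lattice to continue \emph{exact} counting on the suffix of length $m\approx\sqrt{n\Delta}$ without the layer sizes ever exceeding $\binom{n+k-1-\Omega(\sqrt{n\Delta})}{k-1}$. The final output multiplies back by $\tfrac{l}{l-1}$, and the total error is at most $\tfrac{l}{l-1}+\tfrac{m}{l-1}\le\Delta$. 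If you want to salvage your outline, replace the saturation gadget with this rescale-and-keep-counting step.
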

\begin{Proof}
	Let $l:=\left\lceil\sqrt{\f{n}{\Delta-1}}\right\rceil+1$ and $m:=\lf\sqrt{n(\Delta-1)}\rf-1$.\par
	Suppose the input is $(x_1,\cs,x_n)$. Let $P$ works as follows: in the first to the $\(n-m\)$-th layer we count the exact number of $1,2,\cs,k$'s in $x_1,\cs,x_{n-m}$ via a simple dynamic programming. Suppose $x_1,\cs,x_{n-m}$ contains $a_1,\cs,a_k$ many $1,\cs,k$'s respectively. Now we round $(a_1,\cs,a_k)$ to a new tuple $(b_1,\cs,b_k)$ (which only depends on $a_1,\cs,a_k$) such that $b_j\in\L\lf\f{l-1}l\c a_j\rf,\left\lceil\f{l-1}l\c a_j\right\rceil\R$ and $b_1+\cs+b_k=\lf\f{l-1}l\c\(a_1+\cs+a_k\)\rf=\lf\f{l-1}l\c(n-m)\rf$. Such $b_1,\cs,b_k$ always exists since $\sum_{j=1}^k\lf\f{l-1}l\c a_j\rf\<=\lf\f{l-1}l\c\(a_1+\cs+a_k\)\rf\<=\sum_{j=1}^k\left\lceil\f{l-1}l\c a_j\right\rceil$. In the $\(n-m+1\)$-th layer to the last layer we exactly count the value of $b_j$ plus the number of $j$'s in $x_{n-m+1},x_{n-m+2},\cs$ for each $j\in[k]$ via a simple dynamic programming.\par
	Suppose $x_{n-m+1},\cs,x_n$ contains $c_1,\cs,c_k$ many $1,\cs,k$'s respectively. Then at the last layer we know the value of $(b_j+c_j)$ for each $j\in[k]$. We output $\f{l}{l-1}\(b_j+c_j\)$ as an estimate of the number of $j$'s in $x_1,\cs,x_n$. Note that the exact number of $j$'s in $x_1,\cs,x_n$ is $(a_j+c_j)$ and we have
	\begin{align*}
		\la\f{l}{l-1}\(b_j+c_j\)-(a_j+c_j)\ra&\<=\la\f{l}{l-1}\c b_j-a_j\ra+\f{c_j}{l-1}\<=\f{l}{l-1}+\f{m}{l-1}\<=\Delta.
	\end{align*}
	So $P$ correctly computes $\approxcount\[n,\Delta\]$.\par
	Finally we count how much width $P$ needs. In the $t$-th layer ($t\in[n-m]$), $|V_t|$ is the number of tuples $(a_1,\cs,a_k)\in\mathbb{N}^k$ such that $a_1+\cs+a_k=t$. In the $(n-m+t)$-th layer ($t\in[m]$), $|V_t|$ is the number of tuples $(a_1,\cs,a_k)\in\mathbb{N}^k$ such that $a_1+\cs+a_k=\lf\f{l-1}l\c(n-m)\rf+t$. Therefore the width $w$ of $P$ satisfies
	\begin{align*}
		w&\<=\max\L\C{n-m+k-1}{k-1},\C{\lf\f{l-1}l\c(n-m)\rf+m+k-1}{k-1}\R\\
		&\<=\max\L\C{n-m+k-1}{k-1},\C{n-\left\lceil\f{n-m}l\right\rceil+k-1}{k-1}\R\\
		&\<=\C{n+k-1-\Omega(\sqrt{n\Delta})}{k-1}.
	\end{align*}
\end{Proof}

Combine with \Cref{cor.k-counter-lb.small-err} we get:

\begin{corollary}\label{cor.tight-small-err}
	For any integers $k\>=2$, $n\>=10k$ and real number $10\<=\Delta\<=n/(10k^2)$, let $w(k,n,\Delta)$ be the minimal value of $w$ such that there exists a length-$n$ width-$w$ ROBP $P$ such that $P$ computes $\approxcountc k[n,\Delta]$. Then $w(k,n,\Delta)\in\C{n+k-1-\Theta(\sqrt{n\Delta})}{k-1}$.
\end{corollary}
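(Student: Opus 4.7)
The statement is a direct combination of the matching upper and lower bounds established earlier in the excerpt. My plan is to apply Theorem \ref{thm.algo.small-err} and Corollary \ref{cor.k-counter-lb.small-err} and verify that the extra error term $2(k-1)\Delta$ appearing in the lower bound is absorbed by $O(\sqrt{n\Delta})$ under the hypothesis $\Delta \leq n/(10k^2)$.

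First, I would invoke Theorem \ref{thm.algo.small-err} (whose hypotheses $k\geq 2$, $n\geq 10k$, $10\leq \Delta \leq n/10$ are implied, since $n/(10k^2)\leq n/10$ for $k\geq 2$). This gives a length-$n$ ROBP of width at most $\binom{n+k-1-\Omega(\sqrt{n\Delta})}{k-1}$ computing $\approxcountc k[n,\Delta]$, yielding the upper bound $w(k,n,\Delta) \leq \binom{n+k-1-\Omega(\sqrt{n\Delta})}{k-1}$.

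For the lower bound, I would invoke Corollary \ref{cor.k-counter-lb.small-err}, whose hypotheses also apply since $\Delta \leq n/(10k^2) \leq n/(10(k-1))$. This yields
$$ w(k,n,\Delta) \geq \binom{n+k-1-2(k-1)\Delta-O(\sqrt{n\Delta})}{k-1}. $$
The task is now to show that $2(k-1)\Delta = O(\sqrt{n\Delta})$, so that $2(k-1)\Delta$ is subsumed into the $O(\sqrt{n\Delta})$ term. Squaring the desired inequality, this amounts to $4(k-1)^2\Delta \leq O(n)$, i.e., $(k-1)^2\Delta \leq O(n)$. Under the hypothesis $\Delta \leq n/(10k^2)$, we indeed have $(k-1)^2\Delta \leq k^2 \cdot n/(10k^2) = n/10$, so this step is immediate.

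Combining the two bounds gives $w(k,n,\Delta) \in \binom{n+k-1-\Theta(\sqrt{n\Delta})}{k-1}$, as claimed. The only remotely delicate step is the constant-wrangling above, and it goes through cleanly because the hypothesis $\Delta \leq n/(10k^2)$ is precisely tuned to the regime where the additive error term $2(k-1)\Delta$ from the lower bound is no larger than the square-root term. No new ideas beyond the previously proved results are required; this is a packaging corollary.
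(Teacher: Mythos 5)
Your proposal is correct and is exactly the argument the paper intends: the paper states this corollary as an immediate combination of Theorem \ref{thm.algo.small-err} (upper bound) and Corollary \ref{cor.k-counter-lb.small-err} (lower bound), and the only detail to check is the one you verify, namely that the hypothesis $\Delta\<=n/(10k^2)$ makes $2(k-1)\Delta=O(\sqrt{n\Delta})$ so that it is absorbed into the $\Theta(\sqrt{n\Delta})$ term. Your hypothesis-checking and constant-wrangling are all sound.
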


\section{Implications in other Streaming Lower Bounds}\label{sec.imply}

In this section, we prove streaming lower bounds for other problems. The proofs are based on our lower bound for approximate counting.

\subsection{Lower Bound for Heavy Hitters}

For simplicity of the notions, we define the {\it frequency} of an element $u$ in a list $L$ as the number of $u$'s in $L$.

\begin{definition}\label{def.heavyhitters}
	For any integers $n,U,k$ such that $\min\{n,U\}\gg k\>=2$, $\msf{HeavyHitters}[n,U,k]$ is the following problem: given an input string $\b{x}\in[U]^n$, we are required to output a list of $k$ elements $u_1,\cs,u_k\in[U]$ such that: each element in $[U]$ that appears $\>=n/k$ times in $\b{x}$ is contained in $\{u_1,\cs,u_k\}$. Moreover, let $f_i$ be the frequency of $u_i$ in $\b{x}$, we are also required to output estimates $\wtilde{f_1},\cs,\wtilde{f_k}$ of $f_1,\cs,f_k$ such that $f_i-n/k\<=\wtilde{f_i}\<=f_i$.
\end{definition}

\spreadwords{We recall that the Misra-Gries algorithm \cite{MG82} for $\msf{HeavyHitters}[n,U,k]$ uses }\\$O\(k(\log(n/k)+\log(U/k))\)$ bits of space in the streaming model.

\begin{theorem}\label{thm.lb-heavyhitters}
	\spreadwords{For any integers $n,U,k$ such that $\min\{n,U\}\gg k\>=2$, computing }\\$\msf{HeavyHitters}[n,U,k]$ requires $\Omega\(k(\log(n/k)+\log(U/k))\)$ bits of space in the streaming model.
\end{theorem}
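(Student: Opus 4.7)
The plan is to prove the two additive parts of the bound separately. The $\Omega(k\log(U/k))$ part was already sketched in the introduction: restrict attention to inputs consisting of $k$ distinct elements each appearing exactly $n/k$ times (assume $k\mid n$). Any deterministic algorithm must produce $\>=\C{U}{k}$ distinct outputs on these inputs in order to report the correct $k$-set, yielding a space lower bound of $\log\C{U}{k}=\Omega(k\log(U/k))$. The new ingredient is the $\Omega(k\log(n/k))$ term, which I would obtain by a streaming reduction from $k'$-counter approximate counting, for a suitable $k'=\Theta(k)$, together with \Cref{cor.k-counter-lb.standard}.

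For the reduction, fix $k':=\lf(k+3)/4\rf$ and $m:=\lF n/k\rF$, and set $n':=n-k'm$. Given an input $\b{y}\in[k']^{n'}$ to $\approxcountc{k'}\[n',\f{n'}{3(k'-1)}\]$, I construct a length-$n$ stream over $[U]$ by first feeding $m$ copies of each of $k'$ distinct ``marker'' elements $u_1,\cs,u_{k'}\in[U]$ and then feeding $\b{y}$ under the embedding $j\mapsto u_j$ (this uses only $k'\<=U$, which holds in the regime of interest). The frequency of $u_j$ in the combined stream is at least $m\>=n/k$, so every marker must appear in the HeavyHitters output list, and the reported estimate $\wtilde{f_j}$ satisfies $|\wtilde{f_j}-(m+\#\{i\colon y_i=j\})|\<=n/k$. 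Subtracting $m$ from $\wtilde{f_j}$ yields an estimate of the frequency of $j$ in $\b{y}$ with additive error $\<=n/k$, and the choice of $k'$ is made precisely to ensure $n/k\<=n'/(3(k'-1))$, so these estimates solve the approximate-counting instance.

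The reduction is entirely within the streaming model: the marker prefix can be simulated inside the initialization of the derived algorithm, so its space usage equals that of the HeavyHitters algorithm. By \Cref{cor.k-counter-lb.standard}, the approximate-counting problem requires $\Omega(k'\log(n'/k'))$ bits. Since $k'=\Theta(k)$ and $n'=\Theta(n)$ (hence $n'/k'=\Theta(n/k)$) in the regime $n\gg k$, this is $\Omega(k\log(n/k))$. Combining with the counting bound gives the claimed $\Omega(k(\log(n/k)+\log(U/k)))$. I expect the main subtlety to be parameter tuning: verifying that the floors and ceilings in $k'$ and $m$ still give $n/k\<=n'/(3(k'-1))$ and $n'\>=3k'$, and separately handling the edge case $k\<=4$ (where $k'=1$, so one instead takes $k'=2$ to recover an $\Omega(\log n)$ bound, which suffices for constant $k$).
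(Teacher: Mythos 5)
Your $\Omega(k\log(U/k))$ argument is the same counting argument the paper uses, and is fine. Your $\Omega(k\log(n/k))$ argument is correct in spirit but takes a genuinely different and somewhat more laborious route than the paper, and the specific parameter choice you give can fail.

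The paper's approach avoids any stream modification: it observes that an algorithm for $\msf{HeavyHitters}[n,U,k]$ \emph{already} computes $\approxcountc U[n,n/(2k)]$ on the very same input, because the output of ApproxCount is a deterministic postprocessing of the HeavyHitters output. Concretely, if $u$ appears in the output list with estimate $\wtilde{f_i}$ (which satisfies $f_i - n/k \<= \wtilde{f_i} \<= f_i$), output $\wtilde{f_i}+n/(2k)$; if $u$ does \emph{not} appear in the output list, then by the HeavyHitters guarantee $u$ has frequency $< n/k$, so $n/(2k)$ is already an estimate with error $\<= n/(2k)$. Restricting the alphabet to $[k'']$ for $k''=\lF 2k/3\rF$ (so that $n/(2k)\<= n/(3(k''-1))$) and invoking \Cref{cor.k-counter-lb.standard} finishes. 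Your marker-prefix construction is unnecessary precisely because of the second case above: you do not need to \emph{force} a symbol into the output list to recover its frequency, since for symbols outside the list the trivial estimate $n/(2k)$ is already good enough.

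Regarding the parameter tuning you flagged: it is a real issue with the choice $k'=\lf(k+3)/4\rf$. When $k\equiv1\pmod 4$ and $k\nmid n$, the constraint $n/k\<= n'/(3(k'-1))$ is violated by the rounding loss $k'\c(\lF n/k\rF - n/k)$; the constraint is \emph{exactly} tight when $k\mid n$ and $4k'=k+3$, so any ceiling slack breaks it. You can fix this by choosing $k'$ slightly smaller, e.g.\ $k'=\lf n(k+3)/(4n+k)\rf$, which is still $\Theta(k)$ in the regime $n\gg k$; this makes your reduction go through. But since the marker trick was not needed in the first place, the cleaner path is the paper's direct observation, which sidesteps all of this bookkeeping.
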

\begin{Proof}
	First we prove that $\msf{HeavyHitters}[n,U,k]$ requires $\Omega\(k\log(U/k)\)$ bits of space.\par
	Consider the following type of inputs: for any subset $I\subseteq[U]$ such that $|I|=\lf n/\lF n/k\rF\rf$, let $\b{y}(I)$ be an input string in $[U]^n$ such that each $u\in I$ appears $\>=\lF n/k\rF$ times. Note that when we read $\b{y}(I)$, the list $\{u_1,\cs,u_k\}$ in our output should contain $I$. Also note that since there are $\binom{U}{\lf n/\lF n/k\rF\rf}$ many subset $I$'s, and each list $\{u_1,\cs,u_k\}$ can only be a valid output for at most $\binom{k}{\lf n/\lF n/k\rF\rf}$ many $\b{y}(I)$'s. So any algorithm that computes $\msf{HeavyHitters}[n,U,k]$ need to have $\>=\binom{U}{\lf n/\lF n/k\rF\rf}/\binom{k}{\lf n/\lF n/k\rF\rf}\>=(U/k)^{\Omega(k)}$ many possible outputs, and therefore requires $\>=\log\((U/k)^{\Omega(k)}\)\>=\Omega(k\log(U/k))$ bits of space.\par
	Then we prove that $\msf{HeavyHitters}[n,U,k]$ requires $\Omega\(k\log(n/k)\)$ bits of space.\par
	Note that any algorithm for $\msf{HeavyHitters}[n,U,k]$ also computes $\approxcountc U[n,n/(2k)]$ on the same input, and the output of $\approxcount$ only depends on the output of $\msf{HeavyHitters}$: for each $u\in[U]$, if $u$ is some $u_i$ in the output list, then $\wtilde{f_i}+n/(2k)$ is an estimate for the frequency of $u$ with an additive error $\<=n/(2k)$; if $u$ is not in the output list, then the frequency of $u$ is $\<=n/k$, and thus $n/(2k)$ is an estimate for the frequency of $u$ with an additive error $\<=n/(2k)$.\par
	Also note that since $U\gg k$, by computing \spreadwords{$\approxcountc U[n,n/(2k)]$, we can also compute $\approxcountc{\lF2k/3\rF}[n,n/(2k)]$, and by \Cref{cor.k-counter-lb.standard}, computing}\\$\approxcountc{\lF2k/3\rF}[n,n/(2k)]$ requires $\Omega(k\log(n/k))$ bits of space in the streaming model. Therefore we conclude that computing $\msf{HeavyHitters}[n,U,k]$ requires $\Omega\(k\log(n/k)\)$ bits of space.\par
	To summarize, computing $\msf{HeavyHitters}[n,U,k]$ requires 
	$$\Omega\(\max\{k\log(n/k),k\log(U/k)\}\)\>=\Omega\(k(\log(n/k)+\log(U/k))\)$$
	bits of space in the streaming model.
\end{Proof}

We remark that the proof above depends on the estimates $\wtilde{f_1},\cs,\wtilde{f_k}$ in the output. If we are only required to output a list $\{u_1,\cs,u_k\}$, we do not know whether we can prove the streaming lower bound. We leave this as an open problem.

\begin{openproblem}\label{open.heavyhitters}
	Prove or disprove: for any integers $n,U,k$ such that $\min\{n,U\}\gg k\>=2$, computing $\msf{HeavyHitters}[n,U,k]$ (without outputting $\wtilde{f_1},\cs,\wtilde{f_k}$) requires $\Omega\(k\log(n/k)\)$ bits of space in the streaming model.
\end{openproblem}

\subsection{Lower Bound for Quantile Sketch}

We first define the {\it rank} of an element $u$ in a list $L$ as the number of elements in $L$ that are at most $u$.
\def\quantile{\text{\sf{Quantile}}}

\begin{definition}\label{def.quantile}
	For any integers $n,U$ and real number $\eps$ such that $\min\{n,U\}\gg 1/\eps\gg1$, let $\quantile[n,U,\eps]$ be the following problem: given a stream of inputs $(x_1,\cs,x_n)\in[U]^n$. For each $1\<=t\<=n$, when a query $x\in[U]$ arrives after we read $x_1,\cs,x_t$, we are required to output an estimate of the rank of $x$ in $x_1,\cs,x_t$, with an additive error $\<=\eps t$.
\end{definition}

We recall that \cite{GSW24}'s algorithm for $\quantile[n,U,\eps]$ uses $O\(\eps^{-1}(\log(\eps U)+\log(\eps n))\)$ bits of space in the streaming model.

\cite{GSW24} presented the following conditional streaming lower bound for quantile sketch.

\begin{theorem}{\rm (\cite{GSW24})}\label{thm.lb-quantile-cond}
	\spreadwords{Assume for any integers $n,k$ such that $n\gg k\gg 1$, computing}\\$\approxcountc k[n,n/k]$ requires $\Omega(k\log(n/k))$ bits of space in the streaming model. Then for any integers $n,U$ and real number $\eps$ such that $\min\{n,U\}\gg 1/\eps\gg1$, computing $\quantile[n,U,\eps]$ requires $\Omega\(\eps^{-1}(\log(\eps U)+\log(\eps n))\)$ bits of space in the streaming model.
\end{theorem}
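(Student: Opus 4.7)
The plan is to prove the two terms $\Omega(\eps^{-1}\log(\eps n))$ and $\Omega(\eps^{-1}\log(\eps U))$ separately and combine them, exploiting that $\Omega(\max\{A,B\}) = \Omega(A+B)$. The first term will come from a reduction from approximate counting (using the hypothesis of the theorem); the second from a direct combinatorial counting argument in the spirit of the proof of \Cref{thm.lb-heavyhitters}.

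First, for the $\Omega(\eps^{-1}\log(\eps n))$ bound, I would reduce $\approxcountc k[n,n/k]$ to $\quantile[n,U,\eps]$ with $k := \lf 1/(3\eps)\rf$. Since $U \gg 1/\eps \>= k+1$, pick $k+1$ distinct values $y_0<y_1<\cs<y_k$ in $[U]$. Given $\b x \in [k]^n$, feed the stream $y_{x_1},\ldots,y_{x_n}$ to any streaming algorithm $A$ for $\quantile[n,U,\eps]$, and after the entire stream query $A$ for the rank of each $y_j$, obtaining $\tilde r_j$ with $|\tilde r_j - \#\{i : x_i \<= j\}| \<= \eps n$. The difference $\tilde r_j - \tilde r_{j-1}$ then estimates the frequency of $j$ with additive error $\<= 2\eps n \<= n/k$ for every $j \in [k]$. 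Hence $A$ solves $\approxcountc k[n,n/k]$, and by the hypothesis requires $\Omega(k\log(n/k)) = \Omega(\eps^{-1}\log(\eps n))$ bits.

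Second, for the $\Omega(\eps^{-1}\log(\eps U))$ bound, let $m := \lF 1/(3\eps)\rF$, and for each subset $S\subseteq[U]$ with $|S|=m$ form a stream $\b x_S \in [U]^n$ in which each element of $S$ appears exactly $\lf n/m\rf$ times (padded with some fixed symbol in $[U]\setminus S$ so the length equals $n$). After feeding $\b x_S$ into $A$, querying $\tilde r(u)$ for every $u\in[U]$ (and using the convention $\tilde r(0)=0$), the gap $\tilde r(u)-\tilde r(u-1)$ lies within $2\eps n$ of $\lf n/m\rf\cdot[u\in S]$; since $\lf n/m\rf\>=3\eps n$, this gap exceeds $2\eps n$ if and only if $u\in S$. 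Thus $S$ is determined by the post-stream memory state of $A$, so $A$ has at least $\binom{U}{m}\>=(U/m)^m$ distinct post-stream states, which gives $\Omega(m\log(U/m)) = \Omega(\eps^{-1}\log(\eps U))$ bits.

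The main obstacle is a modeling issue rather than a mathematical one: to make both arguments rigorous one must verify that the definition of $\quantile[n,U,\eps]$ allows arbitrarily many rank queries to be issued after the full stream has been read, so that the post-stream memory state alone must encode enough information to answer all $k$ queries in the first reduction and to recover $S$ in the second. Beyond this, it is routine to check that the regime $\min\{n,U\} \gg 1/\eps \gg 1$ guarantees $k \gg 1$, $n/k \gg 1$ (so the hypothesized approximate-counting lower bound applies in the correct regime), and $U/m \gg 1$ (so $\log(U/m) = \Omega(\log(\eps U))$).
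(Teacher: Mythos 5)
There is nothing to compare against inside the paper: \Cref{thm.lb-quantile-cond} is quoted from \cite{GSW24} and is never proved here (the paper only verifies its hypothesis later), so I assess your argument on its own merits. Your two-part plan --- a reduction from $\approxcountc k$ for the $\Omega(\eps^{-1}\log(\eps n))$ term and a direct counting argument for the $\Omega(\eps^{-1}\log(\eps U))$ term --- is the natural one and closely parallels the paper's own proof of \Cref{thm.lb-heavyhitters}. The first half is correct: with $k=\lf 1/(3\eps)\rf$ you get per-frequency error $\<=2\eps n\<= n/k$, the regime $n\gg k\gg1$ required by the hypothesis holds, and $k\log(n/k)=\Theta(\eps^{-1}\log(\eps n))$. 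The modeling worry you flag is harmless under \Cref{def.quantile}: queries may arrive after any prefix, in particular after the whole stream, and since the algorithm is deterministic the post-stream state together with a fixed query schedule determines all answers, which is all that both of your arguments need.

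The second half does not work as literally written, for two quantitative reasons. First, with $m=\lceil 1/(3\eps)\rceil$ the planted frequency is only $\lf n/m\rf\approx 3\eps n$, while each gap $\tilde r(u)-\tilde r(u-1)$ carries error up to $2\eps n$; so for $u\in S$ you are only guaranteed a gap $\>=\eps n$, not $>2\eps n$, and your ``exceeds $2\eps n$ iff $u\in S$'' claim fails. You need the planted frequency to clear $4\eps n$, e.g. take $m=\lf 1/(5\eps)\rf$, which costs only a constant factor. Second, the padding symbol occurs up to $m-1\approx 1/(3\eps)$ times, and in the permitted subregime $1/\eps\ll n\ll 1/\eps^2$ this exceeds $2\eps n$, so the (S-dependent) padding symbol may itself be decoded as a member of $S$ and distinct sets can collide on the same decoded output. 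Fix it either by dispensing with padding (let each element of $S$ occur $\lf n/m\rf$ or $\lceil n/m\rceil$ times so the counts sum to exactly $n$), or by observing that each decoded set is consistent with at most $m+2$ choices of $S$, which still forces at least $\C{U}{m}/(m+2)$ distinct post-stream states and hence $\Omega(m\log(U/m))=\Omega(\eps^{-1}\log(\eps U))$ bits. With these small repairs your argument goes through and matches the intended two-term lower bound.
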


We remark that \cite{GSW24} stated the assumption in \Cref{thm.lb-quantile-cond} as a conjecture. By resolving their conjecture via our lower bounds for approximate counting, we give an unconditional streaming lower bound for quantile sketch.

\begin{theorem}\label{thm.lb-quantile}
	For any integers $n,U$ and real number $\eps$ such that $\min\{n,U\}\gg 1/\eps\gg1$, computing $\quantile[n,U,\eps]$ requires $\Omega\(\eps^{-1}(\log(\eps U)+\log(\eps n))\)$ bits of space in the streaming model.
\end{theorem}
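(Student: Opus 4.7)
The strategy is to invoke the conditional lower bound \Cref{thm.lb-quantile-cond}, whose conclusion is exactly what we want. The only task is to verify its hypothesis: that $\approxcountc k[n,n/k]$ requires $\Omega(k\log(n/k))$ bits of space for $n \gg k \gg 1$. This will follow from our \Cref{intro.thm.main} via a short alphabet-embedding reduction.

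The subtle point is that our main theorem lower-bounds $\approxcountc k[n,n/(3(k-1))]$, whose error tolerance is actually \emph{smaller} than $n/k$ for $k \geq 2$, so a direct application is impossible. To bridge the gap, I would rescale the alphabet. Set $K := \lf k/3 \rf + 1$, so that $K - 1 \leq k/3$ and hence $n/(3(K-1)) \geq n/k$. Given any algorithm $A$ that solves $\approxcountc k[n,n/k]$ in space $S$, we derive an algorithm of the same space for $\approxcountc K[n, n/(3(K-1))]$ as follows: on input $x \in [K]^n$, view $x$ as an element of $[k]^n$ via the inclusion $[K] \hookrightarrow [k]$, run $A$, and output the first $K$ of its $k$ counter estimates. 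These estimates have additive error at most $n/k \leq n/(3(K-1))$, as required.

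By \Cref{intro.thm.main}, any algorithm for $\approxcountc K[n, n/(3(K-1))]$ uses $\Omega(K\log(n/K)) = \Omega(k\log(n/k))$ bits of space (since $K = \Theta(k)$). Hence $S = \Omega(k\log(n/k))$; the hypothesis of \Cref{thm.lb-quantile-cond} is satisfied, and its conclusion immediately delivers the desired lower bound on $\quantile[n,U,\eps]$.

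There is no significant obstacle beyond bookkeeping of constants; the only thing to verify is that the factor of $3$ introduced by the alphabet rescaling, together with the condition $K \geq 2$ (which requires $k \geq 3$, harmless under $k \gg 1$) and $n \geq 3K$ (which holds under $n \gg k$), preserves the asymptotics, which is immediate.
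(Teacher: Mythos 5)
Your proposal is correct and follows essentially the same route as the paper: invoke \Cref{thm.lb-quantile-cond} and verify its hypothesis by embedding a $\Theta(k)$-sized alphabet (the paper uses $\lF k/3\rF$, you use $\lf k/3\rf+1$) into $[k]$ so that the error tolerance $n/k$ fits under the $n/(3(K-1))$ threshold of \Cref{cor.k-counter-lb.standard} (equivalently \Cref{intro.thm.main}). The only difference is cosmetic bookkeeping in the choice of the reduced alphabet size, so nothing further is needed.
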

\begin{Proof}
	By \Cref{thm.lb-quantile-cond}, we only need to prove that: for any integers $n,k$ such that $n\gg k\gg 1$, computing $\approxcountc k[n,n/k]$ requires $\Omega(k\log(n/k))$ bits of space in the streaming model.\par
	\spreadwords{Note that by computing $\approxcountc k[n,n/k]$, we can also compute }\\$\approxcountc{\lF k/3\rF}[n,n/k]$, and by \Cref{cor.k-counter-lb.standard}, computing $\approxcountc{\lF k/3\rF}[n,n/k]$ requires $\Omega(k\log(n/k))$ bits of space. Therefore computing $\approxcountc k[n,n/k]$ requires $\Omega(k\log(n/k))$ bits of space in the streaming model.
\end{Proof}

\section*{Acknowledgement}

The author is thankful to Lijie Chen, Meghal Gupta, Xin Lyu, Jelani Nelson, Mihir Singhal, Avishay Tal, and Hongxun Wu for valuable discussions. The author is also especially thankful to Hongxun Wu for introducing the question of establishing streaming lower bounds for approximate counting and to Lijie Chen, Mihir Singhal, and Hongxun Wu for some useful comments on the early drafts of this paper.

\bibliographystyle{alpha}
\bibliography{ref}

\appendix

\end{document}